\newtheorem{theorem}{Theorem}
\newtheorem{lemma}{Lemma}
\newtheorem{definition}{Definition}
\newtheorem{proposition}{Proposition}
\newtheorem{assumption}{Assumption}
\newtheorem{mechanism}{Mechanism}
\newtheorem{corollary}{Corollary}
\newtheorem{framework}{Framework}
\newtheorem{example}{Example}
\newcommand{\rev}[1]{{\color{black}#1}}
\newcommand{\revh}[1]{{\color{black}#1}}
\newcommand{\revi}[1]{{\color{black}#1}}
\newcommand{\revr}[1]{{\color{black}#1}}
\newcommand{\revk}[1]{{\color{black}#1}}
\newcommand{\revg}[1]{{\color{blue}#1}}
\newcommand{\com}[1]{\textbf{\color{red} (Comment: #1) }}
\newcommand{\comg}[1]{\textbf{\color{blue} (COMMENT: #1)}}
\newcommand{\response}[1]{\textbf{\color{blue} (RESPONSE: #1)}}
\newcommand{\rev}[1]{{#1}}
\newcommand{\revh}[1]{{#1}}
\newcommand{\revi}[1]{{#1}}
\newcommand{\revr}[1]{{#1}}
\newcommand{\revk}[1]{{#1}}
\newcommand{\revg}[1]{{#1}}
\newcommand{\com}[1]{}
\newcommand{\comg}[1]{}
\newcommand{\response}[1]{}
\def\eq{\triangleq}
\def\N{\mathcal{N}}                 
\def\M{\mathcal{M}}
\def\R{\mathcal{R}}                  
\def\T{\mathcal{T}}                  
\def\tseg{\beta}                       
\def\bufsize{B}                         
\def\downloader{n}
\def\receiver{m}
\def\br{r}
\def\brvector{\boldsymbol{r}}
\def\segamount{\kappa}
\def\starttime{t}
\def\endtime{\tau}
\def\Cost{C}
\def\Utility{U}
\def\SW{W}
\def\va{\theta}                          
\def\curbuf{B^{\textsc{cur}}}     
\def\prebr{R^{\textsc{pre}}}       
\def\Valueq{V^{\textsc{q}}}       
\def\valueq{v^{\textsc{q}}}
\def\Valueb{V^{\textsc{b}}}        
\def\Lossdeg{L^{\textsc{qd}}}	 
\def\lossdeg{l^{\textsc{qd}}}	
\def\llossdeg{\tilde{l}^{\textsc{qd}}}	
\def\auctioneer{n}
\def\bidder{m}
\def\objamount{K}
\def\obj{k}
\def\bid{\boldsymbol{b}}           
\def\brmatrix{\boldsymbol{R}}
\def\pr{p}                                   
\def\prvector{{\boldsymbol{p}}}
\def\Win{\Gamma}
\def\Pay{\Pi}
\def\payoff{P}
\def\win{\sigma^{\dag}}
\def\winset{\boldsymbol{\sigma}^{\dag} }
\def\wpay{\pi^{\dag}} 
\def\wpayset{\boldsymbol{\pi}^{\dag}} 
\def\wbr{\br^{\dag}}
\def\wbrset{\boldsymbol{\br}^{\dag}}
\def\wamount{\widetilde{\kappa}}
\def\winsetupdate{\boldsymbol{\sigma}^{\ddag}}
\def\wbrsetupdate{\boldsymbol{r}^{\ddag}}
\def\wbrupdate{\widetilde{\boldsymbol{r}}}
\def\score{\phi}
\def\mscore{S}
\def\mscoreset{\boldsymbol{S}}
\def\mscoreex{\hat{S}}
\def\mscoresetex{\hat{\boldsymbol{S}}}
\def\mscorei{S^{\dag}}
\def\mscoreseti{\boldsymbol{S}^{\dag}}
\begin{document}

\title{Multi-Dimensional Auction Mechanisms for Crowdsourced Mobile Video Streaming}

\author{Ming Tang, Haitian Pang, Shou Wang, Lin Gao, Jianwei Huang, Lifeng Sun
		\IEEEcompsocitemizethanks{
			\IEEEcompsocthanksitem
		
			M. Tang and J. Huang {(co-corresponding author)} are with The Chinese University of Hong Kong, Hong Kong,
			E-mail: tm014@ie.cuhk.edu.hk, jwhuang@ie.cuhk.edu.hk;
			L. Gao is with Harbin Institute of Technology (Shenzhen), China,
			E-mail: gaol@hit.edu.cn;
			H. Pang, S. Wang, and L. Sun {(co-corresponding author)} are
			with Tsinghua University, China, E-mail: pht14@mails.tsinghua.edu.cn, wangshousoso@126.com, sunlf@tsinghua.edu.cn.
			
			Part of this work has been presented at IEEE WiOpt \cite{WiOpt16} and IEEE INFOCOM \cite{INFOCOM17}. This work is supported by the General Research Funds (Project Number CUHK 14206315 and CUHK 14219016) established under the University Grant Committee of the Hong Kong Special Administrative Region, China, and the NSFC (Grant Number 61472204 and 61521002).					
		}}

\maketitle

\maketitle
\begin{abstract}
	Crowdsourced mobile video streaming enables nearby mobile video users to aggregate network resources to improve their video streaming performances. 
	{However, users are often selfish and may not be willing to cooperate without proper incentives.} Designing an incentive mechanism for such a scenario is challenging due to the users' asynchronous downloading behaviors and their private valuations for multi-bitrate encoded videos. 
	In this work, we propose both single-object and multi-object multi-dimensional auction mechanisms, through which 
    	users sell the opportunities for downloading single and multiple video segments with multiple bitrates, respectively. Both auction mechanisms can achieve truthfulness (i.e, truthful private information revelation) and efficiency (i.e., social welfare maximization).
	Simulations with real traces show
	that crowdsourced mobile streaming facilitated by the auction mechanisms outperforms noncooperative
	streaming by $48.6\%$ (on average) in terms of social welfare.	To evaluate the real-world performance, we also construct a demo system for crowdsourced mobile streaming and implement our proposed auction mechanism. 
	{Experiments over the demo show that 
		those users who provide resources to others and those users who receive helps can increase their welfares by $15.5\%$ and $35.4\%$ (on average) via cooperation, respectively.} 
\end{abstract}
 
\begin{IEEEkeywords} 
	Mobile video streaming, mobile crowdsourcing, incentive mechanism, multi-dimensional auction
\end{IEEEkeywords}
\IEEEpeerreviewmaketitle

\section{Introduction}\label{sec:intro}

\subsection{Background and Motivation}\label{subsec:intro-background}

Mobile video traffic accounted for around $55\%$ of global mobile traffic in 2015, and is expected to grow at an annual rate of $62\%$ between 2015 and 2020 
\cite{CiscoReport}. The increasing video demand requires proper resource allocation  methods to achieve desirable user's quality of experience (QoE) in increasingly congested wireless networks with limited radio resources. 
A key challenge to achieve this is that 
different users can have very different QoE requirements (e.g., depending on device screen sizes and user preferences) and channel conditions (e.g., \rev{3G cellular, 4G cellular, or WiFi links}). 
To resolve and exploit the heterogeneity among users and deal with the potential mismatch of video requirements and channel conditions at the individual user level, researchers have recently proposed  a  \emph{crowdsourced mobile streaming} \rev{(CMS) system} \cite{ABR-multi-Lin2016} that enables mobile users to form cooperative groups and share their network resources for more effective mobile video streaming. 

\rev{The CMS system} is very suitable for the adaptive bitrate video streaming (ABR) technology \cite{abr}, 
a widely used video streaming technology in HTTP networks. In ABR, a video is partitioned into multiple video segments, and each video segment is encoded at multiple bitrates. 
{A video user can choose the bitrate of each segment based on his preference and the real-time network condition. 
	Hence, ABR-based video streaming provides a good amount of flexibility for cooperative downloading in \rev{the CMS system}. 
	
	Figure \ref{fig:model} shows an example of \rev{the CMS system} with three users, where each user watches a unique video hosted by the corresponding server. User C does not have a cellular connection to the Internet, so both user A and user B download user C's segments and forward to user C. User A also downloads a segment for user B, as he has a  better downlink channel (4G) than user B (3G). \revk{In the CMS system, the downloading links  from the Internet to users can be either cellular links or WiFi links, and the connections among users can be either WiFi Direct links or Bluetooth. In order to make the presentation easy to follow, we will refer to all the downloading links as ``cellular links" and all the connections among users as ``WiFi Direct links"; these are merely terminology choices that do not limit the applicability of the system. }

	\begin{figure}[t]
		\centering
		\includegraphics[height=3cm]{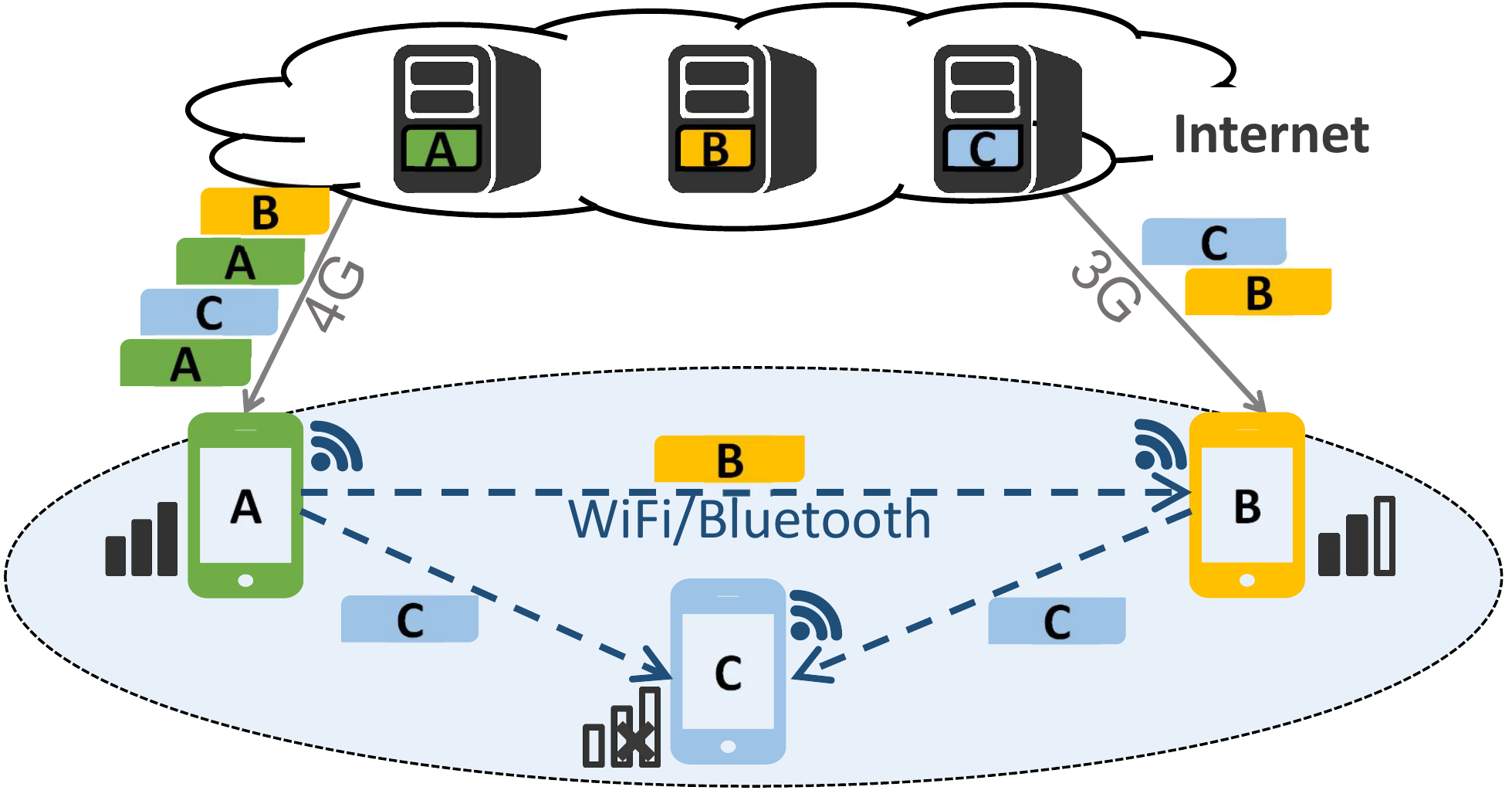}
		{\caption{{Crowdsourced Mobile Streaming.}}\label{fig:model}}
	\end{figure}

	The CMS system is different from the device-to-device (D2D) based \cite{ABR-multi-Keller2012,ABR-multi-Li2015,ABR-multi-Cao14} and peer-to-peer (P2P) based \cite{ABR-P2P-Xu2013, ABR-P2P-Abdallah15,ABR-multi-Klusch14,P2P-incentive} video streaming models, where users \rev{share their \emph{downloaded} video segments with other users through D2D links and the Internet, respectively}. In the CMS system, users share their cellular network resources (for segments downloading), 
	hence it is mainly targeted at the much more common application scenario that different users watch \emph{different} videos}. Different from the bandwidth aggregation (BA) models that {aggregate multiple users' bandwidth to serve one user's streaming need} \cite{ABR-multi-Zhong16,ABR-multi-Seenivasan2014,ABR-multi-Zhang2014}, \rev{the CMS system} aggregates multiple users' bandwidth to satisfy all users' video streaming needs, enhancing the users' QoE through proper network resource allocation.
	
	

	A major challenge for realizing the CMS system is that helping others will increase mobile users' cost, so the mobile users may not be willing to cooperate unless they receive proper incentives. In other words, the success of such a \rev{CMS system} requires a proper \emph{incentive mechanism} that motivates mobile users to crowdsource their network resources for cooperative video segments downloading.

	\subsection{Solution Approach and Contribution}\label{subsec:intro-solution}
	
	In this work, we focus on the \emph{incentive mechanism design} for \rev{the CMS system}.
	Namely, we aim to design such mechanisms that offer enough compensation for each video user to download video segments for others, considering the user's own service request and downloading cost.
	The proposed mechanism needs to consider the following questions for each segment that each user (downloader) downloads:
	\begin{itemize}
		\item \emph{Receiver Selection:}
		Whose segment will the downloader  download?
		
		\item \emph{Bitrate Adaption:}
		What bitrate (quality) will the receiver choose for the segment to be downloaded?
		
		\item \emph{Cost Compensation:}
		How much will the downloader be compensated for his downloading cost by the receiver?
	\end{itemize}
	
	\rev{It is challenging to design an effective incentive mechanism that addresses above questions, because of the users' private valuations for multi-bitrate encoded video segments as well as their asynchronous downloading behaviors}.
	{First, a user's valuation for a segment at a particular bitrate is the user's private information and can vary  over time. The diverse and varying private valuation induces difficulties in evaluating users' contributions in cooperation and determining the proper incentive levels. }
	{Second, video scheduling in ABR is segment based instead of time-slot based, so it is challenging to schedule the downloading cooperation among the users who request and download videos at different times. }
	
	{\rev{Auction is widely used for allocating objects among the users who have private valuations. Hence, we propose  auction-based incentive mechanisms for the CMS  to handle the users' private valuation revelation. 
			To address the asynchronous operations,} we consider decentralized mechanisms: when a user (downloader) is ready to download new segments, he will initiate an auction to decide for whom to download at what bitrate with what price. In other words, {the downloader acts as an auctioneer}, and his nearby users (who request videos) act as bidders, bidding for the segment downloading opportunities.} 
	
	{Classical single-dimensional auction, where a bidder submits a single value indicating his willingness-to-pay, is not applicable in our crowdsourced model.}
	{This is because the video segments are encoded at multiple bitrates in ABR, so a bidder needs to specify multi-dimensional information in the bid, i.e., his intended bitrate and the price he is willing to pay for such a bitrate. This motivates us to consider a multi-dimensional auction  in this work.}
	\begin{figure}
		\center
		\includegraphics[height=2.5cm]{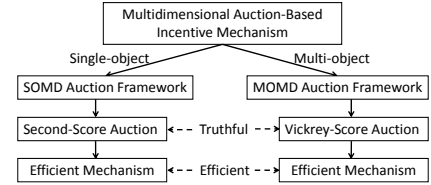}
		\caption{{Theoretical Framework of This Paper.}}\label{fig:logic}
	\end{figure}
	
	As a benchmark, we first propose a \emph{single-object multi-dimensional auction} (SOMD)\cite{Yeon-KooChe1993} based incentive mechanism framework for \rev{the CMS system}, where an auctioneer allocates one segment in one auction. Based on the SOMD framework, we propose a second-score auction-based mechanism that ensures the truthful user valuation revelation in \rev{the CMS system}. Through a proper design of the score function (to be discussed in Section \ref{subsec:single-auction-2ndscore}), we derive the efficient mechanism that maximizes the social welfare. 
	
	However, such a single-object allocation may induce extensive signaling  overhead {because of the frequently initiated auctions
		, which may negatively affect the video streaming performance}.
	This motivates us to consider a \emph{multi-object multi-dimensional auction} that enables auctioneers to allocate multiple segments in one auction. Such a multi-object allocation introduces an additional dimension in the bidding process---the quantity (the number of the segments that a bidder desires), which is preferential dependent\footnote{Dimension $x$ is preferentially dependent of dimension $y$ if the preference of $x$ depends on the preference of $y$ \cite{Olson1995}.} of price. 
	It has been shown in \cite{MartinBichler2000Multi} that designing a multi-dimensional auction with preferential dependent dimensions is extremely difficult, but it turns out to be the problem that we need to solve. 
	In this work, we propose a \emph{multi-object multi-dimensional  } (MOMD) auction framework, which enables bidders to bid for multiple objects (i.e., segments) with different bitrates in each auction. {Within the MOMD framework, we design the allocation rule and payment rule, which leads to a truthful Vickrey-score auction. By a proper design of the score function, we propose an efficient mechanism that maximizes the social welfare. 
		Figure \ref{fig:logic} illustrates the theoretical framework of this paper.}
	
	\rev{The single-object and multi-object mechanisms assume that every user who is close to a downloader (and watches a video) will participate in the auction (when the downloader is ready to sell his downloading opportunities).  Although the mechanisms maximize the social welfare in each auction (under a properly chosen score function), the long-term social welfare across multiple rounds of auction may not necessarily be maximized in some cases. For example, if a downloader's channel condition is very poor (at the time he initiates the auction), then it might be wise for the nearby users to refrain from bidding and wait for a different downloader (with a better channel condition) to become available. 
		Therefore, we will further modify the proposed mechanisms, allowing users to refrain from bidding  according to certain rules, which can improve the overall long-term system performance. }
	
	Our key contributions are summarized as follows:
	\begin{table*}[t]
		\begin{center}
			\caption{{Multi-User Models in Adaptive Bitrate Streaming} } \label{table:reference}
			\begin{tabular}{|c|c|c|c|c|c|c|c|}
				\hline
				\multirow{2}{*}{Reference} & \multirow{2}{*}{Framework Type}&
				\multicolumn{2}{c|}{Model} &
				\multicolumn{3}{c|}{Method} & \multirow{2}{*}{Demo}\\
				\cline{3-7}
				& & Multi-Server & Multi-Video & Multi-Seg per Allocation & Bitrate Adaptation & Incentive &\\
				\hline
				
				
				\cite{ABR-multi-Keller2012} & Device-to-Device& $\surd$ & $\times$ & $\surd$ & $\times$ & $\times$  & $\surd$\\
				\cite{ABR-multi-Li2015,ABR-multi-Cao14} & Device-to-Device& $\times$ & $\times$ & $\times$ & $\times$ & $\surd$  & $\times$\\
				\cite{ABR-P2P-Xu2013,ABR-multi-Klusch14} &Peer-to-Peer& $\surd$ & $\times$ & $\surd$ & $\surd$ & $\times$  & $\surd$ \\
				\cite{ABR-P2P-Abdallah15}&Peer-to-Peer&  $\surd$ & $\times$ & $\surd$ & $\times$ & $\surd$ & $\times$\\
				\cite{ABR-multi-Zhong16,ABR-multi-Zhang2014,ABR-multi-Seenivasan2014} &Bandwidth Aggregation  & $\times$ & $\times$ & $\surd$ & $\times$ & $\surd$ & $\surd$\\
				\cite{ABR-multi-Lin2016}&Crowdsourced&$\surd$ & $\surd$ & $\times$ & $\surd$ & $\times$ & $\times$\\
				\textbf{This Paper}&\textbf{Crowdsourced }   & {$\boldsymbol{\surd}$} & {$\boldsymbol{\surd}$} & {$\boldsymbol{\surd}$} & {$\boldsymbol{\surd}$} & {$\boldsymbol{\surd}$} & {$\boldsymbol{\surd}$}\\
				\hline			
			\end{tabular}
		\end{center}
	\end{table*}
	\begin{itemize}
		\item  \emph{Auction-Based Incentive Mechanisms in \rev{the CMS system}:} We propose multi-dimensional auction based incentive mechanisms for \rev{the CMS system}, supporting the asynchronous downloading and bitrate adapting of video users. 
		The design of such mechanisms is challenging, as it needs to ensure that users  truthfully report multi-dimensional preferentially dependent information.

		\item \emph{Truthful and Efficient Auction}: For single-segment allocation, we propose a SOMD framework, based on which we propose a truthful and efficient mechanism that maximizes the social welfare in each auction. For multi-segment allocation, we propose an MOMD framework, based on which we propose the first mechanism achieving both truthfulness and efficiency in a multi-object multi-dimensional auction. 

		\item \emph{Modified Mechanism}: To \rev{enhance the long-term social welfare of video streaming services}, we further improve the proposed mechanisms by allowing bidders to \rev{refrain from bidding} according to their current situations. The simulation results show that such modification can successfully decrease rebuffer and bitrate degradation frequency along the entire video streaming.
		
		\item \emph{Real-world Demonstration System}: We construct a demo system using Raspberry PI 
		(a series of single-board computers) {that enables the cooperation among multiple users watching multiple videos. 
			Using the demo, we further analyze the real-world performances of the CMS.}

		\item \emph{Experiments and Performances}: Based on the modified auction mechanism, 
		 we perform experiments in both simulative system and demo system. {Simulations with real traces show
			that crowdsourced mobile streaming outperforms noncooperative
			streaming by $48.6\%$ (on average) in social welfare. 
			{Experiments over the demo system further show that 
				those users who help others and those users who receive helps can increase their welfares by $15.5\%$ and $35.4\%$ (on average) via cooperation, respectively.} 
		}
	\end{itemize}

	
	The rest of this paper is as follows. \rev{We review related works in Section \ref{sec:literature}. We describe the system model in Section \ref{sec:model}, and propose incentive mechanisms in  Sections \ref{sec:single} and \ref{sec:multiple}. We further propose a modified mechanism in \ref{sec: modify}. Then, in Section  \ref{sec:demo}, we describe a demo system. In  Section \ref{sec:experiment}, we show experiment results. In Section \ref{sec:conclusion}, we conclude this work.}

	
	
	

\section{Related Work}\label{sec:literature}

\subsection{Adaptive Bitrate Streaming}\label{subsec:liter-abr}

Most of early studies on ABR focused on single-user  bitrate adaptation methods, such as buffer-based adaptation\cite{ABR-single-Spiteri16,ABR-single-Huang15}, bandwidth-based adaptation\cite{ABR-single-Li14}, and hybrid buffer-bandwidth adaptation\cite{ABR-single-Zhou16,ABR-single-Hao14,ABR-single-Yin15}. 

To better utilize the network resources, {some recent works studied multi-user streaming models, which can be divided into four types\cite{magazine}: \emph{D2D models}\cite{ABR-multi-Keller2012,ABR-multi-Li2015,ABR-multi-Cao14}, where users share their {downloaded} video segments with other users through D2D links; \emph{P2P models}\cite{ABR-P2P-Xu2013, ABR-P2P-Abdallah15,ABR-multi-Klusch14}, where users download video segments from other users who have already downloaded it \revh{through the Internet};  {\emph{BA models}\cite{ABR-multi-Zhong16,ABR-multi-Zhang2014,ABR-multi-Seenivasan2014}, where multiple users aggregate their bandwidth to serve one user's video streaming need; \emph{\rev{CMS system}}\cite{ABR-multi-Lin2016}, {where multiple video users (who may watch different videos) form groups to share their cellular resources to serve all users' video streaming  needs}. 
		
		We summarize the key features of these works in Table I. Specifically, from the model's perspective, we compare two features: multi-server, ``$\surd$" if  videos can be downloaded from multiple servers (users with downloaded videos can also be regarded as servers); multi-video, ``$\surd$" {if  users watch different videos}. From the method's perspective, we compare three features: multi-seg per allocation, ``$\surd$" if  multiple segments can be allocated in an allocation; bitrate adaptation, ``$\surd$" if  bitrate adaptation is considered; incentive, ``$\surd$" if incentive mechanism is considered. We also compare whether the studies involve real demonstration system or not. }
	In our earlier work \cite{ABR-multi-Lin2016}, we proposed a \rev{CMS system} and derived the corresponding offline optimization problem. 
	However, in practice, a properly designed incentive mechanism is always required to motivate user cooperation, as 
	cooperations might lead to additional costs. This motivates the study of incentive mechanism in this work.
	\subsection{Multi-Dimensional Auction}\label{subsec:liter-auction}
	A multi-dimensional auction  enables bidders to reveal multi-dimensional information regarding the auctioned goods, such as price and quality.
	Che proposed a multi-dimensional auction framework \cite{Yeon-KooChe1993}, based on which  Asker \emph{et al.} in  \cite{JohnAsker2008} and David \emph{et al.} in \cite{EstherDavid2005} studied auction properties under specific score functions. 
	As the multi-dimensional auction generalizes the single-dimensional auction, it has found wide applications in financial markets \cite{HGimpel2006} and power procurement \cite{HungPoChao2002}.~~~~~~~~~~~~~~~~~~~
	
	Most of the existing works on the multi-dimensional auction considered single-object allocation, where only one good is allocated in each auction. In \cite{MartinBichler2000Multi}, Bichler \emph{et al.} showed that the multi-object extension in multi-dimensional auction  is difficult because of the preferential dependence: bidders' preferences of the price depend on their preferences of the quantity. \revk{Specifically, with the preferential \revr{dependence}, the widely used score function in the \emph{additive} form (as in Definitions 1 and 3) \revr{fails to characterize the relationship} between the  price and the quality dimensions. If the score function is non-additive, the auction will be quite challenging to analyze.} In \cite{MartinBichler2000Multi}, the authors proposed a continuous auction mechanism in the multi-object case, without the guarantee of either truthful bidding or efficient resource allocation. In addition, \revk{the continuous auction is unsuitable for video streaming applications, because such an auction incurs a large signaling overhead in practice as bidders have to submit bids repeatedly to reach an agreement.} 
 	
 	\revk{In this work, instead of capturing  all three dimensions (i.e., price, bitrate, and quantity) in the score function, we only capture  price and bitrate dimensions using an additive form. We address the quantity dimension by enabling each bidder to submit a set of two-dimensional bids (bitrate and quality dimensions), each of which corresponds to the bid under a particular segment number (quantity dimension). This represents a new approach of  the multi-object multi-dimensional auction design. }
 	As far as we know, this is the first work that achieves truthful bidding and efficient resource allocation in a multi-object multi-dimensional auction.

\section{System Model} \label{sec:model}

In \rev{a CMS system}, we consider a set of mobile users  $\N \eq \{1,2,...,N\}$, who  download videos cooperatively. Each user watches \rev{a video that is encoded based on the  ABR technology} and is downloaded via cellular links to his  mobile device. 

\subsection{Adaptive Bitrate Streaming}\label{subsec:model-abr}
We consider a typical ABR streaming protocol\cite{abr} in the CMS system. 
Its key features are summarized as follows. 

\emph{\textbf{Video Segmentation:}} A source video is partitioned into a sequence of small segments, each of which contains a piece of the source video with a fixed  playback time  (e.g., 10 seconds). 

\emph{\textbf{Multi-Bitrate Encoding:}} A segment is encoded in multiple copies with different  bitrates, 
so that a user can select the most suitable bitrate for each segment. 
Such a bitrate selection can be based on many factors, such as real time network conditions and individual preferences. 

\emph{\textbf{Data Buffering:}} To smooth the playback, each downloaded segment is saved in a buffer at the user's device before playing.
The video player on the user's device fetches segments from the buffer sequentially for playback.
Due to the device's storage limit, the buffer has a limited maximum size. 

For a user $n$, let $\tseg_{n}>0$ denote his video's fixed segment length  (in terms of playback time), let $\R_n \eq \{R_n^1, R_n^2, ..., R_n^Z\}$ denote the corresponding finite bitrate set, 
and let $ \bufsize_n > 0$ denote his maximum buffer size (in terms of playback time).

%

\subsection{Crowdsourced Mobile Streaming}\label{subsec:model-cms} 
%
In \rev{the CMS system}, users who are close-by form a mesh network  and share their network resources.  Through a proper scheduling mechanism, the group of users cooperatively download the requested segments of the entire group through cellular links and then forward segments to the actual requesting users (receivers) through WiFi  Direct links. 
Different users can watch different videos in this framework. 

We consider a continuous time model over a period of time $\T \eq  [0,\ T]$, where $t = 0$ is the initial time and $t= T $ is the ending time. Let ${h}_{n}(t) > 0$ denote user $n$'s cellular link capacity at time $t\in\T$. 
Let $e_{n,m}(t) \in \{0,1\}$ denote the encounter between users $n$ and user $m$ at time $t$, i.e., $e_{n,m}(t)=1$ if user $n$ and user $m$ are encountered. Note that a user always encounters himself, i.e., $e_{n,n}=1$ for all $n$. 


\subsection{User Model} \label{subsec:model-use}

\revk{We first describe  the welfare generated through the downloading operation between two users. Then, we define the social welfare of the system, which is the sum of the welfare generated by all downloading operations.}
	
In the downloading operation between two users, a user $\downloader\in \N$ downloads a sequence of a total of $\kappa$  segments with bitrates $\brvector=\{\br_1,\br_2,...,\br_{\segamount}\}$ for a user $\receiver\in\N$, where  $\br_i >0$ for all $i$. User $\downloader$  and user $\receiver$ can be the same user. 
The downloading of segment $i$ starts at $\starttime_i$ and ends at $\endtime_i$. 
The downloading timings and the channel condition satisfy the following relationship:
\begin{equation}
	\int_{\starttime_{i}}^{\endtime_i} {h}_\downloader(t) \mathrm{d} t =  \br_i \cdot \tseg_\receiver , ~ i=1,2,...,\segamount,
\end{equation}
where the total downloaded volume within the downloading time is equal to the size of the downloaded segment. 

This downloading operation (by user $\downloader$ for user $\receiver$) induces a  cost for user $\downloader$ and a utility for user $\receiver$.

\subsubsection{\textbf{Cost of Downloader (User $n$)}}\label{subsec:mobel-user-cost}
Cost of the downloader is user $n$'s cost for downloading and transmitting video segments with bitrates $\brvector=\{\br_1,\br_2,...,\br_{\segamount}\}$. The cost function $\Cost_{n,t}(\brvector)$ 
consists of the cost on cellular links and the cost on WiFi Direct links:
the cellular cost is the cost for downloading the segments (e.g., energy consumption or cellular data payment), while  the WiFi Direct cost is the  cost that user $n$ transmits the segments to user $m$ if $n\neq m$ (e.g., energy consumption). 
Let $c_{n,t}(\br)$ be the {cellular and the WiFi Direct cost} for a single segment with bitrate $r$. We assume that  the  cost $c_{n,t}(\br)$ is a non-decreasing linear function, i.e., $c_{n,t}(0) = 0$, $[c_{n,t}(\br)]_r \geq 0$, and $[c_{n,t}(\br)]_{rr} = 0$.\footnote{{Let $[\cdot]_{x}$ and $[\cdot]_{xx}$ denote the first and the second order derivatives with respect to variable $x$, respectively. Let $[\cdot]_{xy}$ denote the second-order partial derivative with respect to variable $x$ and variable $y$.}} \revk{The linear model for downloading and transmission energies has been widely considered in the existing works on video services and user-provided networks \cite{energy2,energy3}, and the linear data payment is essentially a usage-based model commonly adopted by mobile network operators today. Relaxing the linearity  assumption will not affect the auction mechanism and its corresponding  properties, but will affect the particular form of the sufficient condition to satisify Assumption \ref{ass:multi-score} (Proposition \ref{cnd:non-negative}).} 
We assume that the cost of different segments are independent of each other, so the cost $C_{n,t}(\brvector)$ can be represented as follows:
\begin{equation}\label{eq:cost}
	\Cost_{n,t}(\brvector) = \sum_{i=1}^{\segamount}c_{n,t}(r_i).
\end{equation}  

\subsubsection{\textbf{Utility of Receiver (User $m$)}}\label{subsec:model-user-utility}
Utility of the receiver is user $m$'s utility for receiving $\segamount$ video segments with bitrates $\brvector=\{\br_1,\br_2,...,\br_{\segamount}\}$. A user often desires  to watch a high quality video without frequent video freezings (i.e., rebuffers) or quality degradations \cite{ABR-single-Hao14,ABR-single-Zhou16,ABR-single-Yin15}. Hence, the utility depends on three factors: the video quality gain, the buffer filling gain, and the quality degradation loss.  \revk{The buffer filling gain can be  used to predict  the rebuffering probability, since whether the  exact rebuffering will occur or not is unknown when users make downloading decisions.} \revk{Similar as in  \cite{degrade2}, we consider the quality degradation loss instead of the quality switching loss (that considers both the degradation and the upgrade losses),  as humans are more sensitive to the degradation\cite{degrade}.}

The utility function $\Utility_{m,t}(\brvector)$ 
is related to the receiver $m$'s desire  for a high quality video $\va_{m,t}$, his current buffer level $\curbuf_{m,t}$, and his previous segment bitrate $\prebr_{m,t}$. Formally,
\begin{equation}\label{eq:utility}
		\Utility_{m,t}(\brvector)\eq\Valueq(\brvector,\va_{m,t}) + \Valueb(\revk{\segamount},\curbuf_{m,t}) - \Lossdeg(\brvector,\prebr_{m,t}), 
	\end{equation}
\revr{
where the summation form of \eqref{eq:utility} is commonly used in existing ABR works, such as \cite{ABR-single-Yin15}.}

	\emph{a) Video Quality Gain $\Valueq(\brvector,\va_{m,t})$} is the user's gain in terms of the video segment quality. A user has a higher gain if he receives a segment with a higher bitrate. 
	The user-dependent factor $ \va_{m,t} $ reflects user $m$'s desire for a high quality video. Let $\valueq_m(r,\va_{m,t})$ be the video quality gain function for a single segment with bitrate $r$, and this gain function is non-decreasing and concave\rev{\cite{concaveutility}}, i.e., $[\valueq_m(r,\va_{m,t})]_r\geq 0$ and $[\valueq_m(r,\va_{m,t})]_{rr}\leq 0$. The quality gain is zero when the segment bitrate is zero (receives nothing), i.e., $\valueq_m(0,\va_{m,t}) = 0$, for any $\va_{m,t}$. Moreover, a user with a higher $\va_{m,t}$ has a higher desire to increase the bitrate, so $[\valueq_m(r,\va_{m,t})]_r$ is non-decreasing in $\va_{m,t}$, i.e., $[v_m^{\textsc{q}}(r,\va_{m,t})]_{r\theta}\geq 0$.
	Suppose that the quality gain of each segment is independent of that of the others, i.e., 
	\begin{equation}
		\Valueq_m(\boldsymbol{\br},\va_{m,t}) = \sum_{i=1}^{\segamount} \valueq_m({r}_{i},\va_{m,t}).
	\end{equation}
	
	\emph{b) Buffer filling gain $\Valueb(\revk{\segamount},\curbuf_{m,t})$} is the user's gain in terms of filling the playback buffer\cite{ABR-single-Spiteri16}, \revk{which is a gain related to the segment number $\segamount$ only}. A user will have a higher gain if he receives more segments in an allocation, as this leads to a  reduced chance of video freezing. \revk{The gain of each additional segment decreases in the number of segments. This is because if a user has already been allocated a larger number of segments, he is less willing to obtain an additional one \revr{due to the reduced probability of rebuffering.}}  
	For a user with a lower current buffer size, he is more willing to have new segments, so he will have a higher buffer filling gain for \revh{the allocated segments}.   Let $\curbuf_{m,t}$ denote user $m$'s real-time buffer level  at time $t$,
	which is measured in terms of the playback time. 
	For notation convenience, we define a buffer filling gain gap between total $\kappa+1$  segments and  total $\kappa$ segments under buffer level $\curbuf_{m,t}$, as \revk{$\Delta(\segamount,\curbuf_{m,t}) =\Valueb_{m}(\segamount+1,\curbuf_{m,t})-\Valueb_{m}(\segamount,\curbuf_{m,t})$.} Summarizing the above discussions, the function $\Valueb_{m}(\segamount,\curbuf_{m,t})$ satisfies the following inequalities: 
	\begin{equation}
		{[\Valueb_{m}(\segamount,\curbuf_{m,t})]_{\curbuf_{m,t}} \leq 0,}
	\end{equation}
		\begin{equation} \label{eq:buffer-Delta}
		\Delta(\segamount,\curbuf_{m,t})\geq 0, ~ \Delta(\segamount+1,\curbuf_{m,t})- \Delta(\segamount,\curbuf_{m,t}) < 0.
		\end{equation}

		
		\emph{c) Quality Degradation Loss $\Lossdeg(\brvector,R_{m,t}^{\textsc{pre}})$} is the user's loss when the video degrades from a higher bitrate to a lower bitrate. 
		The user  will have a higher degradation loss if the degradation gap is larger. Let $\lossdeg_m(\hat{r},{r})$ be the degradation loss due to the fact  that a newly downloaded segment degrades from the previous bitrate $\hat{r}$ (of the previous segment) to the current bitrate ${r}$ with the  gap $\Delta r = \hat{r}-{r}$:
		\begin{equation}
			\lossdeg_m(\hat{r},{r}) = \left\{\begin{array}{ll}
				0,& \hat{r}<{r},\\
				\llossdeg_m(\Delta r),& otherwise.
			\end{array}\right.
		\end{equation}
		\revk{The positive part $\llossdeg_m(\Delta r)$} linearly increases with $\Delta r$\cite{ABR-single-Yin15}, i.e.,
		\begin{equation}
			{[\llossdeg_m(\Delta r)]_{\Delta r}\geq 0,~~[\llossdeg_m(\Delta r)]_{\Delta r \Delta r}= 0.}
		\end{equation}
		Let $ r_0 = \prebr_{m,t}$ be the bitrate {of the segment that user $m$ receives immediately before the new downloading segments}. The loss $\Lossdeg_m(\brvector,\prebr_{m,t})$ of the segments with bitrates $\boldsymbol{r}$ is the sum of the degradation loss of all the segments. Formally,
		\begin{equation}
			\Lossdeg_m(\brvector,\prebr_{m,t}) = \sum_{i=1}^{\segamount}\lossdeg_m(r_{i-1},r_{i}).
		\end{equation}

		\subsubsection{\textbf{Social Welfare}}\label{subsec:model-user-social}
		In the downloading operation by user $\downloader$ for user $\receiver$, the generated welfare is defined as the difference between the user $m$'s utility and the user $n$'s cost:
		\begin{equation}
			\SW_{nm,t} (\boldsymbol{\br}) =   \Utility_{m,t}(\boldsymbol{\br})  - \Cost_{n,t}(\boldsymbol{\br}).
		\end{equation}
		
		The social welfare of the system is the sum of the welfares that are generated through all the downloading operations among all the users. 

		\subsection{Problem Formulation}\label{subsec:model-pro}
		We aim to design an incentive mechanism in \rev{the CMS system} that \revi{can reveal user's private information} and maximizes the social welfare. The mechanism should help each user to decide how to allocate the downloading opportunities of $K$ segments to near-by users: 
		\emph{(i) who is the receiver of each of the segments, (ii) what is the bitrate of each of the segments, and (iii) what is the payment of each of the segment receivers?} 
		
		We will design auction mechanisms to address the issue of user private information. More specifically, we  propose a single-object ($K=1$) and a multi-object ($K\geq 1$) auction-based mechanism in Section \ref{sec:single} and Section \ref{sec:multiple}, respectively.

\section{Single-Object Auction-Based Mechanism}
\label{sec:single}
We adopt an auction-based incentive mechanism, in which users allocate segment downloading opportunities using auctions. At each decision epoch of any user (who is ready  to  download segments), he acts as an auctioneer and initiates an auction for all nearby users for deciding the next $K$ segments ($K=1$ in this section) to be downloaded. This framework operates in an asynchronous and decentralized manner, in the sense that each user initiates an auction independently and asynchronously from other users. \revg{To clarify, a double auction (that involves multiple auctioneers in an auction) is not suitable for the CMS system. This is because if implementing a double auction, under the auctioneers' asynchronous downloading operations, the auctioneers who are ready for downloading earlier have to wait for those who are ready later, which can lead to a significant downloading resource waste.}

We propose a multi-dimensional auction based framework, in which the bidders reveal their \emph{intended bitrates} and \emph{intended prices} through submitting multi-dimensional bids on the $K$ segments to be downloaded. \revk{Without loss of generality, we consider an auction initiated by a downloader (auctioneer) $n$ to his encountered users. We assume that \revr{the auction period (including the auction operation period and the segment downloading period)} is short enough, so that user $n$'s encountered users do not change during the auction  period. \revr{Such an assumption is supported by the fact that a segment often has  a small size (e.g., 10  seconds), and the corresponding total downloading and device-to-device transmission time is relatively small (comparing with the user's mobility time scale).} 
Let $\N_n$ denote  the set of user $n$'s encountered users: 
	\begin{equation}
	\N_n \eq \{ m \in \N\ |\ e_{n,m}(t) = 1, t\in[t_0 , t_0 + \epsilon] \},
	\end{equation}
where $[t_0 , t_0 + \epsilon]$ is the auction period.} Let $|\N_n|$ denote the total number of users in set $\N_n$. Note that the downloader will also join the auction as a \emph{virtual} bidder (i.e., $e_{n,n}(t) = 1$) to fulfill his own service requirement. 
	The bidder $m$'s private information is his real-time utility function, i.e., $\Utility_{m,t}(\cdot)$, depending on his desire for high quality video $\va_{m,t}$, current buffer level $B_{m,t}^{\textsc{cur}}$, and previous segment bitrate $R_{m,t}^{\textsc{pre}}$ (Section \ref{subsec:model-user-utility}). 	{We also assume that these functions and parameters do not change during a single auction.} 

	In this section, we focus on single-object multi-dimensional auction mechanism design, where the auctioneer allocates one segment in each auction, i.e., $K=1$. 
	
	\subsection{Auction-Based Incentive Mechanism}\label{subsec:single-auction}
	\subsubsection{\textbf{SOMD Auction Framework}}
	In the SOMD auction framework, each bidder  submits a two-dimensional bid comprising \emph{bitrate} and \emph{price}. According to the bids, the auctioneer allocates the single segment to a single winner. Formally, 

		\begin{framework}
			[SOMD Auction Framework]\label{mech:auction-single}	~~
			\begin{enumerate}
				\item The auctioneer $n$ announces auction rules, including the  \emph{allocation rule} $\Win (\cdot) $ and  the \emph{payment rule} $\Pay  (\cdot)$;
				\item Each bidder $m \in {\N_n} $ submits a  bid $\bid^m=(\br^m,\pr^m)$  to maximize his own expected payoff. \revh{Let $\boldsymbol{b} = (\boldsymbol{b}^m, \forall m\in\mathcal{N}_n)$ denote the bids from all the bidders};
				\item The auctioneer $n$ determines the winner $\win$  and the winner's payment $\wpay$ according to the announced rules:
				\begin{equation}
					\win = \Win(\bid), \quad \wpay=  \Pay(\bid).
				\end{equation}
				The auctioneer will download a segment for winner $\win$ at the bitrate specified in the winner's bid, i.e, $\wbr = \br^{\win }$.
			\end{enumerate}
		\end{framework}
	\revk{Here,  $\wpay$ is the actual payment from the winner, which may not be equal to the price $\pr^{\win}$ submitted by the winner.} 
	Given an auction outcome $(\win, \wpay,\wbr)$, the auctioneer $n$'s payoff is 
	\begin{equation}
		P_n(\wpay, \wbr)=  \wpay- \Cost_{n,t}(\wbr),
	\end{equation}
	and the receiver (winner) $\win$'s payoff  is
	\begin{equation}
		P_{\win}(\wpay, \wbr)= \Utility_{\win,t}( \wbr) - \wpay  ,
	\end{equation}
	where $\Cost_{n,t}(\cdot) $ is the downloader's cost defined in \eqref{eq:cost},
	and $ \Utility_{\win,t}( \cdot)$ is the receiver's utility defined in \eqref{eq:utility}. 	
	\subsubsection{\textbf{Second-Score Auction}}\label{subsec:single-auction-2ndscore}
	
	The {winning rule} $\Win (\cdot) $ and the {payment rule} $\Pay  (\cdot)$ are two key elements in auction design.
	In a single-dimensional auction, the auctioneer can determine the winner by simply sorting all bidders' prices and choosing the bidder with the highest price.
	In a multi-dimensional auction here, however, the auctioneer cannot determine the winner by simply choosing the bidder with the highest price.
	This is because the bitrate of bidder will affect the auctioneer's downloading cost, and hence the auctioneer's payoff.
	
	To this end, we introduce a \emph{score function}  to determine the winner and the payment.
	The key idea is to transform a multi-dimensional bid $(\br,\pr)$ into a single-dimensional score $\phi(\br, \pr)$, so that the auctioneer can sort bidders according to  their scores and determine the winner by choosing the highest score bidder.
	{In this work, we adopt the class of score functions in \cite{EstherDavid2005}.}
	
	
	\begin{definition}[Single-Object Score Function]\label{def:score-single}
		{Given the bitrate $r^m$ and the price $p^m$ submitted by a bidder $m$}, the single-object score function is defined as 
		\begin{equation}\label{eq:score-single}
			\revk{\phi(\br^m, \pr^m) = \pr^m - s(\br^m),}
		\end{equation}
		where  $s(\cdot)$ is a  non-decreasing function with  $s(0)=0$.
	\end{definition}
	
	{Intuitively, such a score function increases with the bidder's price and decreases with the
		bidder's bitrate, capturing the fact that the auctioneer prefers
		a higher price and a lower bitrate.
		Note that \eqref{eq:score-single} corresponds to a class of score functions, as we have not specified the concrete form of function $s(\cdot)$. A key contribution of our work is to design the function $s(\cdot)$ properly  in order to  achieve desirable outcomes such as efficiency (social welfare maximization).}

	We implement a \emph{second-score} (multi-dimensional) auction \cite{Yeon-KooChe1993},
	where the winner is the bidder with the highest score, and the winner's payment is the price that ``derives"  the second highest score under the winner's bitrate. Formally,
	
	\begin{mechanism}[Second-Score Auction]\label{mech:2ndscore}
		The second-score auction is a special case of Framework \ref{mech:auction-single}, where the allocation and bidding rules are defined as follows:
		\begin{enumerate}
			\item \emph{Allocation Rule}: The winner $\win$ is the bidder with the highest score, i.e., 
			\begin{equation}
				\win= \arg \max_{m\in {\N}_n} \ \phi(\br^m, \pr^m);
			\end{equation}
			\item \emph{Payment Rule}: The winner's payment $\wpay$ is the price that derives the second highest score under his bitrate $\wbr = \br^{\win}$, i.e., 
			\begin{equation}
				\wpay  -  s(\wbr) = \max_{m\in {\N}_n/\win} \phi(\br^m, \pr^m). 
			\end{equation}
		\end{enumerate}
	\end{mechanism}
	
	
	{Next, we first analyze any bidder's \rev{optimal} price and bitrate strategies in Section  \ref{subsec:single-truth}. Based on the \rev{optimal} strategies, we propose an efficient mechanism  through a proper choice of score function in Section \ref{subsec:single-efficient}.}
	
	\subsection{{Truthfulness and Optimal Bitrate}}\label{subsec:single-truth}
	
	In the second-score auction, we will show that each bidder will submit his bid (i.e., price and bitrate) according to Proposition \ref{prop:single-truthful} and \ref{prop:single-bitrate} to maximize his expected payoff, with proofs in Appendix \ref{app:single-truthful} and \ref{app:single-bitrate}, respectively.
	
	
	\begin{proposition}[Truthfulness]\label{prop:single-truthful}
		Given any bitrate bidding strategy $\br^m$, the optimal price bidding strategy $\pr^m$ of a bidder $m$ is his true utility under the selected bitrate $\br^m$, i.e.,
		\begin{equation}\label{eq:single-truthful}
			\pr^m = \Utility_{m,t }( \br^m).
		\end{equation}
	\end{proposition}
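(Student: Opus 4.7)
The plan is to prove Proposition \ref{prop:single-truthful} by the classical Vickrey-style weakly-dominant-strategy argument, adapted to the second-score setting of Mechanism \ref{mech:2ndscore}. I fix the bitrate component $r^m$ of bidder $m$'s bid, fix the other bidders' bids arbitrarily, and define $\phi^{-m} \triangleq \max_{k \in \mathcal{N}_n \setminus \{m\}} \phi(r^k, p^k)$. The goal is to show that bidder $m$'s payoff, viewed as a function of $p^m$, is maximized by the choice $p^m = U_{m,t}(r^m)$.

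First I would observe two structural facts from the mechanism. (i) Bidder $m$ wins if and only if $\phi(r^m, p^m) = p^m - s(r^m) > \phi^{-m}$, equivalently $p^m > s(r^m) + \phi^{-m}$. (ii) Conditional on winning at bitrate $r^m$, his payoff equals $U_{m,t}(r^m) - \pi^\dagger$, where by the payment rule $\pi^\dagger = s(r^m) + \phi^{-m}$; crucially this does not depend on $p^m$ beyond the binary event of winning. Hence, if he wins, his payoff is $U_{m,t}(r^m) - s(r^m) - \phi^{-m}$; if he loses, his payoff is $0$.

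Next I would run a standard two-case deviation analysis against the candidate truthful bid $p^m = U_{m,t}(r^m)$. In Case A, suppose $U_{m,t}(r^m) - s(r^m) - \phi^{-m} > 0$, so winning is profitable; truthful bidding wins (since $U_{m,t}(r^m) > s(r^m) + \phi^{-m}$) and secures this positive payoff. Any $p^m < s(r^m) + \phi^{-m}$ would lose and yield $0$, a strict loss; any $p^m \geq s(r^m) + \phi^{-m}$ still wins and yields the same payoff. In Case B, suppose $U_{m,t}(r^m) - s(r^m) - \phi^{-m} \leq 0$, so winning is not profitable; truthful bidding loses (or ties) and yields $0$. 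Any $p^m > s(r^m) + \phi^{-m}$ would win and yield a non-positive payoff, no better than $0$. In both cases truthful bidding is a best response, establishing \eqref{eq:single-truthful}.

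I do not anticipate a serious technical obstacle, since the argument reduces to the standard observation that the payment rule decouples the winner-determination price from the price that is actually paid. The only subtlety to be careful about is the tie-breaking/strict-versus-weak inequality when $p^m = s(r^m) + \phi^{-m}$; I would handle this by noting that any consistent tie-breaking rule leaves the payoff equal to $0$ at the boundary, so truthful bidding remains (weakly) optimal. The resulting proposition then licenses the subsequent analysis of the optimal bitrate choice in Proposition \ref{prop:single-bitrate} and of efficiency in Section \ref{subsec:single-efficient}.
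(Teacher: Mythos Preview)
Your proposal is correct and follows essentially the same Vickrey-style weakly-dominant-strategy argument as the paper's proof in Appendix \ref{app:single-truthful}: both fix $r^m$ and the opponents' bids, observe that the winner's payment $\pi^\dagger = s(r^m) + \max_{k\neq m}\phi(r^k,p^k)$ is independent of $p^m$, and conclude that truthful pricing is weakly dominant. The only cosmetic difference is the case split---you partition on whether winning is profitable, while the paper partitions on the four win/lose outcome pairs under truthful versus untruthful bids---but the logic is identical.
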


	\begin{proposition}[Optimal Bitrate]\label{prop:single-bitrate}
		The optimal bitrate bidding strategy $\br^m$ of a bidder $m$ is given by
		\begin{equation}\label{eq:single-bitrate}
			\br^m = \arg \max_{\br\in \R_m }\left(\Utility_{m,t }( \br) - s(\br)\right) .
		\end{equation}
	\end{proposition}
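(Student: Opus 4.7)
The plan is to combine Proposition \ref{prop:single-truthful} with the structural form of the second-score auction, and show that under truthful pricing both the probability of winning and the payoff conditional on winning are monotonically increasing in the single quantity $\Utility_{m,t}(\br^m) - s(\br^m)$. Hence maximizing this quantity over $\br^m \in \R_m$ is optimal regardless of the bidder's beliefs about other bidders' bids.

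First I would substitute the truthful price $\pr^m = \Utility_{m,t}(\br^m)$ from Proposition \ref{prop:single-truthful} into the score function \eqref{eq:score-single}, obtaining $\phi(\br^m, \pr^m) = \Utility_{m,t}(\br^m) - s(\br^m)$. Denote this quantity by $\mscore^m$. By the allocation rule of Mechanism \ref{mech:2ndscore}, bidder $m$ wins if and only if $\mscore^m > \max_{k\in\N_n/m} \phi(\br^k,\pr^k)$, so the winning probability is nondecreasing in $\mscore^m$ when other bidders' bids are held fixed (or treated as random from bidder $m$'s perspective).

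Next I would compute the payoff conditional on winning. By the payment rule, $\wpay = s(\br^m) + \max_{k\in\N_n/m} \phi(\br^k, \pr^k)$, so the winner's payoff is
\begin{equation}
P_m = \Utility_{m,t}(\br^m) - \wpay = \mscore^m - \max_{k\in\N_n/m} \phi(\br^k,\pr^k).
\end{equation}
This is again monotonically increasing in $\mscore^m$ and independent of $\br^m$ otherwise. Combining this with the monotonicity of the winning probability, the expected payoff $\mathbb{E}[P_m]$ is maximized precisely when $\mscore^m$ is maximized, which yields \eqref{eq:single-bitrate}.

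The main subtlety will be handling the case where bidder $m$ loses (contributing zero to expected payoff) versus wins: I need to argue that raising $\mscore^m$ weakly expands the winning region while strictly improving the integrand on the winning region, so no tradeoff arises. This is a standard argument in second-price/second-score settings and follows because the payment only depends on the \emph{competitors'} scores and on $s(\br^m)$, not on $\pr^m$ directly. A brief remark confirming that any maximizer in $\R_m$ exists (since $\R_m$ is finite) will close out the proof.
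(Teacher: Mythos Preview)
Your argument is correct, but it follows a different route from the paper's own proof. You first invoke Proposition~\ref{prop:single-truthful} to restrict attention to truthful prices, and then argue that under truthful pricing the realized payoff is $\max\{S^m-\max_{k\neq m}\phi(\br^k,\pr^k),\,0\}$, which is pointwise nondecreasing in $S^m=\Utility_{m,t}(\br^m)-s(\br^m)$; hence maximizing $S^m$ over $\R_m$ is optimal. The paper instead gives a direct domination argument that does \emph{not} pass through Proposition~\ref{prop:single-truthful}: for an arbitrary bid $(\bar r^m,\bar p^m)$ (truthful or not), it constructs a competing bid $(\hat r^m,\hat p^m)$ where $\hat r^m$ is the maximizer in \eqref{eq:single-bitrate} and $\hat p^m$ is chosen so that $\phi(\hat r^m,\hat p^m)=\phi(\bar r^m,\bar p^m)$. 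The two bids then have the \emph{same} winning event, and conditional on winning the payoff difference is exactly $\bigl(\Utility_{m,t}(\hat r^m)-s(\hat r^m)\bigr)-\bigl(\Utility_{m,t}(\bar r^m)-s(\bar r^m)\bigr)\ge 0$. Your monotonicity argument is arguably more intuitive and reusable (it identifies the single sufficient statistic $S^m$ governing both winning and payoff), while the paper's score-matching construction is slightly stronger in that it establishes the optimal bitrate against \emph{any} price deviation in one step, without needing Proposition~\ref{prop:single-truthful} as a lemma. Either way the conclusion is the same, and your closing remark on existence of a maximizer over the finite set $\R_m$ is appropriate.
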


	
	 Propositions \ref{prop:single-truthful} and \ref{prop:single-bitrate} propose the optimal strategy of each bidder $m$ in the second-score auction. \revk{To maximize his own profit, each bidder should select the bitrate $\br$ that maximizes the difference between his  utility $\Utility_{m,t}(\br)$ and the $s(r)$, and select the price $\pr$ that is equal to  his utility under the optimized  bitrate.} \revk{Due to the finite choices of bitrate, an optimal solution  always exists, and each bidder is able to calculate the optimal solution based on  his local information and the auctioneer's announced information with  a low computation complexity.}
	
	
	\subsection{Efficiency}\label{subsec:single-efficient}
	Notice that Propositions \ref{prop:single-truthful} and \ref{prop:single-bitrate} hold for any score functions  in the form of \eqref{eq:score-single}. On the other hand, the specific choice of $s({r})$ determines bidders' \rev{optimal} strategies and auction's allocation and payment, so auctioneers can choose the  score function to achieve desirable auction outcomes. 
	
	Here, we propose the efficient mechanism that maximizes the social welfare. 
	We first define an efficient score function:
	\begin{definition}[Single-Object Efficient Score Function]\label{df:single-efficient}
		An efficient score function is in the form of
		\begin{equation}\label{eq:single-efficient}
			\score(\br, \pr) = \pr - C_{n,t}(\br),
		\end{equation}
		where $C_{n,t}(\br)$ is the auctioneer's downloading cost.
	\end{definition}
	

	Under the score function of \eqref{eq:single-efficient}, we next show that the second-score auction implements the efficient mechanism.
	\begin{theorem}[Efficiency]\label{prop:single-efficient}
		Under the \rev{optimal} bidding behavior specified in Propositions \ref{prop:single-truthful} and \ref{prop:single-bitrate}, the second-score auction with the efficient score function in \eqref{eq:single-efficient} implements the efficient mechanism that maximizes the social welfare.
	\end{theorem}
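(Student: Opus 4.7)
The plan is to show that the efficient score function $s(r) = C_{n,t}(r)$ aligns each bidder's individually optimal bidding behavior with the socially optimal choice, and that the winner-selection rule then picks the bidder whose allocation generates the largest social welfare. Recall that the social welfare generated by the downloading operation between auctioneer $n$ and a receiver $m$ at bitrate $r$ is $W_{nm,t}(r) = U_{m,t}(r) - C_{n,t}(r)$, and the outcome of a single auction consists of one such $(n, \sigma^\dag, r^{\sigma^\dag})$ pair. Efficiency means that the auction implements
\begin{equation}
(\sigma^\dag, r^{\sigma^\dag}) \in \arg\max_{m \in \mathcal{N}_n,\, r \in \mathcal{R}_m}\, W_{nm,t}(r).
\end{equation}

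First, I would substitute $s(r) = C_{n,t}(r)$ into the optimal bitrate rule of Proposition \ref{prop:single-bitrate}. This immediately gives $r^m \in \arg\max_{r \in \mathcal{R}_m}\bigl(U_{m,t}(r) - C_{n,t}(r)\bigr) = \arg\max_{r \in \mathcal{R}_m} W_{nm,t}(r)$, so each bidder's self-interested bitrate choice coincides with the welfare-maximizing bitrate conditional on $m$ being the receiver. Then I would apply Proposition \ref{prop:single-truthful} to set $p^m = U_{m,t}(r^m)$, so that the bidder's score becomes
\begin{equation}
\phi(r^m, p^m) = U_{m,t}(r^m) - C_{n,t}(r^m) = \max_{r \in \mathcal{R}_m} W_{nm,t}(r).
\end{equation}

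Next, I would plug these scores into the allocation rule of Mechanism \ref{mech:2ndscore}: the winner satisfies
\begin{equation}
\sigma^\dag = \arg\max_{m \in \mathcal{N}_n} \phi(r^m, p^m) = \arg\max_{m \in \mathcal{N}_n} \max_{r \in \mathcal{R}_m} W_{nm,t}(r),
\end{equation}
which is exactly the bidder (and, together with $r^{\sigma^\dag}$, the bitrate) that maximize the social welfare produced by this auction. Since the payment $\pi^\dag$ is a pure transfer between auctioneer and winner, it cancels out in the sum of their payoffs and hence does not affect the welfare of the allocation. Therefore the outcome of the second-score auction under the efficient score function coincides with the welfare-maximizing allocation, proving efficiency.

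I do not anticipate a serious technical obstacle here, because the heavy lifting has already been done by Propositions \ref{prop:single-truthful} and \ref{prop:single-bitrate}; the only subtlety is to keep the quantifiers straight (each bidder separately maximizes over $\mathcal{R}_m$, and then the auctioneer maximizes over $m \in \mathcal{N}_n$) so that the joint maximum over $(m, r)$ is attained by the selected winner and his chosen bitrate. I would also briefly remark that the finiteness of each $\mathcal{R}_m$ and of $\mathcal{N}_n$ ensures the argmaxes are well-defined, and that any tie-breaking rule does not affect the maximized welfare value.
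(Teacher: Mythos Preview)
Your proposal is correct and follows essentially the same argument as the paper: substitute $s(r)=C_{n,t}(r)$ into Proposition~\ref{prop:single-bitrate} so that each bidder's optimal bitrate maximizes $U_{m,t}(r)-C_{n,t}(r)$, apply Proposition~\ref{prop:single-truthful} so that the resulting score equals $\max_{r\in\mathcal{R}_m}W_{nm,t}(r)$, and then observe that the highest-score rule selects $\arg\max_{m\in\mathcal{N}_n}\max_{r\in\mathcal{R}_m}W_{nm,t}(r)$. Your additional remarks on transfers, finiteness, and tie-breaking are fine but not needed for the paper's level of detail.
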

	\begin{proof}
			Based on Proposition \ref{prop:single-truthful} and \ref{prop:single-bitrate}, each  bidder $m$ submits bid $(\br^m,\pr^m)$, where $\br^m = \arg \max_{\br\in \R_m }\left( \Utility_{m,t}( \br) - C_n(r)\right)$ and $\pr^m = \Utility_{m,t}(\br^m)$. \revh{In other words, each bidder submits the bitrate $r^m$ that maximizes his score. This leads to }
		\begin{equation}
		\phi(\br^m,\pr^m) = \max_{r\in \R_m} \left(\Utility_{m,t}(\br) - \Cost_n(\br)\right).
		\end{equation}
		In second-score auction, the winner $\sigma^\dag$ is the bidder with the highest score, i.e.,
		\begin{equation}
		\revh{  \sigma^{\dag}= \arg \max_{\substack{ m \in {\N}_n }} \phi(\br^m,\pr^m) .}
		\end{equation}
		\revh{The winning bitrate $r^{\dag}$ is the bitrate submitted by the winner $\sigma^{\dag}$, i.e., $r^{\dag}=r^{\sigma^{\dag}}= \arg \max_{\br\in \R_{\sigma^{\dag}} }\left( \Utility_{\sigma^{\dag},t}( \br) - C_n(r)\right)$. Hence, the social welfare under  $\sigma^{\dag}$ and $r^{\dag}$ is as follows:}
		\begin{equation}
		\revh{ \Utility_{\sigma^{\dag},t}(\br^{\dag}) - \Cost_n(\br^{\dag})=\max_{\substack{ m \in {\N}_n }}  \max_{r\in \R_m} \left(\Utility_{m,t}(\br) - \Cost_n(\br)\right),}
		\end{equation}which implies that the social welfare is maximized.
	\end{proof}
	\revk{Note that the exact downloading capacity is unknown beforehand, which leads to an unknown cost function $C_{n,t}(r)$ in \eqref{eq:single-efficient} when an auctioneer initiates an auction. Hence, in practice, an auctioneer needs  to estimate his downloading capacity based on his historical information using methods such as the one  in \cite{ABR-single-Li14}. The design and optimization of such an estimation is outside the scope of this paper. In later simulations, we assume that an auctioneer calculates his cost function based on the average capacities of his previous several downloading operations. Although the estimation accuracy will affect the mechanism performance, bidders and auctioneers make decisions based on not only the cost $C_{n,t}(r)$ but also bidders' utilities $U_{m,t}(r)$ (which involves bidders' buffer level information). Hence, under the extreme case where  capacities vary dramatically, the consideration of the buffer levels can alleviate the performance degradation caused by inaccurate estimation.}

\section{Multi-Object Auction-Based Mechanism}\label{sec:multiple}

\rev{To reduce the possibly excessive signaling  overhead caused by the frequently auctions}, 
in this section, we consider the more general case of multi-object multi-dimensional auction mechanism design, where the auctioneer allocates multiple segments in each auction, i.e., $K\geq1$.  For nontation simplicity, we will write the bidder set $\N_n$ as $\M=\{1,2,...,M\}$, where $M$ is the total number of bidders in the set $\N_n$.}

\subsection{Auction-Based Incentive Mechanism}\label{subsec:multi-auction}

\subsubsection{\textbf{MOMD Auction Framework}}

In the MOMD auction framework, bidders submit multi-dimensional bids, revealing their intended \emph{bitrate} and \emph{price} under each segment that might be allocated. Based on the bids, the auctioneer allocates the (downloading opportunities of) $K$ segments to multiple bidders. An MOMD auction operates as follows:
\begin{framework}\label{mech:multi-auction}	
	[MOMD Auction Framework]~~
	\begin{enumerate}
		\item The auctioneer $\auctioneer$ announces auction rules, including the  \emph{segment number} $\objamount$,  the \emph{allocation rule} $\Win (\cdot) $, and the \emph{payment rule} $\Pay  (\cdot)$; 
		\item Each bidder $\bidder \in \M $ submits a bid $\bid^\bidder=(\brmatrix^\bidder,\prvector^\bidder)$  to maximize his own expected payoff. \revh{Let $\boldsymbol{b} = (\boldsymbol{b}^m, \forall m\in\M)$ denote the bids from all the bidders.} Here,
		\begin{itemize}
			\item Bitrate matrix
			\begin{equation}\label{eq:muli-bitratematrix}
				\brmatrix^\bidder  =
				\left[ \begin{array}{c}
					\brvector_{1}^\bidder \\
					\brvector_{2}^\bidder   \\
					\vdots  \\
					\brvector_{\objamount}^\bidder \\
				\end{array} \right] =
				\left[ \begin{array}{cccc}
					\br_{11}^\bidder & 0      & ...    & 0      \\
					\br_{21}^\bidder & \br_{22}^\bidder & ...    & 0      \\
					\vdots & \vdots & \ddots & \vdots    \\
					\br_{\objamount1}^\bidder & \br_{\objamount2}^\bidder & ...    &  \br_{\objamount\objamount}^\bidder\\
				\end{array} \right],
			\end{equation}
			
			where $r_{\kappa i}^\bidder \in \boldsymbol{R}^\bidder$ is the  bitrate  of the $i^{th}$ segment when bidder $\bidder$ is allocated a total of $\kappa$ segments.
			\item Price Vector
			\begin{equation}\label{eq:multi-pricevectore}
				\prvector^\bidder =
				\left(
				\pr_{1}^\bidder, \pr_{2}^\bidder, ..., \pr_{\objamount}^\bidder\right),
			\end{equation}
			where $p_{\kappa}^\bidder$ is the total price (willingness-to-pay)  when bidder $\bidder$ is allocated a total of $\kappa$ segments.
		\end{itemize}
		
		\item The auctioneer $\auctioneer$ determines the \emph{allocation set}, i.e., the winner of \emph{each segment}, $\winset\eq \{\win_1,\win_2,...,\win_\objamount\},$ and the \emph{payment set}, i.e., the price that \emph{each bidder} needs to pay, $\wpayset \triangleq \{\wpay_1,\wpay_2,...,\wpay_M\},$ according to the rules:
		\begin{equation}
			\winset= \Win(\bid), \quad \wpayset =  \Pay(\bid).
		\end{equation}
		Accordingly, the downloading  bitrate of each segment is equal to the  submitted bitrate of the  corresponding winner, denoted by $\wbrset \triangleq \{\wbr_1,\wbr_2,...,\wbr_\objamount\}$.
	\end{enumerate}
\end{framework}

Notice that both the allocation set and the bitrate set have the size of $K$, as these two sets enumerate the receiver and the bitrate for \emph{each segment}, respectively; however, the size of  the payment set is $M$, and each element corresponds to  the payment from \emph{a bidder}. To facilitate the later discussions, we define a revised allocation set $\winsetupdate$ and a revised bitrate set $\wbrsetupdate$, both of which have the size of $M$. More specifically, starting from allocation set $\winset$,  we can compute the number of segments allocated to bidder $m$, denoted as $\wamount_\bidder$. 
With this we can define the revised allocation set as $\winsetupdate=\{\wamount_{1},\wamount_2,...,\wamount_M\}$, where $\sum_{m=1}^M\wamount_{\bidder}=\objamount$. Similarly, we  define the revised bitrate set as
$\wbrsetupdate = \{\wbrupdate_1, \wbrupdate_2,...,\wbrupdate_M\}$,
where vector $\wbrupdate_\bidder$ is the bitrate set for the  $\wamount_m$ segments  allocated to bidder $m$, 
i.e., $\wbrupdate_\bidder = \brvector^\bidder_{\wamount_\bidder}$ (i.e., the $\wamount_\bidder$th row of bitrate bid matrix $\brmatrix^\bidder$). 

Based on the auction results, the auctioneer $\auctioneer$'s payoff is the sum of the difference between each bidder's payment and $n$'s downloading cost for this bidder's segments, i.e.,
\begin{equation}
	\payoff_{\auctioneer}(\wpayset,\wbrsetupdate) = \sum_{\bidder=1}^M [\wpay_\bidder - \Cost_{n,t}( \wbrupdate_\bidder )].
\end{equation}
Bidder $\bidder$'s payoff is the difference between his utility and his payment, i.e.,
\begin{equation}
	\payoff_{\bidder}(\wpay_\bidder,\wbrupdate_\bidder) = \Utility_{m,t}(\wbrupdate_\bidder) - \wpay_\bidder. 
\end{equation}

\subsubsection{\textbf{Vickrey-Score Auction}}
In a multi-dimensional auction, the vector bids may not be sorted easily, \rev{and this introduces difficulties for determining the allocation set and the payment set.} We again introduce a \emph{score function} to address this problem. Different from single-object case in Section \ref{sec:single}, here we will  transform the bids into sequences of \emph{marginal scores}, of which the auctioneer can sort  and make decisions. 

We first define the score function as follows. 
\begin{definition}[Multi-Object Score Function]\label{df:multi-score}	
	\revk{Given the bitrate $\boldsymbol{R}^m$ and the price $\boldsymbol{p}^m$ submitted by a bidder $m$}, for any number of allocated segments $\kappa$, the multi-object score function $ \score(\brvector_\kappa^m, \pr_\kappa^m) $ is given by
	\begin{equation}\label{eq:score2}
		\score(\brvector_\kappa^m, \pr_\kappa^m) = \pr_\kappa^m - s(\brvector_\kappa^m),
	\end{equation}
	where $s(\cdot)$ is a component-wise  non-decreasing function and {$s(\boldsymbol{0})=0$}.
\end{definition}


The score function in \eqref{eq:score2} involves one row in the bitrate matrix in \eqref{eq:muli-bitratematrix} and one component in the price vector in \eqref{eq:multi-pricevectore}. Hence, for each bidder $m$, we will compute $K$ scores, i.e., $\score(\brvector_\kappa, \pr_\kappa), \forall \kappa = 1, ..., K$. Based on this, we can further compute  the \emph{marginal score sequence} for each bidder $m$: $\boldsymbol{S}^m =\{S_1^m,S_2^m,...S_K^m\},$
{where the $\kappa^{th}$ marginal score reflects  bidder $m$'s score increase when the total allocated segment number to bidder $m$ increases from $\kappa-1$ to $\kappa$.} Formally,
\begin{equation}\label{eq:marginalscore}
	\mscore_{\kappa}^m = 
	\left\{\begin{array}{ll}
		\score(\boldsymbol{\br}^m_1, \pr^m_1),&\kappa=1,\\
		\score(\boldsymbol{\br}^m_\kappa, \pr^m_\kappa) - \score(\boldsymbol{\br}^m_{\kappa-1}, \pr^m_{\kappa-1}),&2\leq \kappa \leq K. 
	\end{array}\right.
\end{equation}

\revr{We impose the following assumption on marginal scores:} 
\begin{assumption}[Marginal Score]\label{ass:multi-score}
	For any bidder $\bidder \in \M$, the marginal score sequence $\boldsymbol{\mscore}^\bidder$ is non-negative and non-increasing in $\kappa$, where:
	\begin{equation}
		\mscore^{\bidder}_{\kappa} \geq \mscore^{\bidder}_{\kappa+1} \geq 0, \quad \kappa = 1,2,...,\objamount-1.
	\end{equation}
\end{assumption}
\revr{Assumption \ref{ass:multi-score} implies that  an additional segment induces a larger score (i.e., a positive marginal score $\mscore^{\bidder}_{\kappa+1} \geq 0$), and  the score increase  (i.e., the marginal score) is non-increasing with the allocated segment number $\kappa$ (i.e., $\mscore^{\bidder}_{\kappa} \geq \mscore^{\bidder}_{\kappa+1}$). In Section \ref{subsec:multi-condition}, we provide a sufficient condition under which Assumption \ref{ass:multi-score} is always satisfied.}

Inspired by the VCG mechanism\cite{VCG1}, we propose a Vickrey-score auction, where we allocate  the $\objamount$ segments to the $\objamount$ highest marginal scores, and choose the payments reflecting the score damages of the winners to the system. {Next we will first define the proposed mechanism,  and then provide a numerical illustrating example. }

For a  bidder $\bidder$, let sequence $\mscoresetex^{-\bidder}$ denote the $\objamount$ highest marginal scores \emph{except} bidder $\bidder$'s:
\begin{equation}\mscoresetex^{-\bidder} \triangleq \{\mscoreex_1^{-\bidder},\mscoreex_2^{-\bidder},...,\mscoreex_\objamount^{-\bidder} \},\end{equation}
where $\mscoreex^{-\bidder}_k$ is the {$k^{th}$ highest value among all the bidders' marginal scores  \emph{except} bidder $\bidder$'s}. We further let $\mscoreseti$ denote the $K$ highest marginal scores among all bidders:
\begin{equation}\mscoreseti \triangleq \{\mscorei_1,\mscorei_2,...,\mscorei_\objamount\},\end{equation} 
where $\mscorei_k$ is the $k^{th}$ highest value among all the bidders' marginal scores. The Vickrey-score auction is as follows:
\begin{mechanism}[Vickrey-Score Auction]\label{mech:Vickrey-score}~
	The Vickrey-score auction is a special case of Framework \ref{mech:multi-auction}, where the allocation and payment rules are defined as follows:
	\begin{itemize}
		\item Allocation Rule: The segment $\obj$'s receiver $\win_\obj$ is the bidder corresponding to   the  $\obj^{th}$ highest marginal score, i.e.,
		\begin{equation}
			{\mscore_{i}^{\win_\obj} = \mscorei_\obj,	}
		\end{equation}
		where $\mscore_{i}^{\win_\obj}$ refers to the $i^{th}$ marginal score of bidder $\win_\obj$.
		\item Payment Rule: If bidder $m$ wins $\wamount_{\bidder}$ segments, then his payment $\wpay_m$  corresponds to the score damage caused by this bidder under his submitted bitrate, i.e.,
		\begin{equation}\label{eq:payment}
			\wpay_\bidder -  s(\brvector^\bidder_{\wamount_{\bidder}}) = \sum_{i=1}^{\wamount_{\bidder}}  \mscoreex_{K-\wamount_{\bidder}+i}^{-\bidder} .
		\end{equation}
	\end{itemize}
\end{mechanism}
{
	\begin{example}
		Consider an auction  with $M=3$ users and $K=4$  segments, where we have the following  marginal score sequences: $\mscoreset^1=\{ 8,~7,~5,~2\}$,	$\mscoreset^2=\{9,~6,~3,~2\}$, and $\mscoreset^3=\{4,~4,~3,~1\}$.
		Hence, we have the sorted sequences:
		\begin{equation*}
			\begin{aligned}
				\mscoreseti = \{9,~8,~7,~6\}; & 
				~~\mscoresetex^{-1} =  \{9,~6,~4,~4\};\\
				\mscoresetex^{-2} =  \{8,~7,~5,~4\}; & ~~\mscoresetex^{-3} =  \{9,~8,~7,~6\}.
			\end{aligned}
		\end{equation*}
		The four numbers in vector $\mscoreseti$ corresponds to the marginal scores of user 1 ($8$ and $7$) and user 2 ($9$ and $6$). Hence, according to the proposed  Vickrey-score auction: user $1$ wins two segments, and user $2$ wins two segments. The payments of user 1 and user 2 are:
		\begin{equation*}
			\begin{aligned}\wpay_1 = \sum_{i=1}^{2}  \mscoreex_{4-2+i}^{-1}+ s(\brvector^1_2) = \underbrace{4+4}_{score~damage}+ s(\brvector^1_2);\\
				\wpay_2 = \sum_{i=1}^{2}  \mscoreex_{4-2+i}^{-2}+ s(\brvector^2_2) = \underbrace{5+4}_{score~damage}+ s(\brvector^2_2).
			\end{aligned}
		\end{equation*}
		{Take user $1$ as an example: without user $1$, user $3$ will win $2$ segments with scores $4$ and $4$, so these scores are the score damage caused by user $1$. Hence,  user $1$ has to pay the price that compensates this damage as shown above. }
		
	\end{example}}  
	
	


	\subsection{Truthfulness and Optimal Bitrate}\label{subsec:multi-truth}
	
	
	In the Vickrey-score auction, {we prove that each bidder will submit his bid (i.e., price and bitrate) according to Proposition \ref{prop:multi-truthful} and Proposition \ref{prop:multi-bitrate} to maximize his expected payoff, with proofs in Appendix \ref{app:multi-truthful} and \ref{app:multi-bitrate}, respectively. 
		\begin{proposition}[Truthfulness]\label{prop:multi-truthful}
			Given any bitrate matrix $\brmatrix^\bidder$, the \rev{optimal}  price vector $\prvector^\bidder$ of a bidder $m$ is his true utility under the selected bitrate matrix $\brmatrix^\bidder$, i.e.,
			\begin{equation}\label{eq:multi-truthful}
				\pr^\bidder_{\kappa} = \Utility_{\bidder,t }( \brvector^\bidder_{\kappa}) , \quad {\kappa} = 1,2,...,K.
			\end{equation}
		\end{proposition}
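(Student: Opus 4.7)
The plan is to mirror the classical VCG truthfulness argument, adapted to the marginal-score decomposition of Mechanism \ref{mech:Vickrey-score}. Fix bidder $m$'s bitrate matrix $\brmatrix^m$ and all other bidders' bids. I would first observe that, under Assumption \ref{ass:multi-score}, the aggregate score $\score(\brvector^m_\kappa,\pr^m_\kappa)$ equals the partial sum of marginal scores $\sum_{i=1}^{\kappa}\mscore_i^m$, and the payment in \eqref{eq:payment} can be rewritten as $\wpay_m = s(\brvector^m_{\wamount_m}) + \sum_{i=1}^{\wamount_m}\mscoreex^{-m}_{K-\wamount_m+i}$. Since the sequence $\mscoresetex^{-m}$ does not involve bidder $m$'s bids, the payment depends on bidder $m$'s price vector $\prvector^m$ only through $\wamount_m$, the number of segments bidder $m$ ends up winning. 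Hence the price vector's only strategic role is to influence which $\wamount_m$ the allocation rule selects.

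Next I would express bidder $m$'s payoff when he wins $\wamount_m$ segments as
\begin{equation}
\payoff_m = \Utility_{m,t}(\brvector^m_{\wamount_m}) - s(\brvector^m_{\wamount_m}) - \sum_{i=1}^{\wamount_m}\mscoreex^{-m}_{K-\wamount_m+i},
\end{equation}
and define $F(\kappa) \triangleq \Utility_{m,t}(\brvector^m_\kappa) - s(\brvector^m_\kappa) - \sum_{i=1}^{\kappa}\mscoreex^{-m}_{K-\kappa+i}$ (with $F(0)=0$), so that bidder $m$'s payoff equals $F(\wamount_m)$ regardless of how the specific $\wamount_m$ was induced. The truthfulness claim will follow once I show that truthful pricing $\pr^m_\kappa=\Utility_{m,t}(\brvector^m_\kappa)$ makes the allocation rule select $\wamount_m = \arg\max_\kappa F(\kappa)$.

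To establish this, I would analyze the allocation rule through a marginal comparison. Under truthful pricing, bidder $m$'s $\kappa$-th marginal score becomes $\mscore^m_\kappa = [\Utility_{m,t}(\brvector^m_\kappa)-s(\brvector^m_\kappa)] - [\Utility_{m,t}(\brvector^m_{\kappa-1})-s(\brvector^m_{\kappa-1})]$. The Vickrey-score allocation picks the top $K$ marginal scores across all bidders; equivalently, bidder $m$ obtains a $\kappa$-th segment if and only if $\mscore^m_\kappa$ exceeds the next-best unused rival marginal score, which by the merging lemma underlying VCG-type assignments is exactly $\mscoreex^{-m}_{K-\kappa+1}$. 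Telescoping the inequalities $\mscore^m_\kappa \gtreqless \mscoreex^{-m}_{K-\kappa+1}$ across $\kappa$ yields that the selected $\wamount_m$ satisfies $F(\wamount_m) \geq F(\kappa)$ for every $\kappa \in \{0,1,\dots,K\}$, which is the required payoff-maximizing property.

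Finally, I would close the argument by a direct deviation check: any alternative price vector $\prvector^m$ can only change the outcome by inducing some different $\wamount_m'$; but since the payoff equals $F(\wamount_m')$ and $F$ is maximized at the truthful $\wamount_m$, no deviation strictly helps. The main obstacle I anticipate is the bookkeeping for the marginal-score comparison step, i.e., rigorously justifying that ``winning the $\kappa$-th extra segment'' corresponds precisely to $\mscore^m_\kappa \geq \mscoreex^{-m}_{K-\kappa+1}$; this requires Assumption \ref{ass:multi-score} (non-increasing marginal scores) to guarantee that both $\mscoreset^m$ and $\mscoresetex^{-m}$ are sorted so the merge-sort interpretation of ``top $K$'' is unambiguous. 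Once that is in place, the rest of the proof is an application of the standard VCG payoff identity.
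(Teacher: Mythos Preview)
Your proposal is correct and follows essentially the same route as the paper: both fix $\brmatrix^m$ and the rivals' bids, observe that the payment depends on $\prvector^m$ only through the induced $\wamount_m$, and then compare the truthful payoff $F(\wamount_m)$ to $F(\bar\kappa)$ via the marginal-score inequalities $\mscore^m_\kappa \gtreqless \mscoreex^{-m}_{K-\kappa+1}$. The only cosmetic difference is that you package the comparison as ``$F$ is maximized at the truthful $\wamount_m$'' via telescoping, whereas the paper writes out the three cases $\wamount_m=\bar\kappa$, $\wamount_m>\bar\kappa$, $\wamount_m<\bar\kappa$ explicitly; the underlying computation is identical.
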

		\begin{proposition}[Optimal Bitrate]\label{prop:multi-bitrate}
			For any number  of allocated segments $\kappa$ to bidder m, the \rev{optimal} bitrate vector $\brvector^\bidder_{\kappa}$ is the optimal solution $\brvector^{\star}$ of the following optimization problem:
			\begin{equation} \label{eq:optimal}
				\begin{aligned}
					& \underset{\brvector}{\text{maximize}}
					& & \Utility_{m,t}(\brvector) - s(\brvector) \\
					& \text{subject to}
					& & \br_{i}>0,  \quad i=1,...,\kappa,\\
					&&&\br_{i}=0, \quad i=\kappa+1,...,K,\\
					& \text{variables}
					& & {\br_i \in \R_m, \quad i=1,...,\kappa.}
				\end{aligned}
			\end{equation}
			The constraints restrict the allocated segment number to be $\kappa$.
		\end{proposition}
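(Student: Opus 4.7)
The plan is to build on Proposition~\ref{prop:multi-truthful} and reduce the optimal-bitrate problem to a VCG-style argument on the reported scores. First, I would substitute the truthful price rule $p_\kappa^m = U_{m,t}(\boldsymbol{r}^m_\kappa)$ from Proposition~\ref{prop:multi-truthful} into the score definition~\eqref{eq:score2}, so that the reported score for row $\kappa$ collapses to $\phi^m_\kappa \triangleq U_{m,t}(\boldsymbol{r}^m_\kappa) - s(\boldsymbol{r}^m_\kappa)$. I would then rewrite the Vickrey-score payment~\eqref{eq:payment} as $\pi_m = s(\boldsymbol{r}^m_{\tilde{\kappa}_m}) + \sum_{i=1}^{\tilde{\kappa}_m} \hat{S}^{-m}_{K-\tilde{\kappa}_m+i}$, and plug both into bidder $m$'s ex-post payoff. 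Straightforward cancellation gives
\[
P_m \;=\; \phi^m_{\tilde{\kappa}_m} \;-\; \sum_{i=1}^{\tilde{\kappa}_m} \hat{S}^{-m}_{K-\tilde{\kappa}_m+i},
\]
where the subtrahend depends only on rival bids.

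Since the subtrahend is independent of $\boldsymbol{R}^m$, conditional on being allocated exactly $\tilde{\kappa}_m=\kappa$ segments the payoff is strictly increasing in the row-$\kappa$ score $\phi^m_\kappa$. The feasibility constraints on row $\kappa$ of the matrix~\eqref{eq:muli-bitratematrix} (the first $\kappa$ entries lying in $\mathcal{R}_m$ and the remaining $K-\kappa$ entries set to zero) are decoupled across rows, so maximizing $\phi^m_\kappa$ over this row-$\kappa$ feasible set is exactly the optimization~\eqref{eq:optimal}. Its optimizer $\boldsymbol{r}^\star$ is therefore the row-$\kappa$ best response \emph{conditional} on the outcome $\tilde{\kappa}_m=\kappa$.

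The main obstacle I expect is to close the loop on the endogeneity of $\tilde{\kappa}_m$: changing $\boldsymbol{r}^m_\kappa$ alters $\phi^m_\kappa$ and hence both marginal scores $S^m_\kappa$ and $S^m_{\kappa+1}$, which in turn shifts which $\tilde{\kappa}_m$ the Vickrey-score rule ultimately selects. To handle this, I would add the constant $\sum_{i=1}^{K}\hat{S}^{-m}_i$ to $P_m$ and rewrite it as
\[
P_m \;=\; \Bigl(\phi^m_{\tilde{\kappa}_m} + \sum_{i=1}^{K-\tilde{\kappa}_m}\hat{S}^{-m}_i\Bigr) \;-\; \sum_{i=1}^{K}\hat{S}^{-m}_i,
\]
where the parenthesized quantity equals the total winning score that the Vickrey-score rule maximizes given the reported $\{\phi^m_\kappa\}$. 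Since weakly increasing any $\phi^m_\kappa$ via~\eqref{eq:optimal} weakly increases that maximum, solving~\eqref{eq:optimal} row-by-row weakly dominates any alternative bid, which is the standard VCG dominant-strategy argument specialized to our score space. The remaining housekeeping step is to verify that the row-wise maximizers jointly satisfy Assumption~\ref{ass:multi-score} (non-negative, non-increasing marginal scores); I would invoke the sufficient condition in Proposition~\ref{cnd:non-negative} to dispatch this.
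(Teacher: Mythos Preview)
Your approach is correct but differs from the paper's. The paper does \emph{not} first restrict to truthful prices via Proposition~\ref{prop:multi-truthful} and then run a VCG total-score argument. Instead it uses a direct interchange device that bypasses the endogeneity of $\tilde\kappa_m$ altogether: starting from an \emph{arbitrary} bid $(\bar{\boldsymbol R}^m,\bar{\boldsymbol p}^m)$, it constructs a new bid $(\tilde{\boldsymbol R}^m,\tilde{\boldsymbol p}^m)$ whose bitrate matrix is the row-wise optimizer of~\eqref{eq:optimal} and whose prices are chosen so that the \emph{marginal score vector is identical} to that of the original bid. Equal marginal scores force the same $\tilde\kappa_m$, and the payoff comparison then collapses to $U_{m,t}(\tilde{\boldsymbol r}^m_{\tilde\kappa_m})-s(\tilde{\boldsymbol r}^m_{\tilde\kappa_m})\ge U_{m,t}(\bar{\boldsymbol r}^m_{\tilde\kappa_m})-s(\bar{\boldsymbol r}^m_{\tilde\kappa_m})$, which holds by definition of~\eqref{eq:optimal}. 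This mirrors the single-object proof in Appendix~\ref{app:single-bitrate} and needs neither Proposition~\ref{prop:multi-truthful} nor Assumption~\ref{ass:multi-score} explicitly. Your route---expressing $P_m$ (up to a rival-only constant) as the Vickrey-score maximized total score and then arguing that raising each $\phi^m_\kappa$ raises that maximum---is a valid alternative that makes the VCG structure transparent, but it leans on Assumption~\ref{ass:multi-score} (so that the top-$K$ selection indeed maximizes the total score at the optimal bid) and on Proposition~\ref{prop:multi-truthful} as a preliminary reduction. In short: the paper freezes the allocation to avoid the endogeneity; you confront the endogeneity head-on with the VCG externality identity. Both work; the paper's is shorter and more self-contained, yours better explains \emph{why} the mechanism inherits VCG-style dominance.
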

		\subsection{{Efficiency}}\label{subsec:multi-efficient}
		
		In this section, we propose the efficient score function that maximizes the social welfare. 
		\begin{definition}[Multi-Object Efficient Score Function]\label{df:multi-efficient}
			An efficient score function is in the form of
			\begin{equation}\label{eq:efficient}
				\score(\boldsymbol{\br}, \pr) = \pr - C_{n,t}(\boldsymbol{\br}),
			\end{equation}
			where $C_{n,t}(\brvector)$ is the downloading cost of the auctioneer.
		\end{definition}
		
		
		
		If each bidder submits the bid based on the \rev{optimal} price in Proposition \ref{prop:multi-truthful} and the \rev{optimal} bitrate in Proposition \ref{prop:multi-bitrate},  we prove that the Vickrey-score auction with the efficient score function maximizes the social welfare.
		\begin{theorem}[Efficiency]\label{prop:multi-efficient}
			Under the \rev{optimal} bidding behavior specified in Propositions \ref{prop:multi-truthful} and \ref{prop:multi-bitrate}, the Vickrey-score auction with the efficient score function in \eqref{eq:efficient} implements the efficient mechanism that maximizes the social welfare.
		\end{theorem}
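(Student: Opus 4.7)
The plan is to show that, under the efficient score function $s(\brvector) = \Cost_{n,t}(\brvector)$ combined with the truthful pricing of Proposition \ref{prop:multi-truthful} and the optimal bitrate of Proposition \ref{prop:multi-bitrate}, the sum of the $K$ marginal scores selected by the Vickrey-score allocation rule equals exactly the social welfare of the auction. Maximizing the former then coincides with maximizing the latter.

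First I would substitute the optimal bids into the score. With $\pr^\bidder_\kappa = \Utility_{\bidder,t}(\brvector^\bidder_\kappa)$ and $\brvector^\bidder_\kappa$ solving \eqref{eq:optimal} with $s(\cdot) \equiv \Cost_{n,t}(\cdot)$, we obtain
\begin{equation}
\score(\brvector^\bidder_\kappa, \pr^\bidder_\kappa) \;=\; \max_{\brvector} \bigl\{ \Utility_{\bidder,t}(\brvector) - \Cost_{n,t}(\brvector) : r_i > 0,\, i \leq \kappa;\ r_i=0,\, i>\kappa \bigr\},
\end{equation}
i.e.\ the maximum welfare generated when bidder $\bidder$ is served with exactly $\kappa$ segments. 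Consequently the marginal score $\mscore^{\bidder}_{\kappa}$ in \eqref{eq:marginalscore} is exactly the marginal welfare gain from increasing bidder $\bidder$'s allocation from $\kappa-1$ to $\kappa$ segments; by Assumption \ref{ass:multi-score} these marginal welfare increments are non-negative and non-increasing in $\kappa$.

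Second, I would show that the Vickrey-score allocation---selecting the $K$ globally largest marginal scores---produces a feasible and consistent allocation. Because $\mscore^{\bidder}_{1} \geq \mscore^{\bidder}_{2} \geq \cdots$ for each bidder, whenever the $\kappa$-th marginal score of bidder $\bidder$ is among the top $K$, all his earlier marginal scores $\mscore^{\bidder}_{1},\dots,\mscore^{\bidder}_{\kappa-1}$ are automatically also among the top $K$. Hence each bidder $\bidder$ receives a contiguous prefix of $\wamount_\bidder$ segments, and by telescoping \eqref{eq:marginalscore}, $\sum_{\kappa=1}^{\wamount_\bidder} \mscore^{\bidder}_{\kappa} = \Utility_{\bidder,t}(\wbrupdate_\bidder) - \Cost_{n,t}(\wbrupdate_\bidder)$. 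Summing over $\bidder$ gives the social welfare of the outcome.

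Finally, I would invoke a greedy-selection argument to conclude optimality. The social-welfare maximization over all partitions $(\wamount_1,\dots,\wamount_M)$ with $\sum_\bidder \wamount_\bidder = K$ is equivalent to choosing $K$ elements from the multiset $\bigcup_\bidder \{\mscore^{\bidder}_{1}, \mscore^{\bidder}_{2}, \dots\}$ to maximize their sum, subject to the precedence constraint that $\mscore^{\bidder}_{\kappa}$ can be chosen only if $\mscore^{\bidder}_{\kappa-1}$ has been chosen. The monotonicity in Assumption \ref{ass:multi-score} ensures the unconstrained top-$K$ selection already satisfies every such precedence constraint, so the constrained optimum coincides with the unconstrained optimum, and the Vickrey-score allocation achieves it.

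The main obstacle is the last step: rigorously justifying that greedy selection of the top $K$ marginal scores is globally optimal. Everything hinges on the diminishing-returns property of Assumption \ref{ass:multi-score}; without it, one can construct examples where a non-greedy reallocation across bidders yields strictly higher welfare, and both the feasibility of the top-$K$ rule and its optimality would fail. The substitution and telescoping steps are routine, but the exchange-type argument that identifies the Vickrey-score outcome with the welfare-maximizing partition deserves careful handling.
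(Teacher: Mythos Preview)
Your proposal is correct and follows the same overall strategy as the paper: substitute the truthful price and optimal bitrate into the efficient score to identify $\score(\brvector^\bidder_\kappa,\pr^\bidder_\kappa)$ with the maximum welfare achievable when bidder $\bidder$ receives $\kappa$ segments, and then argue that the Vickrey-score allocation picks the welfare-maximizing partition $(\wamount_1,\dots,\wamount_M)$.

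The difference is one of completeness rather than method. The paper's proof simply asserts that the auctioneer ``chooses the allocation set $\boldsymbol{\sigma}^*=\arg\max_{\boldsymbol{\sigma}}\sum_m \phi^m_{\kappa_m}$'' without explaining why the top-$K$ marginal-score rule actually realizes this argmax. You fill that gap: you make the telescoping identity $\sum_{\kappa\le\wamount_\bidder}\mscore^\bidder_\kappa=\Utility_{\bidder,t}(\wbrupdate_\bidder)-\Cost_{n,t}(\wbrupdate_\bidder)$ explicit, use the monotonicity in Assumption~\ref{ass:multi-score} to guarantee that the unconstrained top-$K$ selection respects the precedence constraints (hence is feasible), and then observe that the constrained and unconstrained optima coincide. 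This is precisely the missing hinge the paper's two-line argument leaves implicit; your concern that ``everything hinges on Assumption~\ref{ass:multi-score}'' is exactly right, and the paper relies on it just as much, only silently.
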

		\begin{proof}
				\revk{In the Vickrey-score auction with an efficient score function, when bidding according to Proposition \ref{prop:multi-truthful} and \ref{prop:multi-bitrate}, any bidder $m$'s bid will induce a score $\phi^{m,n}_{\kappa}$ for  being allocated $\kappa$ segments, i.e., 
				$\phi^{m,n}_{\kappa}=\max_{\boldsymbol{r}_{\kappa}}\left( U_{m,t}(\boldsymbol{r}_{\kappa})-C_{n,t}(\boldsymbol{r}_{\kappa})\right) 
				$, where $\boldsymbol{r}_{\kappa}$ denotes the bitrate vector that satisfies the constraint of  $\kappa$ segments. Here,  $\phi^{m,n}_{\kappa}$ is essentially the maximum welfare that can be generated through the downloading by auctioneer $n$ for bidder $m$ under a particular segment number  $\kappa$.
				Let $\boldsymbol{\sigma}=\{{\kappa}_1,{\kappa}_2,...,{\kappa}_M\}$ denote an allocation set, where ${\kappa}_m$ is the number of segments  allocated to bidder $m$. In the Vickrey-score auction,  the auctioneer chooses the allocation set 
				$\boldsymbol{\sigma}^* = \arg \max_{\boldsymbol{\sigma}}\sum_{m=1}^{M} \phi^{m}_{{\kappa}_m} $, i.e., picking the set of allocation that maximizes the welfare generated between aunctioneer $n$ and bidders, that is, the social welfare.}
		\end{proof}
		{Finally, we comment on the applicability of the proposed Vickrey-score auction in existing video streaming systems, where the bitrate adaptation method has been specified. In this case, if each bidder chooses the bidding price according to  Proposition \ref{prop:multi-truthful} and use an existing bitrate adaptation method (e.g.\cite{ABR-single-Spiteri16,ABR-single-Huang15,ABR-single-Li14,ABR-single-Zhou16,ABR-single-Hao14,ABR-single-Yin15}),  the Vickrey-score auction with the  efficient score function is conditionally  efficient.
			\begin{corollary}[Conditional efficiency]\label{prp:subefficient}
				Given any fixed  bitrate $\brmatrix^\bidder$  for bidder $m$, Vickrey-score auction with the efficient score function maximizes the social welfare under the fixed  bitrates.
			\end{corollary}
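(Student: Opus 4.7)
The plan is to adapt the proof of Theorem \ref{prop:multi-efficient} by removing the inner maximization over bitrates, since the bitrate matrix $\brmatrix^\bidder$ is now externally specified rather than chosen via Proposition \ref{prop:multi-bitrate}. First I would verify that Proposition \ref{prop:multi-truthful} still applies when the bitrates are not selected optimally. Since the argument used to establish truthfulness only leverages the Vickrey-style payment rule \eqref{eq:payment} and is independent of how the bitrates themselves were determined, each bidder $\bidder$ still optimally reports $\pr_\kappa^\bidder = \Utility_{\bidder,t}(\brvector_\kappa^\bidder)$ for every $\kappa = 1,\dots,\objamount$ under any fixed $\brmatrix^\bidder$.

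Given this truthful pricing and the efficient score function \eqref{eq:efficient}, the score that bidder $\bidder$ induces when allocated $\kappa$ segments is
\begin{equation}
\score(\brvector_\kappa^\bidder, \pr_\kappa^\bidder) = \Utility_{\bidder,t}(\brvector_\kappa^\bidder) - \Cost_{\auctioneer,t}(\brvector_\kappa^\bidder),
\end{equation}
which equals the welfare $\SW_{\auctioneer\bidder,t}(\brvector_\kappa^\bidder)$ generated when auctioneer $\auctioneer$ serves bidder $\bidder$ with the fixed bitrate vector $\brvector_\kappa^\bidder$. The Vickrey-score allocation rule picks the allocation $\winsetupdate = \{\wamount_1,\dots,\wamount_M\}$ (with $\sum_\bidder \wamount_\bidder = \objamount$) that maximizes $\sum_\bidder \score(\brvector_{\wamount_\bidder}^\bidder, \pr_{\wamount_\bidder}^\bidder)$; under the fixed bitrates this sum is exactly the aggregate welfare generated across all downloading operations initiated by $\auctioneer$, so it is maximized by the mechanism's chosen allocation.

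The main subtlety I expect is justifying the equivalence between the mechanism's greedy ``top-$\objamount$ marginal scores'' selection and the argmax over feasible allocations $\winsetupdate$. In Theorem \ref{prop:multi-efficient}, this equivalence follows from Assumption \ref{ass:multi-score}: non-negative, non-increasing marginal scores guarantee that greedily picking the $\objamount$ largest marginal scores coincides with maximizing the sum of per-bidder scores. For this corollary I would either (i) verify that the externally specified bitrate rule still induces marginal score sequences satisfying Assumption \ref{ass:multi-score}, using an argument analogous to the sufficient condition discussed in Section \ref{subsec:multi-condition}, or (ii) explicitly restrict Corollary \ref{prp:subefficient} to such bitrate adaptation methods. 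Once this compatibility is in place, the efficiency claim is a direct consequence of the two paragraphs above and the proof is complete.
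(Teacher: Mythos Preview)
Your proposal is correct and follows essentially the same approach the paper indicates: the paper states that the proof ``is similar as the proof of Theorem \ref{prop:multi-efficient} and hence is omitted,'' and your plan---drop the inner maximization over bitrates, retain Proposition \ref{prop:multi-truthful}, and observe that scores equal welfare under the efficient score function---is precisely that adaptation. Your discussion of the Assumption \ref{ass:multi-score} subtlety is actually more careful than the paper's own treatment, which glosses over this point in both Theorem \ref{prop:multi-efficient} and the corollary.
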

		
			The proof of Corollary \ref{prp:subefficient} is similar as the proof of Theorem \ref{prop:multi-efficient} and hence is omitted.

\subsection{Conditions \rev{for Satisfying Assumption \ref{ass:multi-score}}}\label{subsec:multi-condition}
By now we have proved several desirable properties of the  Vickrey-score auction under Assumption \ref{ass:multi-score}. In this section, we will specify sufficient conditions, under which Assumption \ref{ass:multi-score} is satisfied. As an example, we will focus on  the efficient score function in \eqref{eq:efficient} in the rest of the discussions. Our discussions can also be generalized to other choices of score functions.
						

					
The rest of this subsection is divided into two parts. First, we prove some additional properties of a bidder's optimal bitrate matrix. Next, we  characterize sufficient conditions of the cost function $\Cost_{n,t}(\cdot)$ and the utility function $\Utility_{m,t}(\cdot)$ (defined in \ref{subsec:model-use}) in Proposition \ref{cnd:non-negative},  under  which  Assumption \ref{ass:multi-score} is satisfied.

Starting from  Proposition \ref{prop:multi-bitrate}, we prove that a bidder's optimal bitrate matrix has two features, as shown in Lemma \ref{lem:bitrate-p1} and \ref{lem:bitrate-p2}. Note that both lemmas are based on the efficient score function in \eqref{eq:efficient}, where the optimal bitrate vector in each row $\kappa$ is given:
\begin{equation}\boldsymbol{r}_{\kappa}^m = \arg\max_{\boldsymbol{r}_{\kappa}} \left(\Utility_{m,t}(\brvector_{\kappa}) - \Cost_{n,t}(\boldsymbol{\br}_{\kappa})\right).
\end{equation}Here, $\boldsymbol{r}_{\kappa}=\{r_{\kappa 1}, r_{\kappa 2},...,r_{\kappa\kappa}\}$ denotes the vector with $\kappa$ non-zero elements, and  $U_{m,t}(\boldsymbol{r}_{\kappa})= \Valueq_{m,t}(\boldsymbol{r}_{\kappa}) + \Valueb_{m,t}({\kappa})- \Lossdeg_{m,t}(\boldsymbol{r}_{\kappa},\prebr_{m,t})$. For presentation convenience, we define a function $g_{mn,t}(r) = \valueq_{m,t}(r) - c_{n,t}(r)$. Since the value of $\Valueb_{m,t}(\kappa,\curbuf_{m,t})$  depends on segment number $\kappa$ but not the value of $\boldsymbol{r}_{\kappa}$, the optimal vector $\boldsymbol{r}_{\kappa}$ can also be represented as:
\begin{equation}\label{eq:cnd-bitrate-g}	\boldsymbol{r}_{\kappa}^m=\arg\max_{\boldsymbol{r}_{\kappa}} \left(\sum_{i=1}^{\kappa } g_{mn,t}(r_{\kappa i})  -\Lossdeg_{m,t}(\boldsymbol{r}_{\kappa},\prebr_{m,t})\right).
\end{equation}
Any bidder's optimal bitrate matrix has the following features:
\begin{lemma}[\revh{Identical Bitrate}] \label{lem:bitrate-p1}
Under the  efficient score function in \eqref{eq:efficient}, any bidder $\bidder$'s optimal bitrate matrix $\brmatrix^m$ satisfies that in any row $\kappa$, the non-zero bitrate elements $r_{\kappa i}^{m}$ ($i\leq \kappa$)  are identical, hence can be written as $r_{\kappa i}^\bidder = r_{\kappa}^\bidder, ~ \forall  i\leq \kappa.$
\end{lemma}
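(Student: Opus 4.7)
The plan is to show that, for any candidate $\brvector_\kappa$ with non-uniform positive entries, there is a uniform vector $\brvector_\kappa'$ whose value in \eqref{eq:cnd-bitrate-g} is at least as large, which exhibits an optimizer of the claimed form. First I would reduce the objective: for fixed $\segamount$ the buffer gain $\Valueb_m(\segamount, \curbuf_{m,t})$ is constant, so by \eqref{eq:cnd-bitrate-g} the task is to maximize $\sum_{i=1}^{\segamount} g_{mn,t}(r_{\kappa i}) - \Lossdeg_m(\brvector_\kappa, \prebr_{m,t})$. Writing the positive branch of the loss as $\llossdeg_m(\Delta r) = \alpha \Delta r$ with $\alpha \geq 0$ (valid by linearity), the degradation loss equals $\alpha$ times the total downward movement along the path $\prebr_{m,t}, r_{\kappa 1}, \ldots, r_{\kappa \segamount}$.

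The crux is a sharp lower bound on the loss in terms of $r_{\min} := \min_i r_{\kappa i}$. Picking any index $i^\star$ with $r_{\kappa i^\star} = r_{\min}$, the prefix $\prebr_{m,t}, r_{\kappa 1}, \ldots, r_{\kappa i^\star}$ has telescoping net change $r_{\min} - \prebr_{m,t}$, so the cumulative downward movement on this prefix is already at least $(\prebr_{m,t} - r_{\min})_{+}$ since the upward contributions are non-negative; the suffix can only add more downward movement. Hence
\begin{equation*}
\Lossdeg_m(\brvector_\kappa, \prebr_{m,t}) \;\geq\; \alpha \max\!\bigl(0,\, \prebr_{m,t} - r_{\min}\bigr).
\end{equation*}

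Next I construct the uniform competitor. Let $r^\star := \arg\max_{r \in \R_m,\, r \geq r_{\min}} g_{mn,t}(r)$ and set $\brvector_\kappa' = (r^\star, \ldots, r^\star)$. Every entry $r_{\kappa i} \geq r_{\min}$ satisfies $g_{mn,t}(r_{\kappa i}) \leq g_{mn,t}(r^\star)$, so $\sum_i g_{mn,t}(r_{\kappa i}) \leq \segamount\, g_{mn,t}(r^\star)$. The loss of $\brvector_\kappa'$ telescopes to $\alpha \max(0, \prebr_{m,t} - r^\star)$, which is no larger than $\alpha \max(0, \prebr_{m,t} - r_{\min})$ since $r^\star \geq r_{\min}$. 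Combining with the bound above, $\brvector_\kappa'$ attains an objective at least as large as $\brvector_\kappa$, so an optimal row can be taken with all non-zero entries equal, i.e.\ $r_{\kappa i}^m = r_\kappa^m := r^\star$ for all $i \leq \kappa$.

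The main obstacle is the path-based lower bound on $\Lossdeg_m$: one must argue that, regardless of the \emph{ordering} of the $r_{\kappa i}$ in the row, the total downward movement cannot fall short of $(\prebr_{m,t} - r_{\min})_{+}$; this is delicate because the loss depends on consecutive differences, not merely on the multiset of values. Once this bound is secured, the separability of $g_{mn,t}$ and the monotonicity of $r \mapsto (\prebr_{m,t} - r)_{+}$ point in the same direction and make the comparison with the uniform competitor essentially immediate---no convexity of $g_{mn,t}$ beyond what is already assumed in Section~\ref{subsec:model-user-utility} is required.
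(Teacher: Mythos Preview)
Your argument is correct and, in fact, cleaner than the route the paper takes. The paper proceeds in two stages: it first argues that the non-zero entries of an optimal $\brvector_\kappa^m$ can be taken in ascending order (so that the only possible degradation is $\prebr_{m,t}\to r_{\kappa 1}$), and then, writing $r^\ast=\arg\max_{r\in\R_m} g_{mn,t}(r)$, it splits into the cases $\prebr_{m,t}<r^\ast$ and $\prebr_{m,t}\ge r^\ast$; in the second case it invokes a first-order condition on $r_{\kappa 1}$ together with the concavity of $g_{mn,t}$ to force $r_{\kappa 1}^m\ge r^\ast$ and then to push all later coordinates down to $r_{\kappa 1}^m$. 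Your construction bypasses both the ascending-order reduction and the case split: the telescoping path bound $\Lossdeg_m(\brvector_\kappa,\prebr_{m,t})\ge \alpha\,(\prebr_{m,t}-r_{\min})_{+}$ holds for \emph{any} ordering of the row, and comparing against the uniform vector at $r^\star=\arg\max_{r\in\R_m,\,r\ge r_{\min}} g_{mn,t}(r)$ finishes the job without ever appealing to the concavity of $g_{mn,t}$ or to derivative conditions (which are in any case somewhat informal for the discrete set $\R_m$). The trade-off is that the paper's approach is more constructive---it pins down the optimal common value as either $r^\ast$ or the root of a first-order equation, which feeds directly into the proof of Lemma~\ref{lem:bitrate-p2}---whereas your argument only exhibits the existence of a uniform optimizer; but for the statement of Lemma~\ref{lem:bitrate-p1} itself, existence is all that is claimed.
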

\revh{The detailed proof is given in Appendix \ref{app:identical}, and here are the intuitions.  First, in \eqref{eq:cnd-bitrate-g}, the order of the non-zero elements in vector $\boldsymbol{r}_{\kappa}^m$ only affects function $ \Lossdeg_{m,t}(\boldsymbol{r},\prebr_{m,t})$, which is minimized when the elements are in the ascending order. Hence, the non-zero elements in the optimal vector $\brvector^m_{\kappa}$ has be in the ascending order. This means that the  bitrate degradation may only happen at the first segment, i.e.,
\begin{equation}\label{eq:cnd-bitrate-g3}\boldsymbol{r}_{\kappa}^m =\arg \max_{\boldsymbol{r}_{\kappa}}\left(\sum_{i=1}^{\kappa} g_{mn,t}(r_{\kappa i}) -\lossdeg(\prebr_{m,t},r_{{\kappa}1})\right).\end{equation} Second, there always exists an optimal bitrate (denoted by $r^*$) that maximizes $g_{mn,t}(r)$. If $\prebr_{m,t}\leq r^*$, then $r_{\kappa i}^m=r^*$ for all $i=1,2,...,\kappa$. If $\prebr_{m,t}\geq r^*$, \revi{we can obtain $r_{\kappa 1}^m\geq r^*$ by checking the partial derivate of the objective function in \eqref{eq:cnd-bitrate-g3} with respect to $r_{\kappa 1}$}. Moreover, \revi{the concave function $g_{mn,t}(r)$ is non-increasing with $r$ for $r\geq r_{{\kappa}1}^m \geq r^*$}, so \revi{$g_{mn,t}(r_{{\kappa}2}^m),...,g_{mn,t}(r_{{\kappa}\kappa}^m)$ are maximized when bitrates $r_{\kappa2},...,r_{\kappa\kappa}$ are minimized under the constraint that $r_{{\kappa}1}^m\leq r_{{\kappa}2}^m\leq... \leq r_{{\kappa}{\kappa}}^m$}, which implies $r_{{\kappa}1}^m =r_{\kappa 2}^m=... = r_{{\kappa}{\kappa}}^m $.}
\begin{lemma}[\revh{Non-Increasing Bitrate}] \label{lem:bitrate-p2}
Under the efficient score function in \eqref{eq:efficient}, any bidder $\bidder$'s optimal bitrate matrix $\brmatrix^m$ satisfies that the bitrate $r_{\kappa}^m$ for row $\kappa$ defined in Lemma \ref{lem:bitrate-p1} is non-increasing in the row index  $\kappa$: $r_{\kappa}^\bidder \geq r_{\kappa+1}^\bidder$ for all $\kappa = 1,2,...,K-1.$
\end{lemma}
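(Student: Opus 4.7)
The plan is to use Lemma \ref{lem:bitrate-p1} to collapse the vector optimization \eqref{eq:cnd-bitrate-g} into a single-variable problem, and then apply a monotone comparative statics argument across $\kappa$. By Lemma \ref{lem:bitrate-p1}, for each row $\kappa$ the optimal $\boldsymbol{r}_\kappa^m$ has $\kappa$ identical non-zero entries equal to $r_\kappa^m$, and the degradation loss is incurred at most at the first segment. Hence \eqref{eq:cnd-bitrate-g} reduces to
$r_\kappa^m \in \arg\max_{r\in\R_m} F_\kappa(r)$, where $F_\kappa(r) \triangleq \kappa\, g_{mn,t}(r) - \lossdeg_m(\prebr_{m,t}, r)$. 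I would first dispose of the trivial case $\prebr_{m,t}\leq r^\ast$ (with $r^\ast \in \arg\max_{r\in\R_m} g_{mn,t}(r)$), in which $\lossdeg_m(\prebr_{m,t}, r^\ast)=0$ and $r_\kappa^m=r^\ast$ for every $\kappa$, so the sequence is constant.

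In the nontrivial case $\prebr_{m,t}>r^\ast$, I would argue by contradiction. Suppose $r_1 \triangleq r_\kappa^m < r_{\kappa+1}^m \triangleq r_2$. The two optimality conditions
$F_\kappa(r_1)\geq F_\kappa(r_2)$ and $F_{\kappa+1}(r_2)\geq F_{\kappa+1}(r_1)$, summed, give $(g_{mn,t}(r_1)-g_{mn,t}(r_2))\leq 0$. On the other hand, because $\lossdeg_m(\prebr_{m,t},\cdot)$ is non-increasing in its second argument (degradation loss grows as the new bitrate shrinks relative to $\prebr_{m,t}$), $r_1 < r_2$ forces $\lossdeg_m(\prebr_{m,t},r_1) - \lossdeg_m(\prebr_{m,t},r_2) \geq 0$. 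Plugging this into $F_\kappa(r_1)\geq F_\kappa(r_2)$ yields $\kappa(g_{mn,t}(r_1)-g_{mn,t}(r_2)) \geq 0$, so $g_{mn,t}(r_1)=g_{mn,t}(r_2)$ and, consequently, $\lossdeg_m(\prebr_{m,t},r_1)=\lossdeg_m(\prebr_{m,t},r_2)$. Thus $r_1$ and $r_2$ both belong to $\arg\max F_\kappa$ and to $\arg\max F_{\kappa+1}$, and any consistent tie-breaking rule (e.g., always choose the smallest optimal bitrate) may be used to select a monotone sequence $r_\kappa^m\geq r_{\kappa+1}^m$, completing the argument.

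The main obstacle is the discreteness of $\R_m$: in a continuous relaxation the claim is immediate from the first-order condition $\kappa\, g_{mn,t}'(r) = [\lossdeg_m(\prebr_{m,t},\cdot)]'(r)$, which, combined with concavity of $g_{mn,t}$ and the fact that $g_{mn,t}'(r)\leq 0$ on $[r^\ast,\prebr_{m,t}]$, shows that the interior optimizer strictly decreases in $\kappa$. Over a finite bitrate grid, strict decrease can fail only through ties, so the crux of the written proof is really the bookkeeping that handles these ties via the tie-breaking convention outlined above; the rest of the argument is straightforward from the summed optimality inequalities and the monotonicity of $g_{mn,t}$ and $\lossdeg_m$.
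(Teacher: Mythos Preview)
Your argument is correct and follows the same overall strategy as the paper: reduce via Lemma~\ref{lem:bitrate-p1} to the scalar problem $r_\kappa^m\in\arg\max_r F_\kappa(r)$ with $F_\kappa(r)=\kappa\,g_{mn,t}(r)-\lossdeg_m(\prebr_{m,t},r)$, dispose of the case $\prebr_{m,t}\le r^\ast$ trivially, and derive a contradiction in the remaining case from the optimality inequalities at $\kappa$ and $\kappa+1$.

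Where you differ from the paper is in the mechanics of the contradiction. The paper uses the bound $r_\kappa^m\ge r^\ast$ (established inside the proof of Lemma~\ref{lem:bitrate-p1}) together with concavity of $g_{mn,t}$ to conclude $g_{mn,t}(r_\kappa^m)\ge g_{mn,t}(r_{\kappa+1}^m)$, and then adds this to $F_\kappa(r_\kappa^m)\ge F_\kappa(r_{\kappa+1}^m)$ to obtain $F_{\kappa+1}(r_\kappa^m)\ge F_{\kappa+1}(r_{\kappa+1}^m)$. You instead sum the two optimality inequalities to get $g_{mn,t}(r_1)\le g_{mn,t}(r_2)$ directly, and then combine with the monotonicity of $\lossdeg_m(\prebr_{m,t},\cdot)$ to force equality. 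Your route is slightly more self-contained (it does not need to import the location bound $r_\kappa^m\ge r^\ast$) and is more careful about the discrete bitrate grid: you explicitly reduce the failure mode to a tie and invoke a consistent tie-breaking rule, whereas the paper simply asserts a strict inequality at the contradiction step without isolating the tie case. Both approaches yield the lemma; yours is a cleaner monotone-comparative-statics version of the same idea.
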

\revh{According to Lemma \ref{lem:bitrate-p1}, the optimal common non-zero bitrate $r_{{\kappa}}^m$ for each row $\kappa$ is derived as follows: \begin{equation}\label{eq:lemma-nonincreasing} r_{{\kappa}}^m=\arg\max_r \left({\kappa}\cdot g_{mn,t}(r) -\lossdeg(\prebr_{m,t},r)\right).\end{equation} Intuitively, if $\prebr_{m,t}\leq r^*$, $r_{i}^m=r^*$ for all $i=1,2,...,\kappa$. If $\prebr_{m,t}\geq r^*$, as $\kappa$ increases, the impact of $\kappa\cdot g_{mn,t}(r)$ on the optimization problem increases, so $r_{\kappa}^m$ gradually decreases in $\kappa$ to approach $r^*$. The detailed proof is in Appendix \ref{app:non-increasing}. }
						
						
Based on Lemma  \ref{lem:bitrate-p1} and \ref{lem:bitrate-p2}, we show the sufficient conditions of $\Cost_{n,t}(\cdot)$ and  $\Utility_{m,t}(\cdot)$ for satisfying Assumption \ref{ass:multi-score}.
						
\begin{proposition}[Sufficient Conditions for Assumption 1]\label{cnd:non-negative}
		{The marginal scores are  non-negative for all $ m, n,t$, if}
\begin{equation}\label{eq:non-negative}
		v^{\textsc{q}}_m(r,\theta)  \geq c_{n,t}(r), ~\forall r, \theta.
\end{equation}
{The marginal scores are  non-increasing \revh{in $\kappa$ (i.e., the number of allocated segments)} for all  $ m,n,t$, if}
\begin{equation}\label{eq:non-increasing}
\begin{array}{r}
2K\cdot c_{n,t}( R_m^Z) + \lossdeg_m( R_m^Z,0)\leq |\tilde{\Delta}|,
\end{array}
\end{equation}
\revk{where $\tilde{\Delta}$ is the minimum value that satisifies} 
\begin{equation} 
\revk{0>\tilde{\Delta} \geq \Delta(\segamount+1,\curbuf_{m,t})- \Delta(\segamount,\curbuf_{m,t}), ~ \forall \segamount, \curbuf_{m,t}.}
\end{equation}
\end{proposition}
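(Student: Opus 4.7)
The plan is to first leverage Lemmas \ref{lem:bitrate-p1} and \ref{lem:bitrate-p2} to collapse the optimal-bid score into a tractable form, and then treat non-negativity and non-increasingness separately. By Lemma \ref{lem:bitrate-p1}, all non-zero entries of row $\kappa$ of the optimal bitrate matrix equal a common value $r^m_\kappa$, so at most the first allocated segment incurs a quality-degradation loss. Substituting the truthful price from Proposition \ref{prop:multi-truthful} into the efficient score \eqref{eq:efficient}, the maximum score when bidder $m$ is allocated $\kappa$ segments simplifies to
\[
\phi^{m,n}_\kappa = \kappa\, g_{mn,t}(r^m_\kappa) + \Valueb(\kappa, \curbuf_{m,t}) - \lossdeg_m(\prebr_{m,t}, r^m_\kappa),
\]
where $g_{mn,t}(r) = \valueq_m(r, \va_{m,t}) - c_{n,t}(r)$ as in \eqref{eq:cnd-bitrate-g}. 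Abbreviating $f_\kappa(r) \eq \kappa g_{mn,t}(r) - \lossdeg_m(\prebr_{m,t}, r)$ and $h(\kappa) \eq \max_r f_\kappa(r)$, we can write $\phi^{m,n}_\kappa = h(\kappa) + \Valueb(\kappa, \curbuf_{m,t})$ and the marginal score as $\mscore^m_\kappa = \phi^{m,n}_\kappa - \phi^{m,n}_{\kappa-1}$.

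For non-negativity, I would exploit the optimality $f_\kappa(r^m_\kappa) \geq f_\kappa(r^m_{\kappa-1})$ guaranteed by Proposition \ref{prop:multi-bitrate}, together with the identity $f_\kappa(r^m_{\kappa-1}) - f_{\kappa-1}(r^m_{\kappa-1}) = g_{mn,t}(r^m_{\kappa-1})$, to obtain
\[
\mscore^m_\kappa \geq g_{mn,t}(r^m_{\kappa-1}) + \Delta(\kappa-1, \curbuf_{m,t}).
\]
Condition \eqref{eq:non-negative} forces $g_{mn,t} \geq 0$ and \eqref{eq:buffer-Delta} forces $\Delta \geq 0$, which immediately yields $\mscore^m_\kappa \geq 0$.

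For non-increasingness, I would decompose
\[
\mscore^m_\kappa - \mscore^m_{\kappa+1} = \big[\Delta(\kappa-1, \curbuf_{m,t}) - \Delta(\kappa, \curbuf_{m,t})\big] - \big[h(\kappa+1) + h(\kappa-1) - 2 h(\kappa)\big].
\]
The first bracket is at least $|\tilde{\Delta}|$ by the definition of $\tilde{\Delta}$. Because $h$ is a pointwise maximum of functions that are affine in $\kappa$, it is convex, so the second bracket is non-negative and needs to be bounded from above. I would use the envelope inequalities $g_{mn,t}(r^m_\kappa) \leq h(\kappa+1) - h(\kappa) \leq g_{mn,t}(r^m_{\kappa+1})$ (obtained by testing $r^m_\kappa$ inside $f_{\kappa+1}$ and $r^m_{\kappa+1}$ inside $f_\kappa$) together with the crude bounds $0 \leq g_{mn,t}(r) \leq \valueq_m(R_m^Z, \va_{m,t})$ and $\lossdeg_m(\prebr_{m,t}, r) \leq \lossdeg_m(R_m^Z, 0)$. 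Aggregating over the $K$ index positions involved produces a worst-case estimate of the form $2K\, c_{n,t}(R_m^Z) + \lossdeg_m(R_m^Z, 0)$, which matches \eqref{eq:non-increasing}; under this condition, the buffer gap dominates the convex curvature of $h$ and $\mscore^m_\kappa \geq \mscore^m_{\kappa+1}$.

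The main obstacle is producing a quantitative upper bound on the second difference of $h(\kappa)$ that is uniform in $\kappa$ and in the realized bitrate sequence $\{r^m_\kappa\}$. Although Lemma \ref{lem:bitrate-p2} supplies the monotonicity $r^m_{\kappa+1} \leq r^m_{\kappa-1}$, the concave shape of $g_{mn,t}$ with an interior maximizer $r^{\star}$ and the jump behavior of $\lossdeg_m(\prebr_{m,t}, \cdot)$ at $\prebr_{m,t}$ force a case split on whether $\prebr_{m,t}$ sits above or below $r^{\star}$. Each case has to be controlled by the worst-case ingredients $c_{n,t}(R_m^Z)$ and $\lossdeg_m(R_m^Z, 0)$, which explains why the sufficient condition in \eqref{eq:non-increasing}, although adequate, is noticeably loose.
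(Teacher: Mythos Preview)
Your proposal follows essentially the same skeleton as the paper: collapse the optimal-bid score via Lemmas~\ref{lem:bitrate-p1}--\ref{lem:bitrate-p2} to $\phi^{m,n}_\kappa = h(\kappa)+\Valueb(\kappa,\curbuf_{m,t})$ with $h(\kappa)=\max_r\bigl(\kappa\,g_{mn,t}(r)-\lossdeg_m(\prebr_{m,t},r)\bigr)$, then prove non-negativity from $g_{mn,t}\ge 0$ plus optimality, and non-increasingness by bounding the second difference of $h$ against the buffer-gap $|\tilde\Delta|$. That matches the paper's Appendix proof.

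The one tactical difference is in the second-difference bound. The paper proceeds by a direct case split on whether $\prebr_{m,t}<r^\star$ (trivial case, all $r^m_\kappa=r^\star$) or $\prebr_{m,t}\ge r^\star$, and in the latter case brute-forces each term using $c_{n,t}(r^m_\kappa)-c_{n,t}(r^m_{\kappa-1})\le c_{n,t}(R^Z_m)$ and $\lossdeg_m(\prebr_{m,t},r^m_\kappa)\le\lossdeg_m(R^Z_m,0)$, which is where the $2K\,c_{n,t}(R^Z_m)+\lossdeg_m(R^Z_m,0)$ expression comes from. Your convexity/envelope route is cleaner: the inequalities $g_{mn,t}(r^m_\kappa)\le h(\kappa{+}1)-h(\kappa)\le g_{mn,t}(r^m_{\kappa+1})$ immediately give $h(\kappa{+}1)-2h(\kappa)+h(\kappa{-}1)\le g_{mn,t}(r^m_{\kappa+1})-g_{mn,t}(r^m_\kappa)\le c_{n,t}(r^m_\kappa)-c_{n,t}(r^m_{\kappa+1})\le c_{n,t}(R^Z_m)$, using Lemma~\ref{lem:bitrate-p2} and monotonicity of $v^{\textsc q}_m,c_{n,t}$. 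Note this is actually \emph{tighter} than the paper's bound --- no $K$ factor and no $\lossdeg_m$ term --- so your sentence ``Aggregating over the $K$ index positions \ldots'' is not the right justification; there is nothing to aggregate, and the stated condition \eqref{eq:non-increasing} then follows a fortiori from the sharper one. Apart from that loose phrasing in the last step, the argument is sound.
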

Intuitively, to satisfy  Assumption \ref{ass:multi-score}, the video quality gain of each allocated segment should be no less than the downloading cost of that segment to ensure the non-negative marginal scores, and the buffer filling gain should be concave enough (i.e., $|\tilde{\Delta}|$ should be large enough) to ensure the non-increasing marginal scores. The detailed proof is shown in Appendix \ref{app:condition}. 

\section{Mechanism Modification}\label{sec: modify}
	
				\begin{figure}[t]
		\center
		\includegraphics[height=3.3cm]{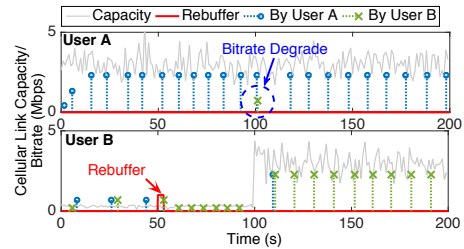}\\
		\caption{\revk{The Video Streaming of Users A and B under Mechanism \ref{mech:Vickrey-score}.}}\label{fig-modify:before}
	\end{figure}	
	\begin{figure}[t]
		\center
		\includegraphics[height=3.3cm]{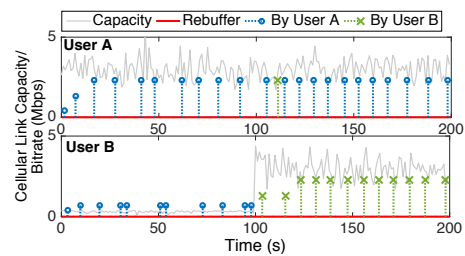}\\
		\caption{\revk{The Video Streaming of Users A and B under Mechanism \ref{mech:modification}.}}\label{fig-modify:after}
	\end{figure}
		
		%
		In Sections \ref{sec:single} and \ref{sec:multiple}, we proposed two auction-based incentive mechanisms for single segment and multiple segments downloading, respectively. By implementing the  efficient score function, the  mechanisms can maximize the social welfare in each auction. 
		However, since the social welfare maximization is performed in each auction independently, the long-term social welfare across multiple rounds of auctions may not necessarily achieve the maximum in some cases.  
		
		One  scenario worth considering is where the link capacities of some users are substantially poorer than others. Hence utilizing the downloading opportunities of these users might actually hurt the overall performance.  Figure \ref{fig-modify:before} shows the video scheduling processes of such a scenario with two users: user A and user B. Here $x$-axis corresponds to the video streaming time horizon (of 200 seconds), and $y$-axis corresponds to cellular network capacity (for the gray continuous curves). User  A has an average capacity of 3Mpbs along the whole streaming interval (200 seconds), while user B has an average capacity of 0.3Mbps during the first 100 seconds and an average capacity of 3Mbps during the latter 100 seconds. \rev{The stems with circles and crosses  are the segments that are downloaded by user A and user B, respectively, and the heights of these stems represent the corresponding segment bitrates.} \rev{Note that the cellular link capacities and the bitrates are measured in the same unit of Mbps.} Available bitrate set is $\{0.2, 0.4, 0.7, 1.3, 2.3\}$Mbps.
		
		With the proposed auction mechanism, as shown in Figure \ref{fig-modify:before}, two unexpected results happen due to the low capacity of user B during the first $100$ seconds: i) bitrate degradation; ii) rebuffer. For example, although user A achieves  a video bitrate of $2.3$Mbps \rev{most of the time}, a bitrate degradation  to $0.7$Mbps happens at about second 100 when user B downloads for user A. The reason is that \rev{user B has a quite low link capacity, so users A chooses a lower bitrate (when asking user B to help downloading) to avoid rebuffer.} Similar situation happens when user B downloads for himself at about   second $60$. Moreover, as user B partially relies on the downloading by himself during the first 100 seconds, he experiences rebuffer at second 50. \revk{ The rebuffer  continues until the corresponding segment has been downloaded at the end of second 53.}

		The observation in Figure \ref{fig-modify:before} motivates us to modify our proposed mechanism to \rev{increase the long-term social welfare by avoiding unexpected bitrate degradation and rebuffer}. The basic idea is that any bidder $m$ can ``skip" the available network downloading  resources from an auctioneer $n$  by refraining from  bidding if both of the following conditions are satisfied: (i) the link capacity of auctioneer $n$ is  low so that the downloading (by auctioneer $n$) for user $m$ will result in rebuffer; (ii) the link capacity of auctioneer $n$ is lower than the downloading capacity that allocates to user $m$, which is the sum of the capacities that each of user $m$'s encountered users  allocates to user $m$ (under the assumption that user $m$'s encountered user $i\in\N_m$ equally allocates his capacity to his encountered users $\N_i$). 
		Mathematically, \rev{We  introduce  coefficients $\alpha^{\textsc{link}}$ and $\alpha^{\textsc{buf}}$ to adjust bidder's willingness of refraining from bidding: a smaller coefficient indicates a smaller  willingness to skip the current resources.}

			\begin{mechanism}[\revh{Modification of Bidding Participation in  Mechanism \ref{mech:Vickrey-score}}]\label{mech:modification}
				\rev{To improve the long-term social welfare, we modify \revh{Mechanism \ref{mech:Vickrey-score}} by  allowing bidders to refrain from bidding if necessary. Specifically, }after an auctioneer $n$ announces the start of the auction with the allocation and payment rules, a bidders \rev{should refrain} from bidding if both of the following inequalities are satisfied:
				\begin{equation}\label{eq-modify:buffer}
					{h}_{n}(t)<\alpha^{\textsc{buf}}\cdot \frac{R^{\textsc{pre}}_{m,t}\cdot \beta_{m}}{{B}^{\textsc{cur}}_{m,t}}, {h}_{n}(t)< \alpha^{\textsc{link}}\cdot\sum_{i\in\mathcal{N}_m}\frac{{h}_i(t)}{|\N_{i}|},
					\end{equation}
					where $|\N_{i}|$ denotes the total number of user $i$'s encountered users. \revh{This means that only a subset of set $\M$ may choose to partipate in the bidding process. The rest of the auction is the same as Mechanism \ref{mech:Vickrey-score}.} 
			\end{mechanism}
			
         Notice that the values of the coefficients $\alpha^{\textsc{buf}}$ and $\alpha^{\textsc{link}}$ will impact on the social welfare, hence should be chosen carefully through experimental studies. Under the experiment setting similar as that in Figure \ref{fig-modify:before},  we evaluate each of the coefficient pairs $\alpha^{\textsc{link}}\in [0,2]$ and $\alpha^{\textsc{link}} \in [0,2]$ for 1000 randomly generated link capacity scenarios, and find that choosing $\alpha^{\textsc{link}} = 0.5$ and $\alpha^{\textsc{buf}} = 1$ will lead to the largest long-term social welfare on average in this experiment. Hence, in our later experiments, we set $\alpha^{\textsc{link}} = 0.5$ and $\alpha^{\textsc{buf}} = 1$. 
			
			After the modification, Figure \ref{fig-modify:after} shows the performance of the same two users (as in Figure \ref{fig-modify:before}) under the modified Mechanism \ref{mech:modification}, and we notice that the quality degradation and rebuffer do not occur (under the same experiment settings). Moreover, the modification does not have much impact on the scheduling when both users have relatively high average capacities (i.e., the last 100 seconds). Overall, the modification increases the long-term average  social welfare by $6.17\%$. 

		\begin{table}[t]  
			\center 
			\caption{Comparison between Unmodified and Modified Mechanisms.}\label{table-modify:compare}
			\begin{tabular}{c|ccccc}   	            
				\hline     
				User B's Average Capacity (Mbps) &   0.15 &   0.3 &   0.45 & 1.5  &  3.0               \\ 
				\hline
				Social Welfare Improvement ($\%$)&   $16.9$ &  $13.2$ &  $9.6$ & $0.0$ &  $0.0$ \\
				
				Rebuffer Reduction ($\%$) & $1.6$  & $0.7$ &  $0.9$ &  $0.0$ & $0.0$\\
				Bitrate Degrade Reduction ($\%$) & $22.1$  & $5.9$ &  $0.2$ &    $0.0$& $0.0$\\
				\hline
			\end{tabular}
		\end{table} 
		
			
			We further perform comparisons \rev{between the \emph{unmodified} Mechanism \ref{mech:Vickrey-score} and the \emph{modified} Mechanism \ref{mech:modification} over  1000 randomly generated network scenarios. In the experiments, user A and user B watch two different  100-second videos. The average link capacity of user A is $3$Mbps, while the average capacity of user B varies from 0.15Mbps to 3Mbps (listed in Table \ref{table-modify:compare}). The rest of the settings are the same as in Figure \ref{fig-modify:before} and \ref{fig-modify:after}. \rev{Table \ref{table-modify:compare} shows the average results over the 1000 experiment rounds. 
					As shown in the table, when user B's capacity is low (i.e., $0.15$, $0.3$, and $0.45$ Mbps)}, the modification increases the social welfare as well as reduces the rebuffer ratio (i.e., the ratio of the total rebuffer time to the total video length) and the bitrate degradation ratio (i.e., the ratio of the bitrate degradation amount to the sum of the bitrates of all the received video segments). 
				\rev{As user B's capacity becomes large (i.e., $1.5$ and $3$ Mbps), the modified and unmodfied mechanisms achieve the same performance. This is an expected result because, when both users have high capacities, the unmodified mechanism already has no rebuffer and bitrate degradation and hence no need for modification.}}

\section{Demonstration System}\label{sec:demo}
We implement \rev{the CMS system} on Raspberry PI Model B+ with the Wheezy-Raspbian operating system. 
{In the demonstration system, Raspberry PIs correspond to the mobile devices, which are equipped  with monitors (for video playing), 
	LTE USB modems (for LTE connections), and WLAN adapters (for WiFi connections). The devices can dynamically join  and leave the cooperative group and there is no need for a centralized control. After joining the cooperative group,  the mobile devices download video segments via LTE and forward messages as well as video segments to other devices (if needed)  through WiFi connections.}


\begin{figure}[t]
	\center
	\includegraphics[height=3.2cm]{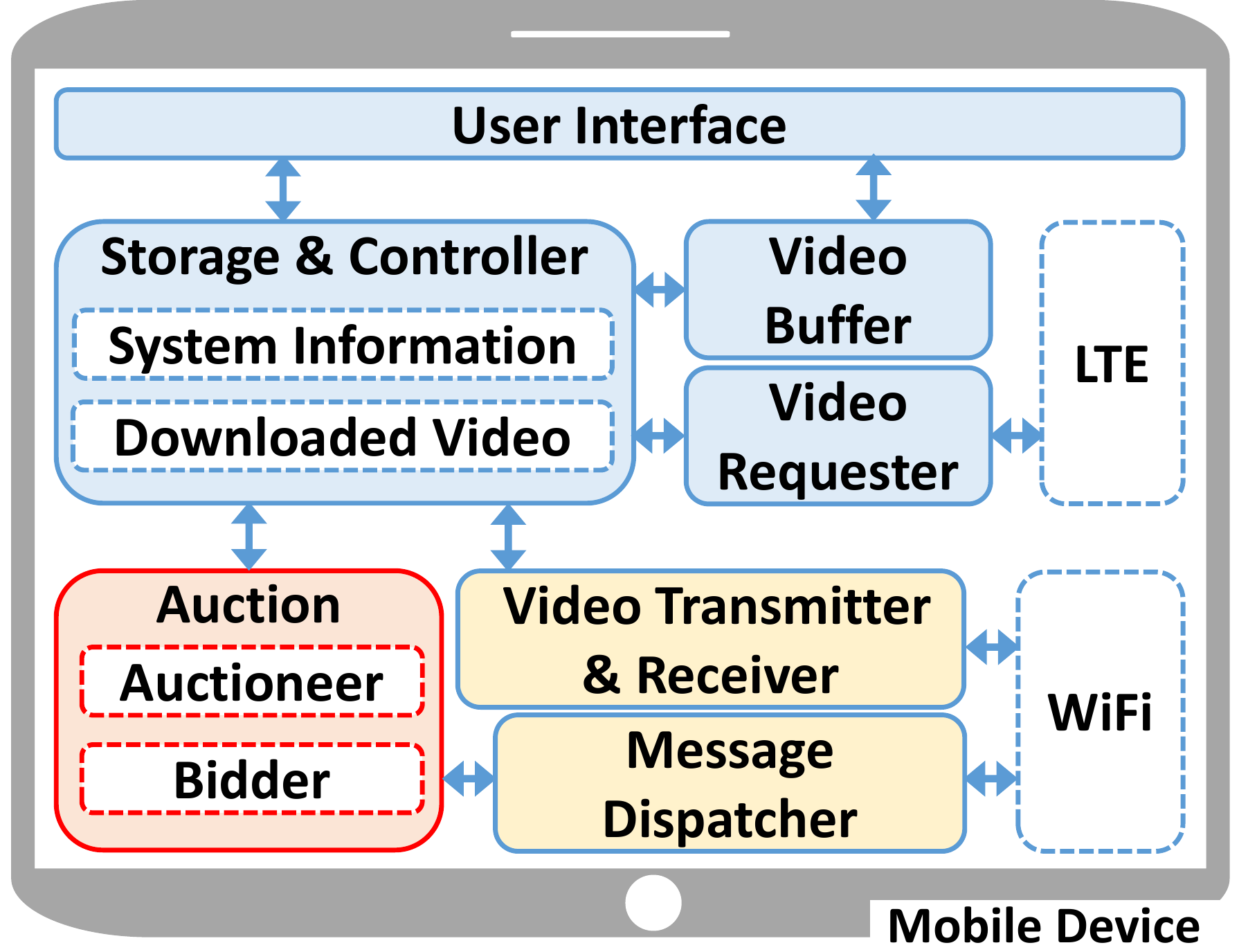}	\includegraphics[height=3.45cm]{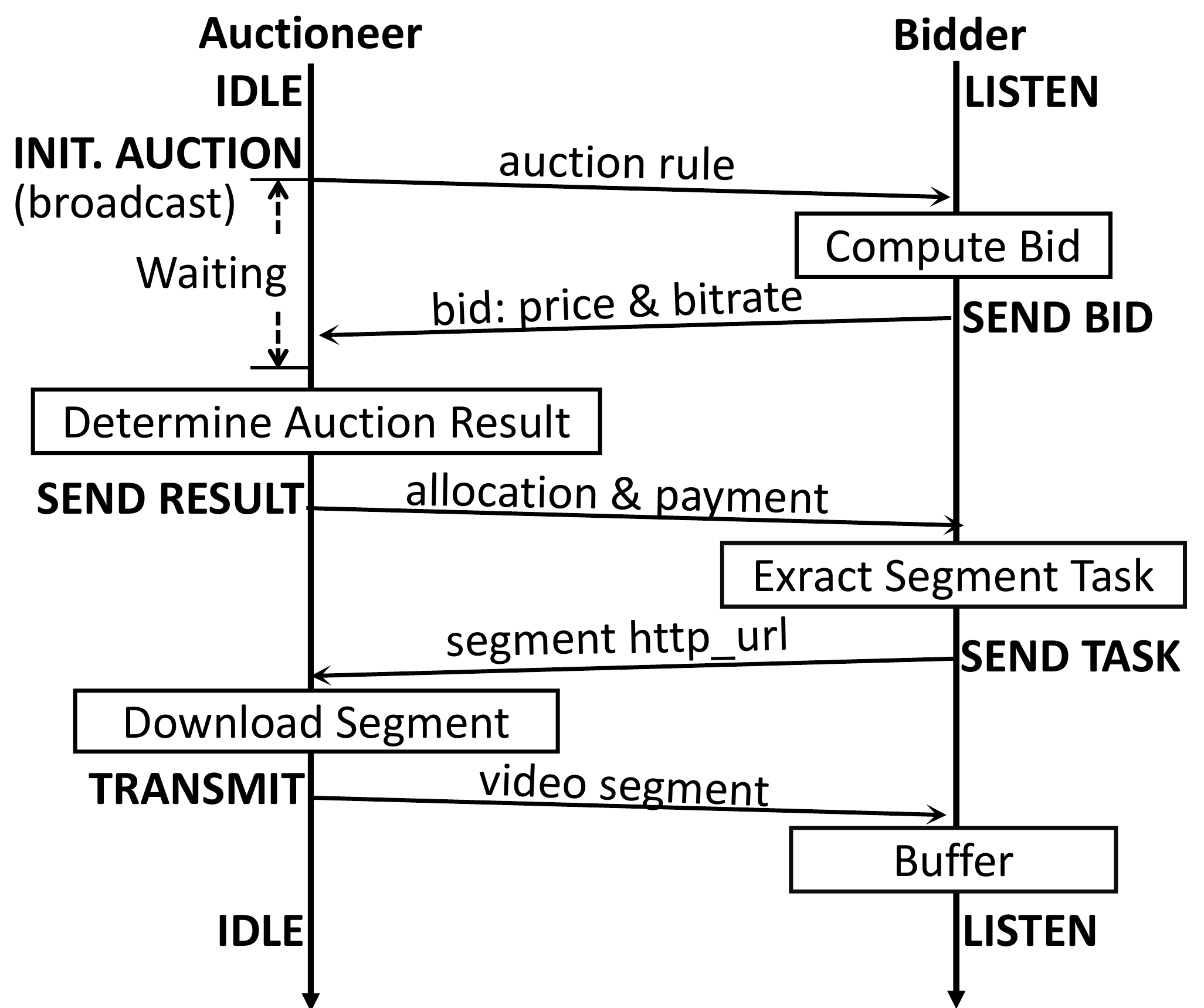}\\
	~~~(a)~~~~~~~~~~~~~~~~~~~~~~~~~~~~~~(b)
	\caption{{Demonstration System: (a) System Architecture; (b) Signaling.}}
	\label{fig:implementation}
\end{figure}
Figure  \ref{fig:implementation} (a) illustrates the system architecture with the following modules. 
\emph{User Interface} displays videos to human. \emph{Storage $\&$ Controller} stores {system information and downloaded videos}, and offers  other modules necessary control signals. \emph{Video Requester} pulls video segments from servers through LTE links, and \emph{Video Buffer} fetches and stores the segments that are for the device's own video consumption. 
{\emph{Auction}} implements our proposed auction mechanism, mainly consisting of  \emph{Auctioneer} and \emph{Bidder} modules.  When the device acts as an auctioneer, \emph{Auctioneer} module  is active and is in charge of the information announcement and auction determination. When the  device acts as a bidder, \emph{Bidder} module is active and is in charge of the bid calculation and submission. 
{\emph{Message Dispatcher}} transmits and receives auction information, such as auction  announcement and bid submission, through WiFi connections. 
{\emph{Transmitter $\&$ Receiver}} transmits the downloaded segment to others and receives the segments downloaded by others through WiFi connections. 

\rev{Figure \ref{fig:implementation} (b) shows the signaling between auctioneer's \emph{Auctioneer} module and bidders' \emph{Bidder} modules. The auctioneer first initiates the auction, then, the bidders compute and submit their bids. Since information exchange (e.g., auction initiation and bidding) takes time (due to message passing), we introduce a waiting time ($100$ms) between the auction initiation and the auction result determination to ensure that all the bids are received before determining auction results. 
After the  auction result determination, the auctioneer  announces the results to all the bidders. The winners will send the required segment URL to the auctioneer, and the auctioneer will download the segments and pass to the winners  accordingly. }

\section{Experiments and performance}\label{sec:experiment}
\begin{figure}[t]
	\center
	\includegraphics[height=2.75cm]{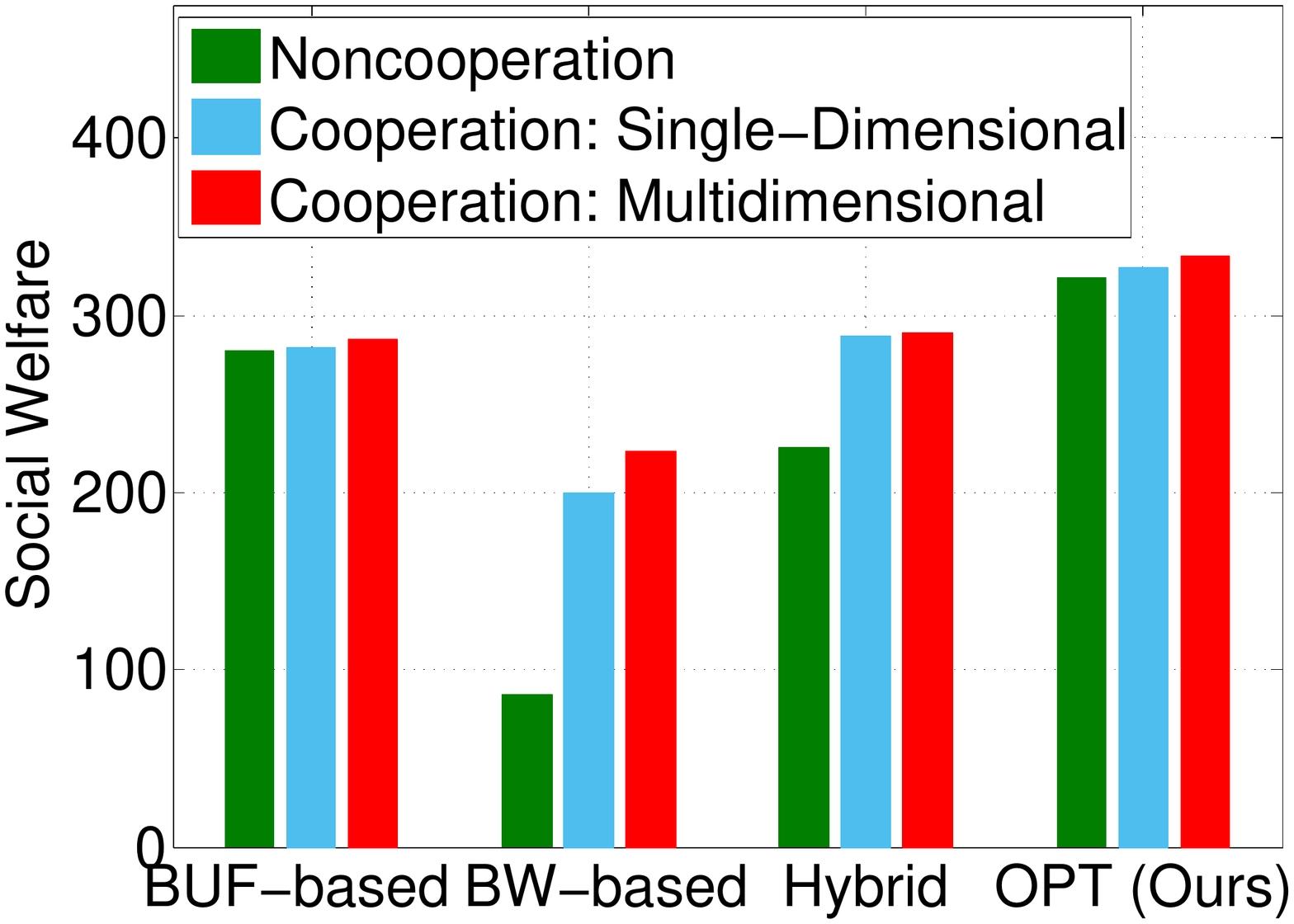}~\includegraphics[height=2.75cm]{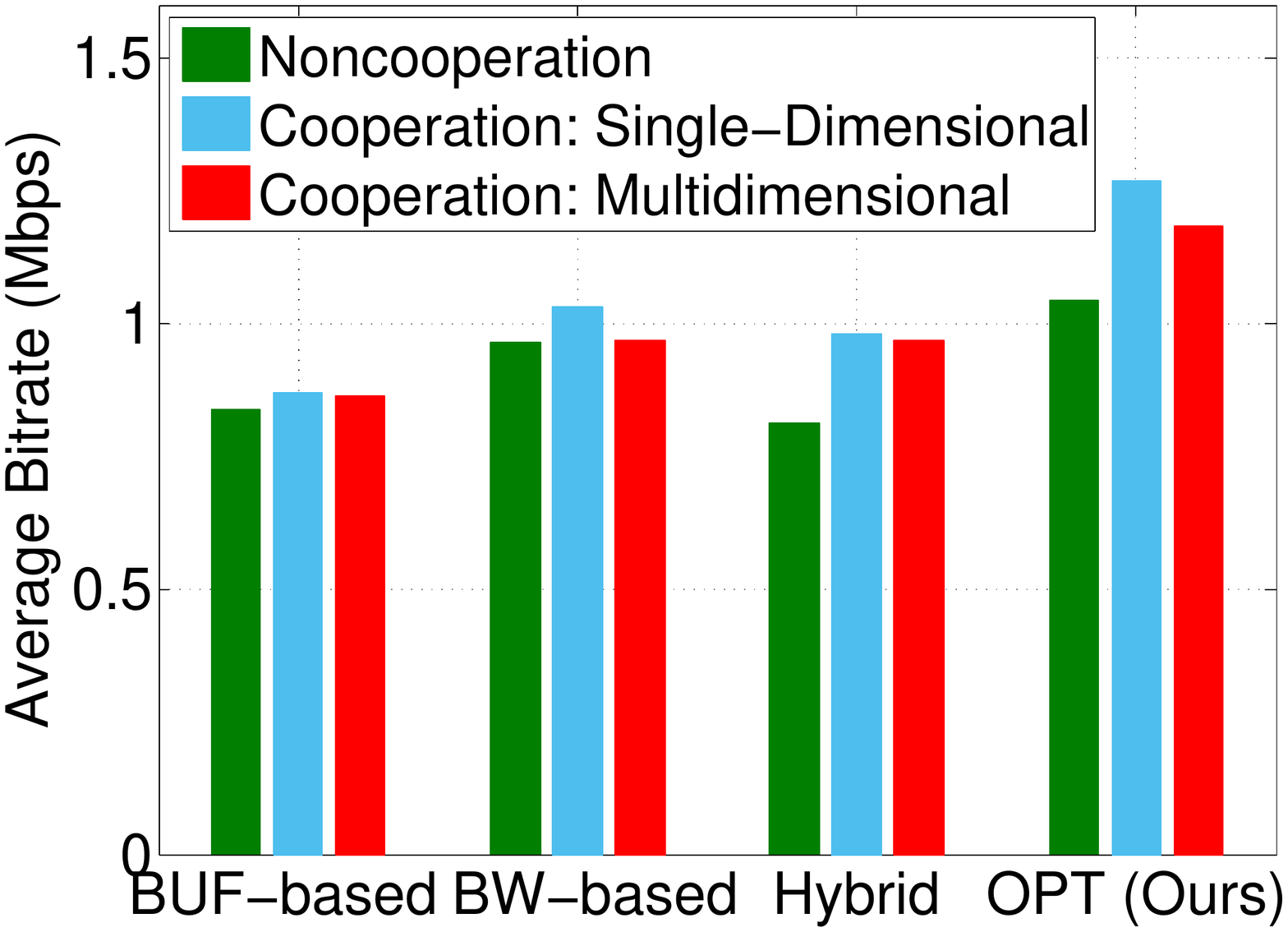}\\
	~~~~~(a)~~~~~~~~~~~~~~~~~~~~~~~~~~~~(b)\\
	\includegraphics[height=2.75cm]{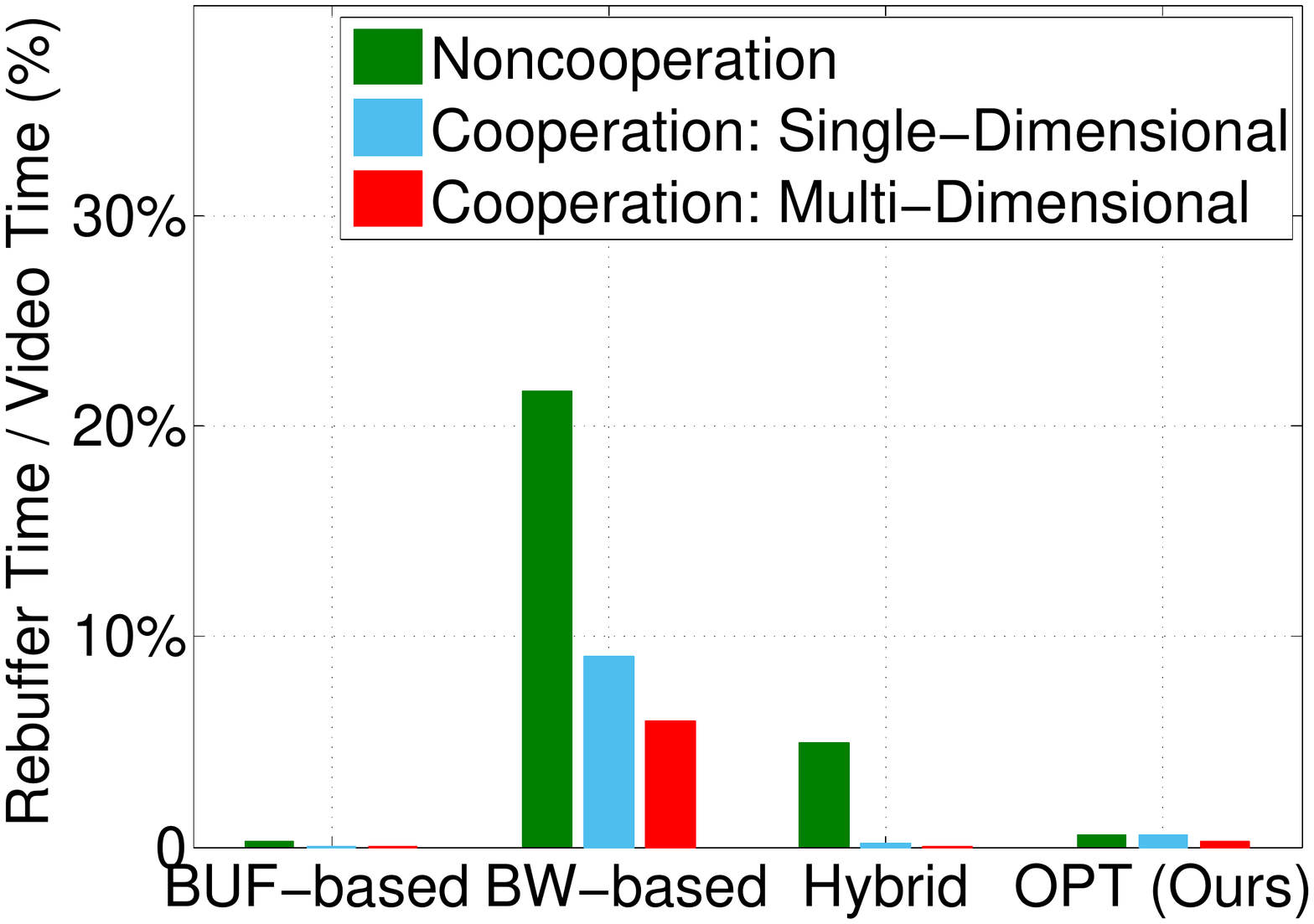}\includegraphics[height=2.8cm]{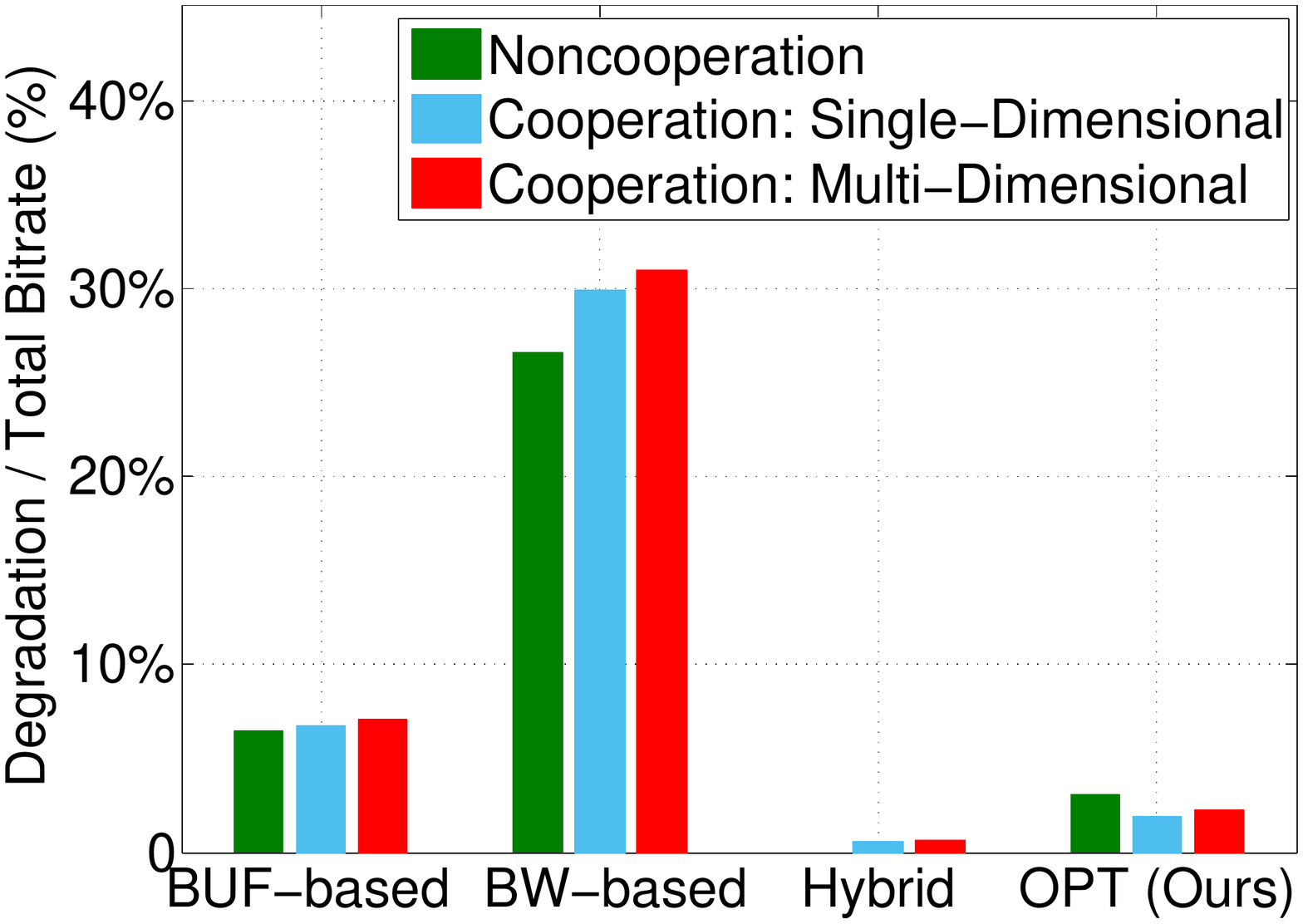}\\
	~~~~~(c)~~~~~~~~~~~~~~~~~~~~~~~~~~~~(d)\\
	\caption{Comparisons: (a) Social Welfare; (b) Average Bitrate; (c) Rebuffer; (d) Quality Degradation.}\label{fig:method}
\end{figure}
\rev{The experiments in this section are based on the modified multi-object auction mechanism (Mechanism \ref{mech:modification}). Note that the multi-object mechanism includes  the single-object mechanism as a special case by letting $K=1$.  }



\subsection{{Method Comparison}} \label{sec:multi-method}

{In this section, we compare our proposed auction scheme  with existing methods using real cellular link capacity traces obtained from BesTV.  We perform the comparison results for 500 randomly generated network scenarios  and show the average results. 
	For each network scenario, we consider 3 users whose cellular link capacities are randomly generated based on the statistics extracted from real traces, and each  user is interested in watching a 100-second video. The available bitrates for all three users' videos are \{0.2, 0.4, 0.7, 1.3, 2.3\}Mbps, and the common segment length $\beta=10$s. 
	
	We compare our mechanism with existing methods in  two aspects: (i)  comparison among \rev{three cooperative scenarios}---noncooperation, cooperation with single-dimensional (Vickrey) auction \cite{VCG1}, and cooperation with multi-dimensional (our proposed Vickrey-score) auction; (ii) bitrate adaptation comparison among buffer-based method (\emph{BUF-based})\cite{ABR-single-Huang15}, bandwidth-based method (\emph{BW-based})\cite{ABR-single-Li14}, hybrid buffer-bandwidth method (\emph{Hybrid})\cite{ABR-single-Hao14}, and our optimal bitrate method (\emph{OPT}). For now we do not consider the impact of auction overhead \rev{(i.e., auction time and energy consumption)}, hence it is  optimal to choose  $K=1$ segment due to its  maximum flexibility to the users. We will consider the impact of overhead and the proper choice of $K$ in Section \ref{subsec:overhead}. 
	
	Figure \ref{fig:method} shows the results. For comparison (i),  \rev{under each of the cooperative scenarios, we take the average among all four  methods.} Compared with noncooperation, cooperation with multi-dimensional auction increases the social welfare by $48.6\%$, {increases the average bitrate by $8.9\%$, and reduces the rebuffer by $73.7\%$}. Compared {with the cooperation with single-dimensional auction,  the cooperation with multi-dimensional auction {reduces the rebuffer by $61.4\%$ (as the multi-dimensional auction considers the bitrate adaptation)} and increases the social welfare by $3.9\%$.} For comparison (ii), {under the scenario of the cooperation with multi-dimensional auction}, our mechanism has the highest social welfare {(outperforming the other methods  by $24.8\%$ on average)}, the highest bitrate  {(outperforming the other methods by $25.8\%$ on average)}, a  relatively low rebuffer time (0.26 second on average for  a 100-second video), \revk{and a relatively low quality degradation 
	(with a degradation ratio\footnote{\revk{The quality degradation ratio is defined as the ratio of the bitrate degradation volume to the sum of the bitrates of all the received video segments. For example, for a sequence of received segments with bitrates $\{1.3,0.7,1.3\}$, the  degradation ratio is computed as $(1.3-0.7) /(1.3+0.7+1.3) = 18.2\%$.}} of $2.5\%$ on average)}.
	} 
	\subsection{{Auction Overhead}}\label{subsec:overhead}
	\begin{figure}
		\center
		\includegraphics[height=3cm]{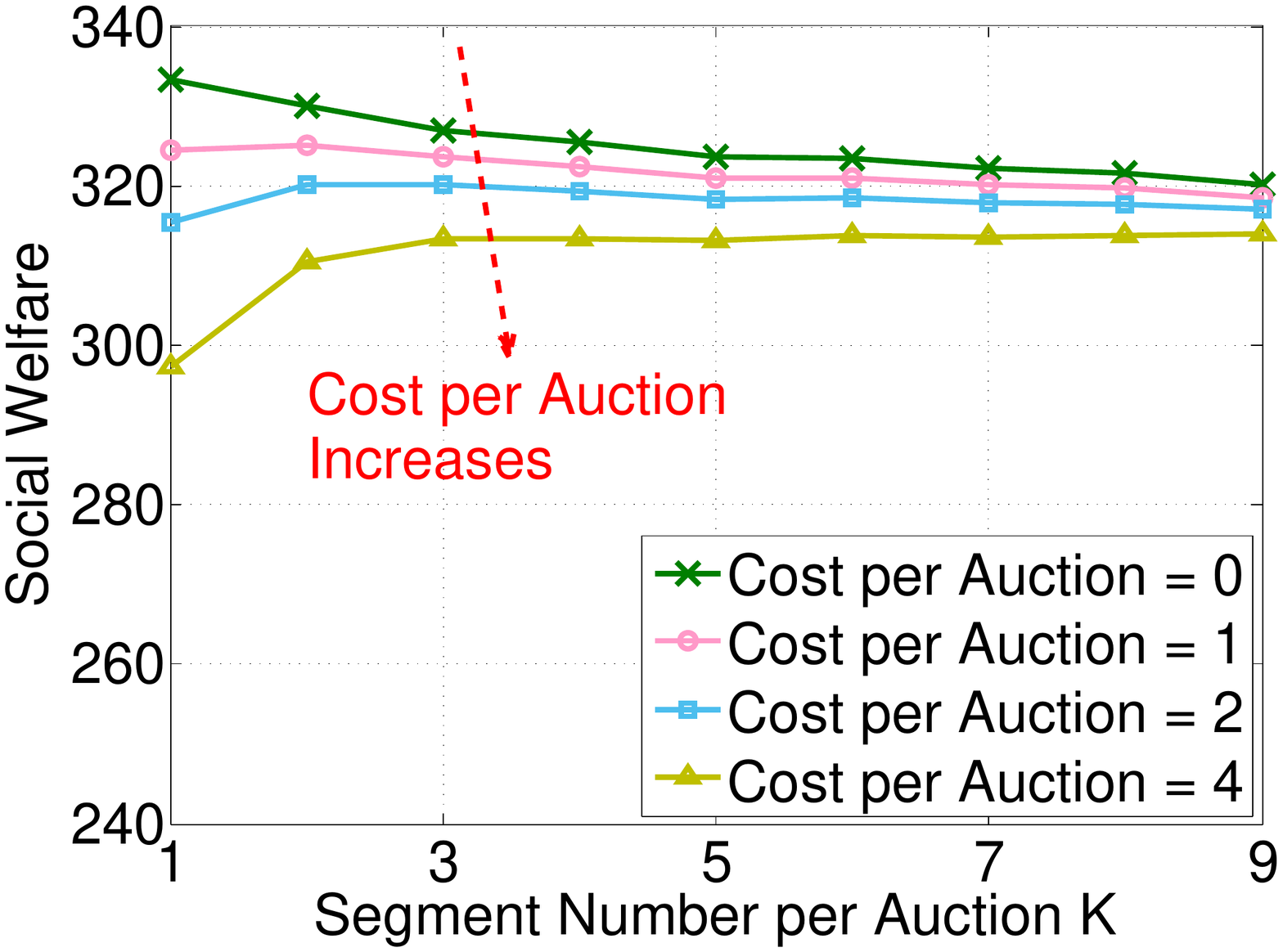}~\includegraphics[height=3cm]{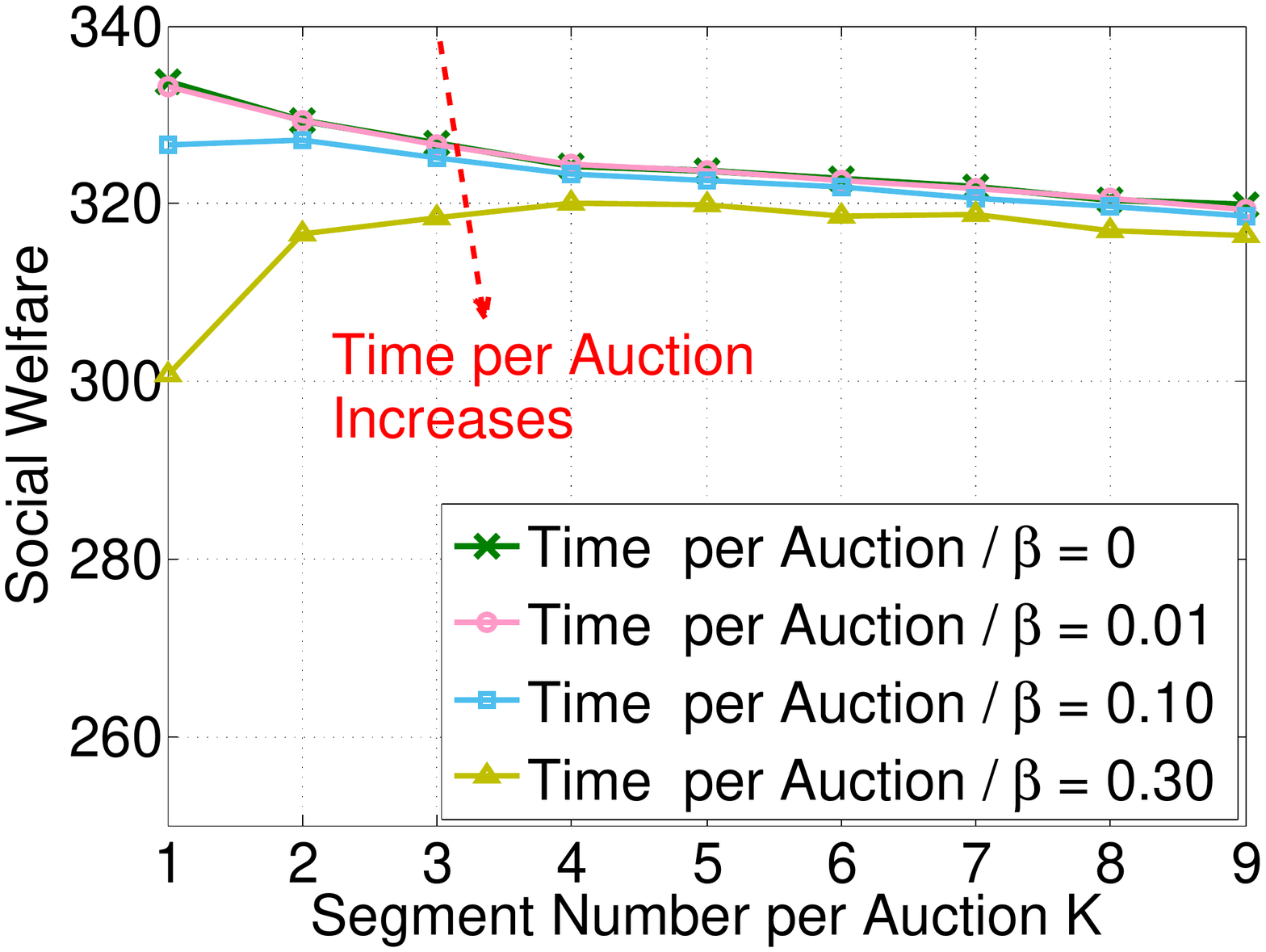}\\
		~~~~~(a)~~~~~~~~~~~~~~~~~~~~~~~~~~~~~~~(b)\\ 
		\caption{Auction Overhead: (a) Energy Consumption; (b) Time Consumption.}\label{fig:overhead1}
		
	\end{figure}
	Now we study the impact of the auction overhead and the proper choice of $K$. {Auction mechanism mainly induces two kinds of overheads: energy consumption and time consumption. By increasing the segment number $K$ per auction, {both the energy and the time spent on the auctions in a fixed  video scheduling cycle (e.g., 100 seconds in our experiment) reduce} due to less auctions. {We evaluate these two kinds of auction overheads separately.}  {The simulation setting is similar to that of Figure \ref{fig:method}, except we will change the value of $K$. } 
		
		For energy consumption, we assume that there is a fixed  \emph{cost per auction}, as in Figure \ref{fig:overhead1} (a). When the cost per auction is zero, social welfare decreases with the segment number $K$ due to the difficulty in accurately predicting future channel conditions when auctioning a larger number of segments in a single auction. As the cost per auction increases, the social welfare decreases, but a larger $K$ may be better than $K=1$  because of its smaller total overhead. {For the time consumption, we assume that there is a fixed \emph{time per auction} as in Figure \ref{fig:overhead1} (b), and we consider different ratios between this time per auction with the video segment length $\beta$. As time per auction increases, social welfare decreases, and a larger $K$ becomes better than $K=1$ because of its smaller time waste.}}

	\subsection{Realistic Performance \rev{over the Demo System}}
	We further perform experiments over the demo system introduced in Section \ref{sec:demo}. The bitrates set is \{0.5,  1.0, 2.2, 5.0\}Mbps, and the segment length $\tseg = 10s$.
	\subsubsection{Welfare Increase for High and Low Capacity Users}
	In this experiment, four users \{A,B,C,D\} form a group \rev{in a CMS system}: user A and B do not watch videos and have cellular link capacities around $3.5$Mbps; user C and D watch two different videos and have cellular capacities around $1.2$Mbps. 
	
	Figure \ref{fig-paper3:cooperation} shows the video scheduling results of users C and D in one experiment. The meanings of curves and stems are similar as that in Figures \ref{fig-modify:before}.  
	In Figure \ref{fig-paper3:cooperation}, although user C and D have link capacities  around $1.2$Mbps, they can download videos at the bitrate of 2.2Mbps most of the time and do not suffer from  rebuffer due to the help from users A and B. 
	
	\begin{figure}[t]
		\center
		\includegraphics[height=4cm]{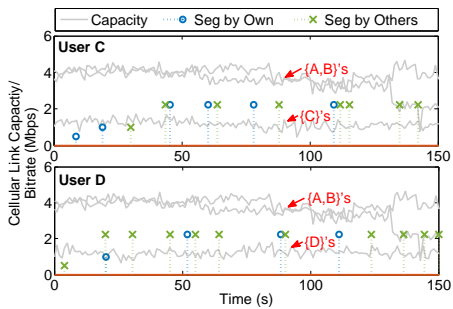}
		\caption{{Scheduling: User C and User D.}}
		\label{fig-paper3:cooperation}
	\end{figure}
	\begin{table}[t]
		\begin{center}
			\caption{Welfare Comparison} \label{table-paper3:welfare}
			\begin{tabular}{ccccc} 
				\toprule
				& 	Noncooperation & Cooperation\\
				\hline  
				A and B & $0\%$  &   $15.5\%$  \\ 
				C and D& $49.1\%$  & $84.5\%$ \\
				Social Welfare & $49.1\%$  &   $100\%$\\
				\hline
			\end{tabular}
		\end{center}
	\end{table}
	{Table \ref{table-paper3:welfare} shows users' average  normalized welfare over  four experiment rounds.   
		We normalize the social welfare (i.e., the sum of all the users' welfares) in cooperation  as $100\%$. 
		Without cooperation, users A and B receive zero social welfare, as they do not watch videos. Users C and D receive less than $50\%$ of the cooperative total social welfare. Under cooperation, users A and B receive $15.5\%$ of the social welfare due to the payments from the auction (subtracting their own costs for helping other users).  For user C and D, their welfare increases $35.4\%$ compared with noncooperation due to the service enhancement. The overall  social welfare also increases $50.9\%$ compared with noncooperation.}

	\subsubsection{Video Streaming Stability}
	\begin{figure}[t]
		\center
		\includegraphics[height=4cm]{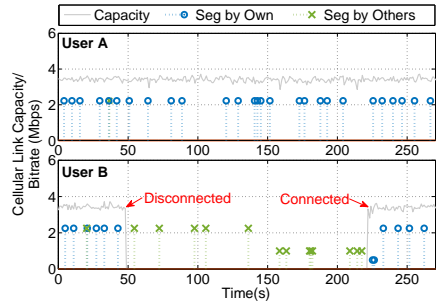}
		\caption{{Scheduling: User $B$ is Disconnected during $50\sim220$s.}}
		\label{fig-paper3:stability}
	\end{figure}
	
	We consider two users, A and B, both of which watch different videos and have cellular capacities around $3.6$Mbps. User A is always connected to the Internet, while  user B is disconnected from the Internet between 50 to 220 seconds.
	Figure \ref{fig-paper3:stability} demonstrates  the result of an experiment. The notations are similar to that of Figure \ref{fig-paper3:cooperation}. Although user B's video bitrate decreases from $2.2$Mbps to $1.0$Mbps during the time he is disconnected from the Internet, he is still able to watch the video with the help from user A. This demonstrates the practical benefit of \rev{the CMS system}.

\section{Conclusion}\label{sec:conclusion}
{\rev{The CMS system enables mobile users to share their downloading capacities for cooperative video streaming. The success of this system  requires an effective incentive mechanism that motivates user cooperations. In this work, we propose truthful and efficient mechanisms that  maximize the social welfare.} 
	We further construct a demo system to evaluate the real world performance of \rev{the CMS system}. 
\revk{For the future work, it is interesting to design mechanisms that enable the cooperation among users who \revr{will encounter in the future}, based on the prediction of their future mobility.}

\begin{IEEEbiography}
	[{\includegraphics[width=1in,height=1.25in,clip,keepaspectratio]{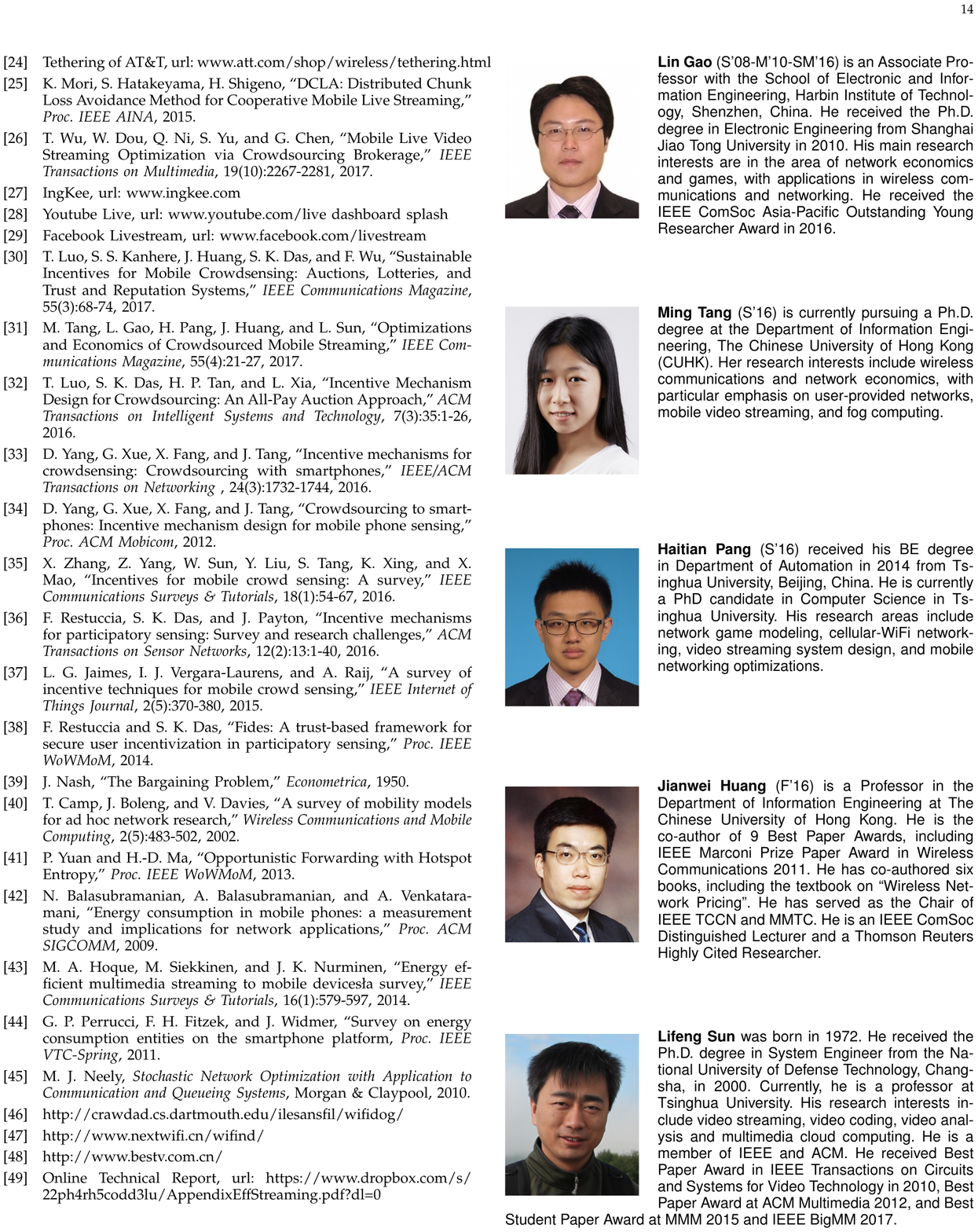}}]{Ming Tang}
	(S'16) is currently pursuing a PhD degree at the Department of Information Engineering, The Chinese University of Hong Kong (CUHK). She was a visiting student with the Department of Management Science and Engineering, Stanford University, from September 2017 to February 2018. Her research interests include wireless communications and network economics, with particular emphasis on user-provided networks and fog computing. She is a student member of the IEEE.
\end{IEEEbiography}
\vspace*{-2\baselineskip}
\begin{IEEEbiography}
	[{\includegraphics[width=1in,height=1.25in,clip,keepaspectratio]{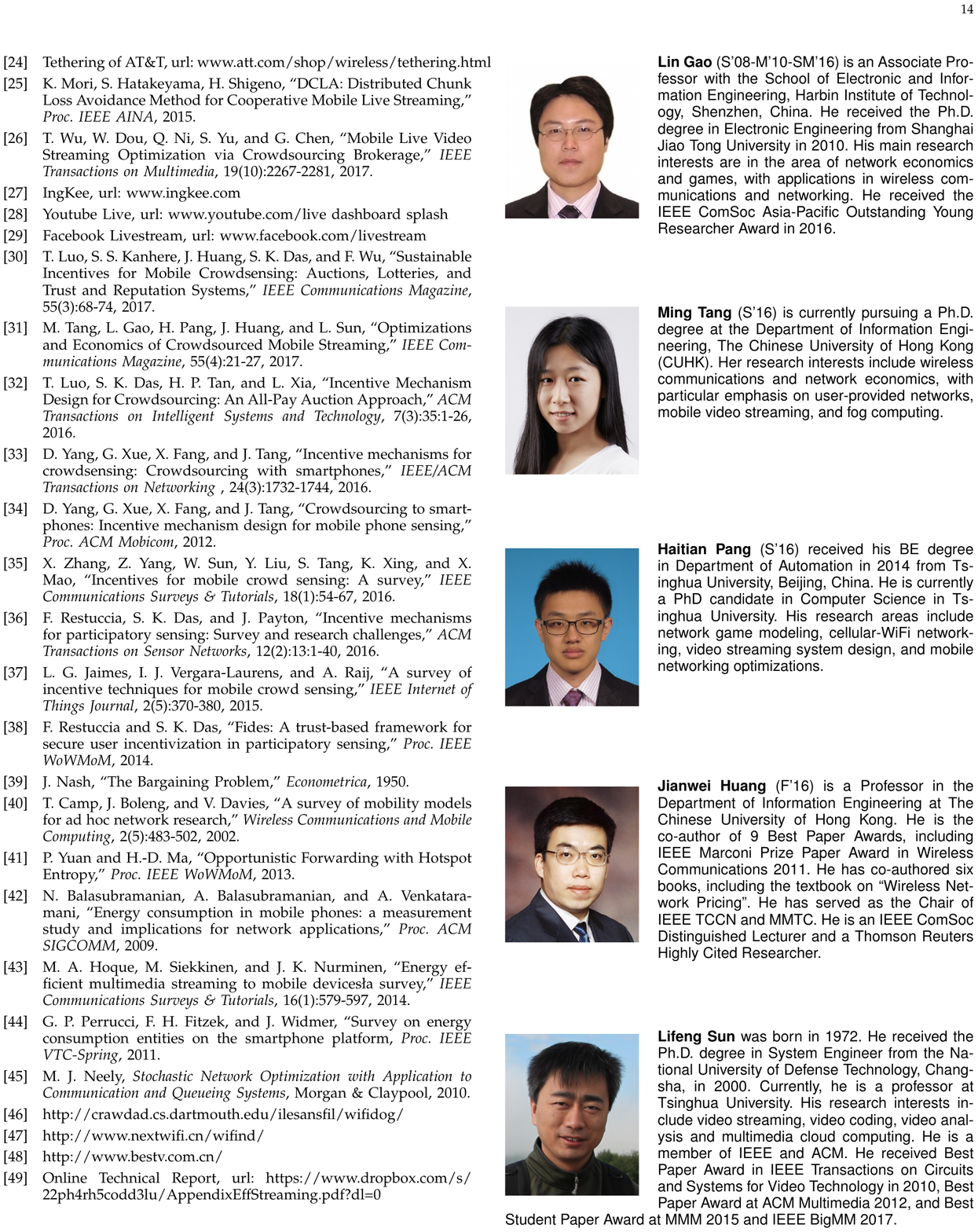}}]{Haitian Pang}
	(S'16) received his BE degree in Department of Automation in 2014 from Tsinghua University, Beijing, China. He is currently a PhD candidate in Computer Science in Tsinghua University. His research areas include edge computing edge-assisted content delivery, deep learning for network, video streaming system design, and mobile networking optimizations. He is a student member of the IEEE.
\end{IEEEbiography}
\vspace*{-2\baselineskip}
\begin{IEEEbiography}
	[{\includegraphics[width=1in,height=1.25in,clip,keepaspectratio]{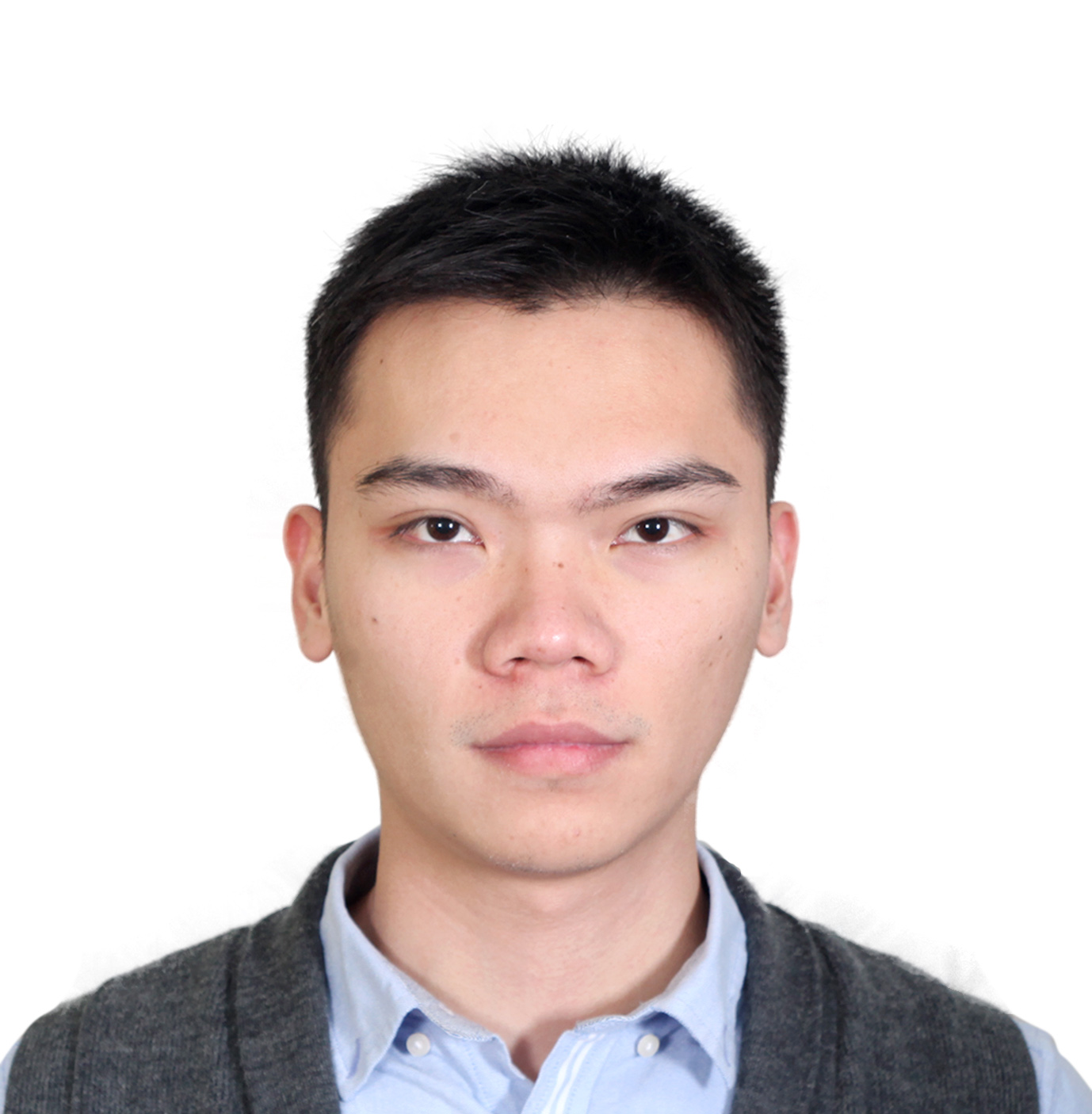}}]{Shou Wang}
	 received his BE degree and  his ME degree from Tsinghua University, Beijing, China, in 2013 and 2016, respectively. He is currently working on game development in Tencent, Shenzhen, China.  His research interests include video streaming system design, cellular-WiFi networking, and mobile bandwidth collaboration.
\end{IEEEbiography}
\vspace*{-2\baselineskip}
\begin{IEEEbiography}
	[{\includegraphics[width=1in,height=1.25in,clip,keepaspectratio]{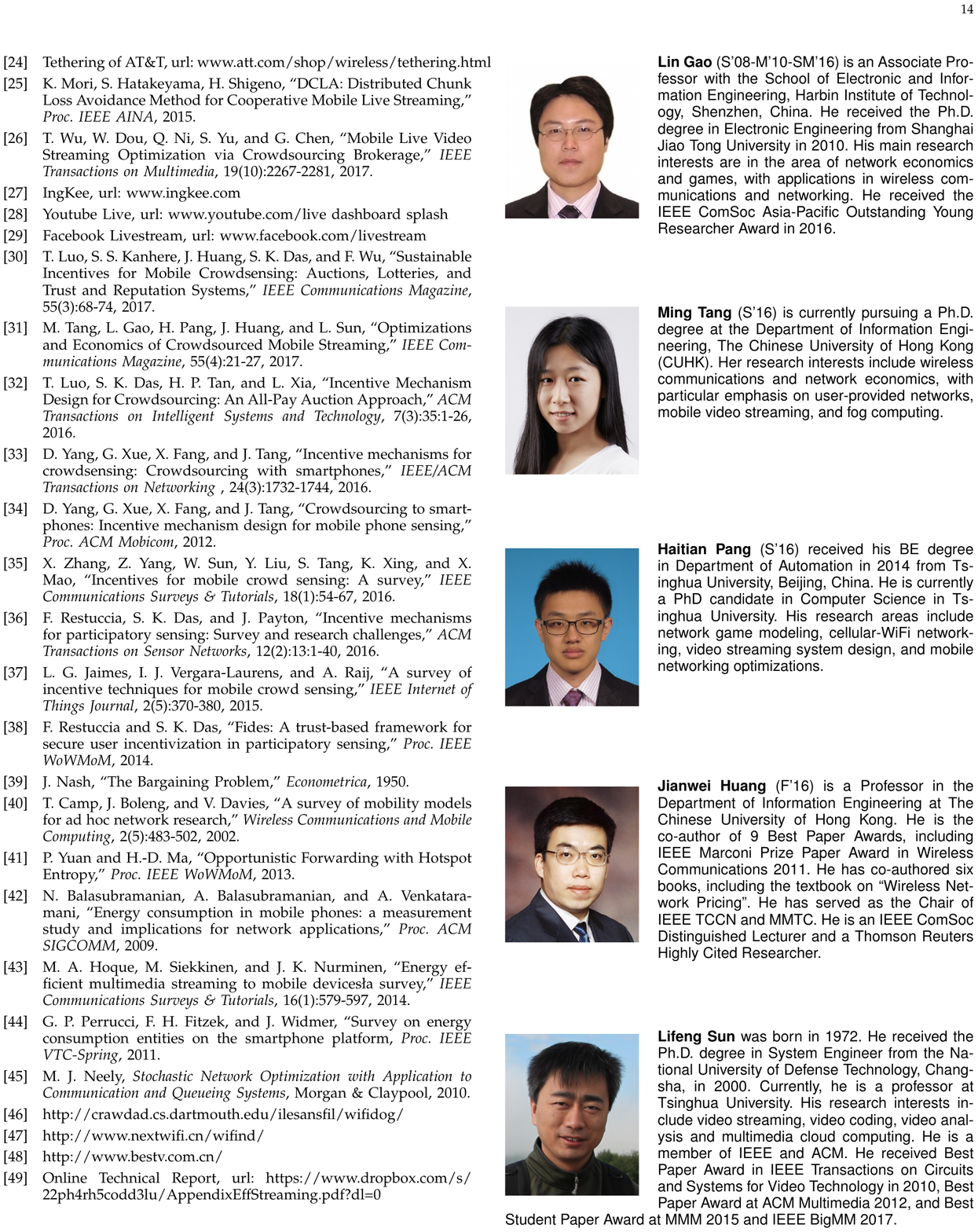}}]{Lin Gao}
	(S'08-M'10-SM'16) received the PhD degree in electronic engineering from Shanghai Jiao Tong University, in 2010. He is an associate professor with the School of Electronic and Information Engineering, Harbin Institute of Technology, Shenzhen, China. His main research	interests include network economics and games, with applications in wireless communications and networking. He received the IEEE ComSoc Asia-Pacific Outstanding Young Researcher Award in 2016. He is a senior member of the IEEE.
\end{IEEEbiography}
\vspace*{-2\baselineskip}
\begin{IEEEbiography}
	[{\includegraphics[width=1in,height=1.25in,clip,keepaspectratio]{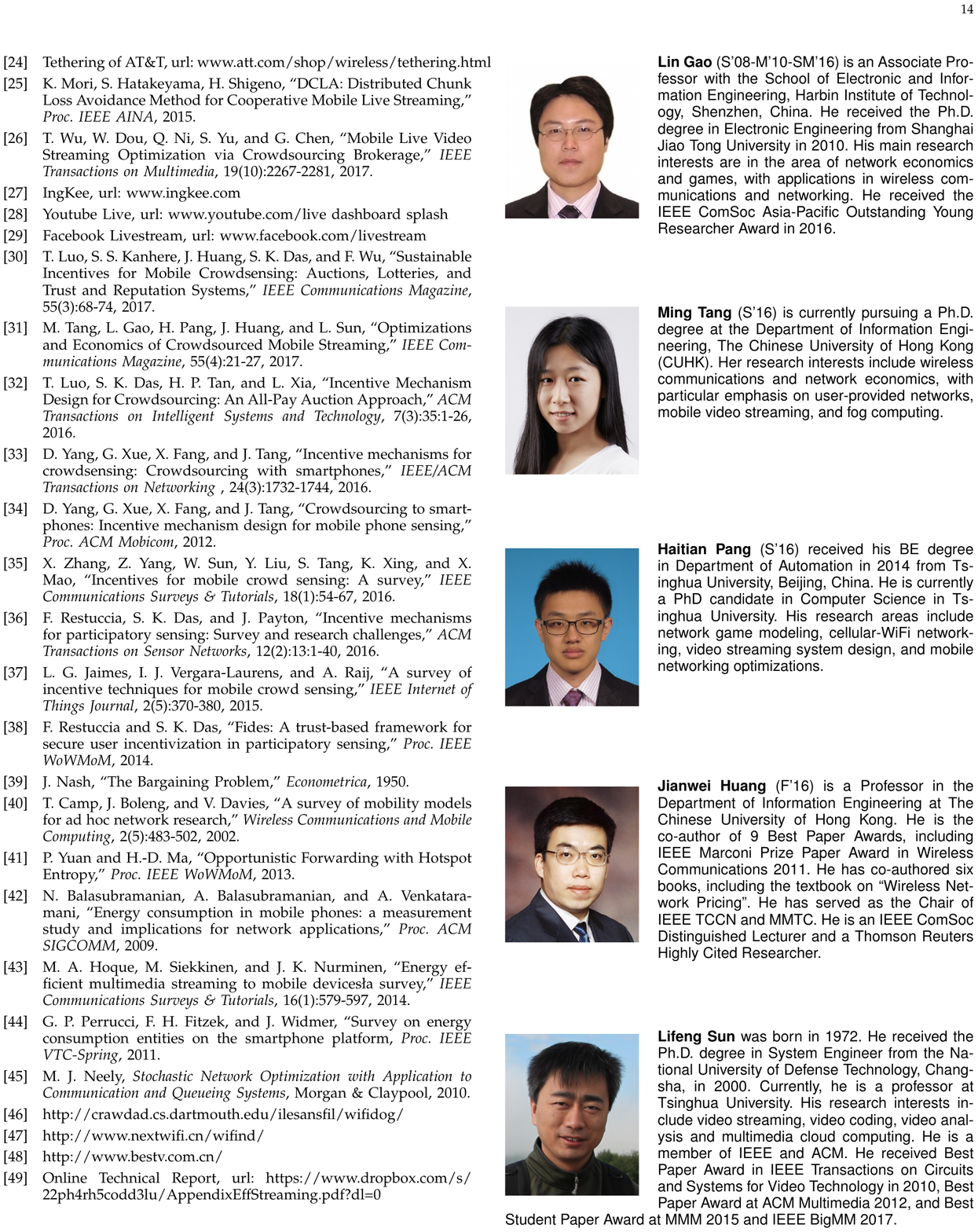}}]{Jianwei Huang}
	(F'16) is a professor with the Department of Information Engineering, The Chinese University of Hong Kong. He is the coauthor of nine Best Paper Awards, including the IEEE Marconi Prize Paper Award in Wireless Communications 2011. He has co-authored six books, including the textbook Wireless Network Pricing. He has served as the chair of the \emph{IEEE Communications Society Cognitive Networks Technical Committee} and \emph{Multimedia Communications Technical Committee}. He is an IEEE ComSoc distinguished lecturer and a Thomson Reuters Highly cited researcher. He is a fellow of the IEEE.
\end{IEEEbiography}
\vspace*{-2\baselineskip}
\begin{IEEEbiography}
	[{\includegraphics[width=1in,height=1.25in,clip,keepaspectratio]{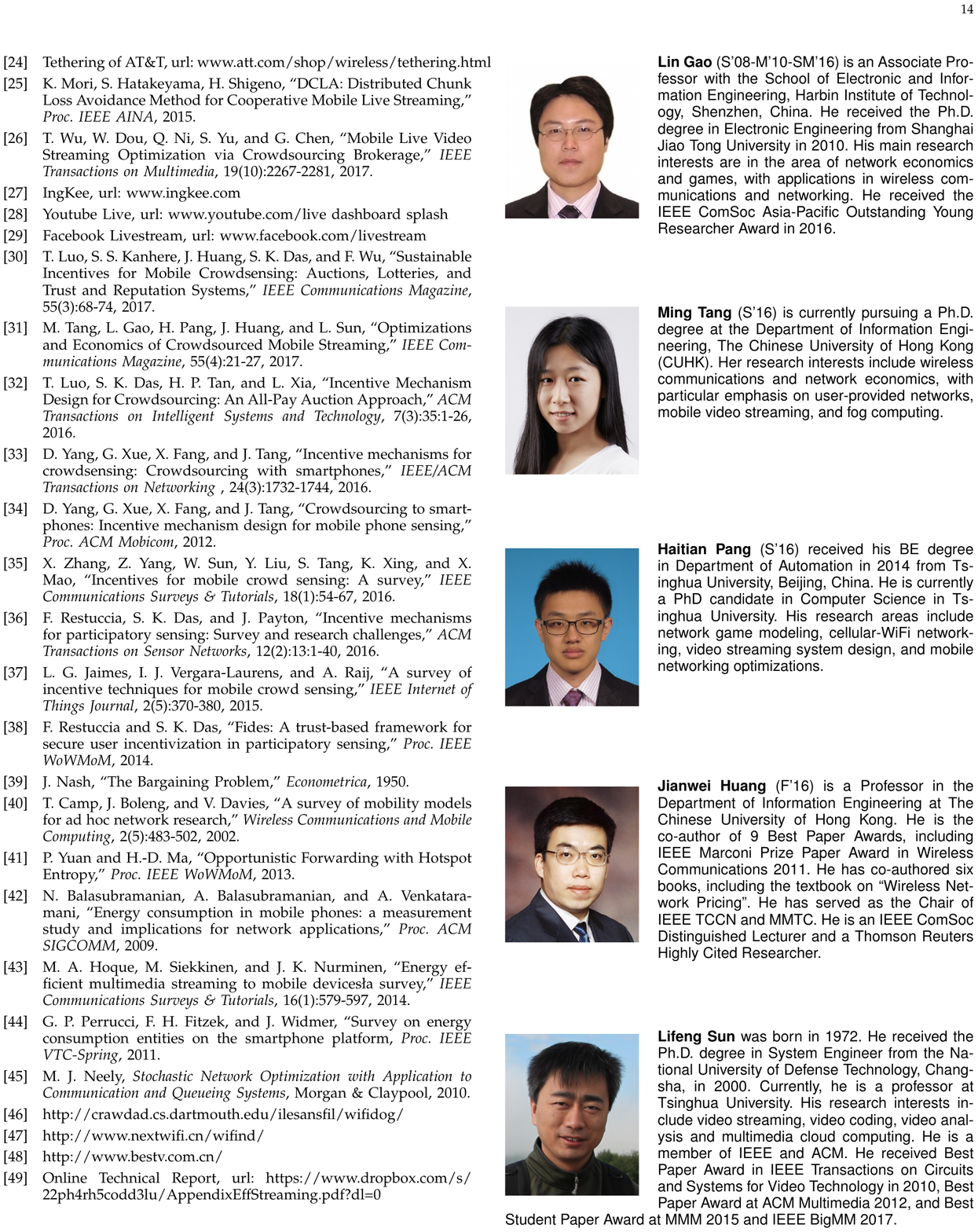}}]{Lifeng Sun} received the PhD degree in system engineer from the National University of Defense Technology, Changsha, in 2000. Currently, he is a professor with Tsinghua University. His research interests include video streaming, video coding, video analysis, and multimedia cloud computing. He received the Best Paper Award in the \emph{IEEE Transactions on Circuits and Systems for Video Technology}, in 2010, Best Paper Award at ACM Multimedia 2012, and Best Student Paper Award at MMM 2015 and IEEE BigMM 2017. He is member of the IEEE and ACM.
\end{IEEEbiography}
%

\appendix

\subsection{Proof of Proposition \ref{prop:single-truthful}}\label{app:single-truthful}
	\revk{Given any bitrate $r^m$, a bidder $m$'s score will be  $\phi^m = \Utility_{m,t}(r^m)-s(r^m)$ if bidding truthfully using price $p^m = \Utility_{m,t}(r^m)$, and $\phi' = p'-s(r^m)$ if bidding untruthfully  using price $p'\neq \Utility_{m,t}(r^m)$. We will  show that  bidder $m$ cannot obtain a higher payoff by bidding $\phi'\neq \phi^m$ (or $p' \neq p^m$), which implies that bidding with price $p^m$  (truthful  bidding) is a weakly dominant strategy for bidder $m$.
		
		According to Mechanism \ref{mech:2ndscore}, regardless of what price  that  bidder $m$ bids, he will obtain a zero payoff if he loses the auction, and he will obtain a payoff of  $P_{m}(\pr^{\dag},r^{\dag})$  if he wins,  
		\begin{equation}
		\begin{aligned}
		P_{m}(\pr^{\dag},r^{\dag}) &= \Utility_{m,t}(r^{\dag})-\pr^{\dag}  \\&= \Utility_{m,t}(r^{m})  - (\max\{\boldsymbol{\phi}_{{\N}_n/m}\} + s(r^m ))
		\\&= \phi^m- \max\{\boldsymbol{\phi}_{{\N}_n/m}\},
		\end{aligned}
		\end{equation}
		where $\max\{\boldsymbol{\phi}_{{\N}_n/m}\}$ is the maximum score other than bidder $m$'s. Hence, if bidder $m$ loses (or wins) under both  $\phi'$ and $\phi^m$, he will gain the same payoffs under both the scores. If he loses  under $\phi'$ and wins under $\phi^m$, he will gain a zero payoff under $\phi'$, and gains a payoff of $ \phi^m- \max\{\boldsymbol{\phi}_{{\N}_n/m}\} > 0 $ under $\phi^m$. If he wins under $\phi'$ and loses under $\phi^m$, he will gain a zero payoff under $\phi^m$, and gains a payoff of $ \phi^m- \max\{\boldsymbol{\phi}_{{\N}_n/m}\} < 0$ under $\phi'$, where the negative payoff is due to the fact that  bidder $m$ loses  under $\phi^m$. In each of the cases above, bidder $m$ cannot obtain a higher payoff by bidding ${\phi}' \neq \phi^m$.} 

\subsection{Proof of Proposition \ref{prop:single-bitrate}}\label{app:single-bitrate}
	The key idea is to  show that, for any bid $(\bar{r}^m, \bar{p}^m)$, there always exists a bid $(\hat{r}^m, \hat{p}^m)$, which leads to  an expected payoff of bidder $m$ that is no smaller than the bid $(\bar{r}^m,\bar{p})$ does. \rev{Such a bid $(\hat{r}^m, \hat{p}^m)$ satisfies two properties:} i) the bitrate  $\hat{r}^m$ is computed based on \eqref{eq:single-bitrate}; ii) the price $\hat{p}^m$ is chosen such that  $ \phi(\hat{r}^m, \hat{p}^m)=\phi(\bar{r}^m, \bar{p}^m)$, which means that both  
	bids $(\hat{r}^m, \hat{p}^m)$ and $(\bar{r}^m, \bar{p}^m)$ lead to the same score and hence  the same winning probability. If bidder $m$ loses the auction under such a score, then the payoff will be zero under both bids. If bidder $m$ wins the auction under such a score, then  $(\hat{r}^m, \hat{p}^m)$ leads to  a larger payoff than $(\bar{r}^m,\bar{p}^m)$, i.e.,
	\begin{multline}\label{eq:single-prpbitrate-proof}\Utility_{m,t}(\hat{r}^m) - (\max\{\boldsymbol{\phi}_{{\N}_n/m}\}+ s(\hat{r}^m))\\
	\geq \Utility_{m,t}(\bar{r}^m) - (\max\{\boldsymbol{\phi}_{{\N}_n/m}\}  + s(\bar{r}^m)).\end{multline}
	Inequality \eqref{eq:single-prpbitrate-proof} holds  because $\hat{r}^m$ satisfies equation \eqref{eq:single-bitrate}. 



\subsection{Proof of Proposition \ref{prop:multi-truthful}}\label{app:multi-truthful}

Suppose all bids \emph{except} those of  bidder $m$'s are fixed, so the $K$ highest marginal scores \emph{except} bidder $m$'s,  $\hat{\boldsymbol{S}}^{-m} = \{\hat{S}^{-m}_1,\hat{S}^{-m}_2,...,\hat{S}^{-m}_K\}$ (in the non-increasing order), are fixed. We further assume that  bidder $m$'s bitrate matrix $\brmatrix^\bidder$ is fixed. 

If bidding truthfully, bidder $m$ will submit  a price $\boldsymbol{p}^m = (p^m_1,p^m_2,...,p^m_K)$, where $p^m_{\kappa} = U_{m,t}(\boldsymbol{r}_{\kappa}^m)$ for all $\kappa$. Under such a price, bidder $m$ will win  $\kappa_m^{\dag}$ segments, and has a payoff $P_m$:
\begin{equation}P_m= U_{m,t}(\boldsymbol{r}_{\kappa_m^{\dag}}^m) - (\sum_{i=1}^{\kappa_m^{\dag}}\hat{S}^{-m}_{K-\kappa_m^{\dag}+i} + s(\boldsymbol{r}_{\kappa_m^{\dag}}^m)).\end{equation}
Let 
$\boldsymbol{S}^m = \{S^m_1,S^m_2,...,S^m_K\}$ denote bidder $m$'s marginal score vector (derived from his bids) under the truthful bidding, where $\boldsymbol{S}^m$ satisfies Assumption \ref{ass:multi-score}. Because of the truthfulness, the marginal score summation satisfies  $\sum_{i=1}^{\kappa}S_{i}^m= p^{m}_{\kappa} - s(\boldsymbol{r}_{\kappa}^m)=U_{m,t}(\boldsymbol{r}_{k}^m) - s(\boldsymbol{r}_{k}^m)$ for all $\kappa$.
Moreover, \revi{the marginal scores of those bidders who win should be no smaller than the marginal scores of those bidders who do not win. Bidder $m$ (with marginal scores  $\boldsymbol{S}^m $) wins $\kappa_m^{\dag}$ segments, so  any of the first  $\kappa_m^{\dag}$ marginal scores (winning marginal scores)  in  $\boldsymbol{S}^m $ should be no smaller than any of the last $\kappa_m^{\dag}$ marginal scores (losing marginal scores)  in $\hat{\boldsymbol{S}}^{-m} $. The bidders except bidder $m$ (with marginal scores  $\hat{\boldsymbol{S}}^{-m} $) win $K-\kappa_m^{\dag}$ segments, so any of the first $K-\kappa_m^{\dag}$ marginal scores (winning marginal scores) in $\hat{\boldsymbol{S}}^{-m} $ should be no smaller than any of the  last $K-\kappa_m^{\dag}$ marginal scores (losing marginal scores) in $\boldsymbol{S}^m $. Formally,}
\begin{equation}\label{app:prp3-score1} 
	S_{i}^m\geq \hat{S}_{j}^{-m},~i\leq {\kappa_m^{\dag}}, j\geq K-\kappa_m^{\dag}+1,\end{equation}
\begin{equation}\label{app:prp3-score2} \hat{S}_{j}^{-m}\geq S_{i}^m,~i\geq \kappa_m^{\dag}+1,j\leq K-\kappa_m^{\dag}.\end{equation}

If bidding untruthfully, bidder $m$ will submit  a price $\bar{\boldsymbol{p}}^m = (\bar{p}^m_1,\bar{p}^m_2,...,\bar{p}^m_K)$. Under such a price, bidder $m$ will win   $\bar{\kappa}_m^{\dag}$ segments, and has a payoff $\bar{P}_m $:
\begin{equation}\bar{P}_m = U_{m,t}(\boldsymbol{r}_{\bar{\kappa}_m^{\dag}}^m) - (\sum_{i=1}^{\bar{\kappa}_m^{\dag}}\hat{S}^{-m}_{K-\bar{\kappa}_m^{\dag}+i} + s(\boldsymbol{r}_{\bar{\kappa}_m^{\dag}}^m)).\end{equation}

According to above discussions, we show that bidder $m$ cannot obtain a higher payoff by submitting $\bar{\boldsymbol{p}}^m\neq \boldsymbol{p}^m$, i.e.,  we will show $P_m - \bar{P}_m
\geq0$. 
Considering  three possible situations:
\begin{itemize}
	\item If $\kappa_m^{\dag} = \bar{\kappa}_m^{\dag}$, then  $P_m-\bar{P}_m=0$. 
	
	\item If $\kappa_m^{\dag}>\bar{\kappa}_m^{\dag}$ (loses segments by untruthful bidding), then   
	\begin{equation}\label{eq:app-A-payoff2}P_m - \bar{P}_m
		=\sum_{i=\bar{\kappa}^{\dag}_m+1}^{{\kappa}^{\dag}_m}S^{m}_i - \sum_{i=1}^{\kappa_m^{\dag} - \bar{\kappa}_m^{\dag}}\hat{S}^{-m}_{K-\kappa_m^{\dag}+i} 
		\geq0.
	\end{equation}
	
	\item If $\kappa_m^{\dag}<\bar{\kappa}_m^{\dag}$ (gains segments by untruthful bidding), then  
	\begin{equation}\label{eq:app-A-payoff3}
		P_m - \bar{P}_m
		= -\sum_{i={\kappa}^{\dag}_m+1}^{\bar{\kappa}^{\dag}_m}S^{m}_i + \sum_{i=1}^{\bar{\kappa}_m^{\dag} - {\kappa}_m^{\dag}}\hat{S}^{-m}_{K-\bar{\kappa}_m^{\dag}+i} 
		\geq0.
	\end{equation}
\end{itemize}
\revh{Inequalities \eqref{eq:app-A-payoff2}  and \eqref{eq:app-A-payoff3} are obtained based on \eqref{app:prp3-score1} and \eqref{app:prp3-score2}.}

\subsection{Proof of Proposition \ref{prop:multi-bitrate}}\label{app:multi-bitrate}

For any bidder $m$, we will show that given any bid $(\bar{\boldsymbol{R}}^m,\bar{\boldsymbol{p}}^m)$, there always exists a bid $(\tilde{\boldsymbol{R}}^m, \tilde{{\boldsymbol{p}}}^m)$ that leads to an expected payoff (for bidder $m$) that is no smaller  than that achieved by  bid $(\bar{\boldsymbol{R}}^m,\bar{\boldsymbol{p}}^m)$. 
The bid $(\tilde{\boldsymbol{R}}^m, \tilde{{\boldsymbol{p}}}^m)$ 
satisfies two properties: i) bitrate $\tilde{\boldsymbol{R}}^m$ is obtained from Proposition \ref{prop:multi-bitrate}, ii) {the marginal score vector of the bid $(\tilde{\boldsymbol{R}}^m, \tilde{{\boldsymbol{p}}}^m)$ is the same as that of the bid $(\bar{\boldsymbol{R}}^m,\bar{\boldsymbol{p}}^m)$, 
	which implies that the two bids will win the same number of segments, denoted by $\kappa^{\dag}_m$}. 

If $\kappa^{\dag}_m = 0$, {bidder $m$'s payoff is zero under both the bids.} If $\kappa^{\dag}_m>0$, bidder $m$'s payoff under $(\tilde{\boldsymbol{R}}^m, \tilde{{\boldsymbol{p}}}^m)$ and  $(\bar{\boldsymbol{R}}^m,\bar{\boldsymbol{p}}^m)$ are as follows:
\begin{equation}\tilde{P}_m = U_{m,t}(\tilde{\boldsymbol{r}}^m_{\kappa^{\dag}_m}) - {(\sum_{i=1}^{\kappa^{\dag}_m} \hat{S}^{-m}_{K-\kappa^{\dag}_m+i} + s(\tilde{\brvector}_{\kappa^{\dag}_m}^\bidder))},
\end{equation} 
\begin{equation}\bar{P}_m= U_{m,t}(\bar{\boldsymbol{r}}^m_{\kappa^{\dag}_m}) - {(\sum_{i=1}^{\kappa^{\dag}_m} \hat{S}^{-m}_{K-\kappa^{\dag}_m+i} + s(\bar{\brvector}_{\kappa^{\dag}_m}^\bidder))}.
\end{equation}As bitrate  $\boldsymbol{R}^m$ is derived through maximizing $U_{m,t}(\boldsymbol{r})-s({\boldsymbol{r}})$ under the segment number constraints, i.e., 
\begin{equation}
	U_{m,t}(\tilde{\boldsymbol{r}}^m_{\kappa^{\dag}_m}) - s(\tilde{\brvector}_{\kappa^{\dag}_m}^\bidder) \geq U_{m,t}(\bar{\boldsymbol{r}}^m_{\kappa^{\dag}_m}) - s(\bar{\brvector}_{\kappa^{\dag}_m}^\bidder),
	\forall \kappa^{\dag}_m
\end{equation} 
Hence, $
\tilde{P}_m \geq \bar{P}_m $. This completes the proof of Proposition \ref{prop:multi-bitrate}.

\subsection{Proof of Lemma \ref{lem:bitrate-p1}}\label{app:identical}


First, we prove  that the non-zero \rev{elements} in the optimal bitrate $\boldsymbol{r}_{\kappa}^m$ should  be in the  ascending order, \rev{i.e., $r_{{\kappa}1}^m\leq r_{{\kappa}2}^m\leq...\leq r_{{\kappa}{\kappa}}^m$}.  
We prove this through contradiction. \rev{Suppose the optimal bitrate $\boldsymbol{r}_{\kappa}^m$ is not in the ascending order. By reordering the elements in $\boldsymbol{r}_{\kappa}^m$ in ascending order, we obtain  a new vector $\bar{\boldsymbol{r}}_{\kappa}$.} 
Note that $\Valueq_{m,t}(\cdot)$ and $\Valueb_{m,t}(\cdot)$ in the bidder's utility $U_{m,t}(\cdot)$ are independent of bitrate order, and the auctioneer $n$'s downloading cost  $C_{n,t}(\cdot)$ is also independent of the bitrate order. The  degradation loss $\Lossdeg_{m,t}(\cdot)$, however,  is minimized when \rev{the non-zero elements in} $\boldsymbol{r}_{\kappa}$ are in the ascending order.  \rev{Hence, we have the following inequality 
 \begin{equation}U_{m,t}(\bar{\boldsymbol{r}}_{\kappa}) - C_{n,t}(\bar{\boldsymbol{r}}_{\kappa})\geq  U_{m,t}(\boldsymbol{r}_{\kappa}^m) - C_{n,t}(\boldsymbol{r}_{\kappa}^m),\end{equation} which contradicts the definition of the optimal bitrate, i.e., $\boldsymbol{r}_{\kappa}^m=\arg\max_{\boldsymbol{r}_{\kappa}} \left(U_{m,t}(\boldsymbol{r}_{\kappa}) - C_{n,t}(\boldsymbol{r}_{\kappa})\right)$. This proves the ascending order of non-zero bitrate elements.  }

Next, we prove that the non-zero  elements in optimal bitrate $\boldsymbol{r}_{\kappa}^m$ \rev{should be} identical, i.e., $r_{{\kappa}i}^m=r_{\kappa}^m~\forall i\leq \kappa$.  
\revh{To simplify the presentation, we define a function} \rev{$g_{mn,t}(r) = \valueq_{m,t}(r) - c_{n,t}(r), r\in\R_m$. 
	Among the finite bitrate set $\R_m$, there exists an optimal bitrate, denoted by $r^{*}\in\R_m$, that maximizes the concave function $g_{mn,t}(r)$}. 
As we have shown that the non-zero bitrates will be in the  ascending order, the only possible bitrate degradation is the degradation from $\prebr_{m,t}$ \rev{(i.e., the bitrate of the last segment from the previous auction)} to $r_{{\kappa}1}^m$ (the bitrate of the first allocated segment in this auction). Hence, we can rewrite the bitrate vector optimization  problem as follows:
\begin{equation}\label{eq:cd-1}\boldsymbol{r}_{\kappa}^m =\arg \max_{\boldsymbol{r}_{\kappa}}\left(\sum_{i=1}^{\kappa} g_{mn,t}(r_{{\kappa}i})  -\lossdeg(\prebr_{m,t},r_{{\kappa}1})\right).\end{equation}

We show $r_{{\kappa}i}^m=r_{\kappa}^m,~\forall i\leq \kappa$ in the following two cases:
\begin{itemize}
	\item If $\prebr_{m,t}<r^{*}$, then we know that  bitrate $r^*$ maximizes $ g_{mn,t}(r)$ and minimizes $\lossdeg(\prebr_{m,t},r)$ among the feasible  set  \rev{$r\in\R_m$}. Hence, the optimal bitrate vector $\boldsymbol{r}_{\kappa}^m$ satisfies 
	$r_{{\kappa}1}^m = r_{{\kappa}2}^m = ... = r_{{\kappa}{\kappa}}^m = r^*.$
	\item If $\prebr_{m,t}\geq r^*$, \revh{then we prove the identical bitrate result as follows. First, we have bitrates $ r^*\leq r_{{\kappa}1}^m$ for the following reasons}. \revi{If $ r_{{\kappa}1}^m$ is the optimal value for $r_{\kappa 1}$, then the partial derivative of the objective function in \eqref{eq:cd-1} with the respect to $r_{{\kappa}1}$ at $r_{{\kappa}1}=r_{{\kappa}1}^m$ should be zero, i.e., $[g_{mn,t}(r_{\kappa 1}^m)]_{r_{\kappa 1}}-[\lossdeg(\prebr_{m,t},r_{{\kappa}1}^m)]_{r_{\kappa 1}}=0$. Hence, $[g_{mn,t}(r_{\kappa 1}^m)]_{r_{\kappa 1}}\leq 0$, which implies that $ r^*\leq r_{{\kappa}1}^m$ holds because $g_{mn,t}(\cdot)$ is concave and is maximized at $r^*$.} 
	Second, \rev{
		the concave function $g_{mn,t}(r)$ is non-increasing in  $r$ for $r\geq r_{{\kappa}1}^m \geq r^*$.}  
	\revh{We have proved   that $r_{{\kappa}1}^m\leq r_{{\kappa}2}^m\leq... \leq r_{{\kappa}{\kappa}}^m$, so \revi{$g_{mn,t}(r_{{\kappa}2}^m),g_{mn,t}(r_{{\kappa}3}^m),...,g_{mn,t}(r_{{\kappa}\kappa}^m)$ are maximized when bitrates $r_{\kappa2},r_{\kappa3},...,r_{\kappa\kappa}$ are minimized under the constraint that $r_{{\kappa}1}^m\leq r_{{\kappa}2}^m\leq... \leq r_{{\kappa}{\kappa}}^m$}, which implies $r_{\kappa 2}^m=... = r_{{\kappa}{\kappa}}^m=r_{{\kappa}1}^m  $.}  \endproof
\end{itemize}

\subsection{Proof of Lemma \ref{lem:bitrate-p2}}\label{app:non-increasing}

According to Lemma \ref{lem:bitrate-p1}, the optimal common non-zero bitrate $r_{{\kappa}}^m$ for each row $\kappa$ is derived as follows: \begin{equation}\label{eq:cd-2}r_{{\kappa}}^m=\arg\max_r\left({\kappa}\cdot g_{mn,t}(r) -\lossdeg(\prebr_{m,t},r)\right).\end{equation}
We prove that $r_{\kappa}$ is non-increasing in row index $\kappa$  by checking two cases:
\begin{itemize}
	\item  If $\prebr_{m,t}<r^*$, then bitrate $r^*$ maximizes ${\kappa}\cdot g_{mn,t}(r) -\lossdeg(\prebr_{m,t},r)$ for any  ${\kappa}$. Hence, $r_1^m = r_2^m=...=r_K^m = r^*$.
	\item If $\prebr_{m,t}\geq r^*$, then bitrate $r^*\leq r_{\kappa}^m=r_{\kappa1}^m$ \rev{(proved in Property 1)}. 
	\rev{The optimal bitrate ${r_{\kappa}^m}$  and ${r_{\kappa+1}^m}$ satisfy: \begin{equation}{r_{\kappa}^m}=\arg\max \left({\kappa}\cdot g_{mn,t}(r) -\lossdeg(\prebr_{m,t},r)\right);\end{equation} \begin{equation}{r_{\kappa+1}^m}=\arg\max \left( {\kappa}\cdot g_{mn,t}(r)-\lossdeg(\prebr_{m,t},r)+g_{mn,t}(r)\right) .\end{equation}
		Function $ g_{mn,t}(\cdot)$ is concave and non-increasing in  $r$ when $r\in\R_m$ and $r\geq r^*$. Suppose $r_{{\kappa}}^m< r_{{\kappa+1}}^m$. Then  based on the definition of $r_{\kappa}^m$, we have 
		\begin{multline} {\kappa}\cdot g_{mn,t}(r_{{\kappa+1}}^m)-\lossdeg(\prebr_{m,t},r_{{\kappa+1}}^m)+g_{mn,t}(r_{{\kappa+1}}^m)\\
			\revh{<}~{\kappa}\cdot g_{mn,t}(r_{{\kappa}}^m)-\lossdeg(\prebr_{m,t},r_{{\kappa}}^m)+g_{mn,t}(r_{{\kappa}}^m),
		\end{multline}  
		which contridicts to the definition of the optimal bitrate $r_{{\kappa+1}}^m$. Hence, $r_{{\kappa}}^m\geq r_{{\kappa+1}}^m$ for all $\kappa = 1,2,...,K-1$.}
\end{itemize}
This completes the proof for Lemma \ref{lem:bitrate-p2}. 

\subsection{Proof of Proposition \ref{cnd:non-negative}}\label{app:condition}
Considering efficient score function in \eqref{eq:efficient}, if a  bidder $m$ bids according to  Proposition \ref{prop:multi-truthful} and \ref{prop:multi-bitrate}, then the bidder's score  for being allocated a total  of $\kappa$ segments is given:
\begin{equation} \label{eq:phi}
	\begin{aligned}
		\score^{m,n,t}_{\kappa} = & \underset{\brvector}{\text{maximize}}
		& & \Utility_{m,t}(\brvector) - \Cost_{n,t}(\boldsymbol{\br}) \\
		& \text{subject to}
		& & \br_{i}>0,  \quad i=1,...,\kappa,\\
		&&&\br_{i}=0, \quad i=\kappa+1,...,K,\\
		& \text{variables}
		& &  {\br_i \in \R_m, \quad i=1,...,\kappa.}
	\end{aligned}
\end{equation}
\revh{Hence, the conditions on the marginal scores in Assumption \ref{ass:multi-score}  can be written as equivalent conditions of $\score^{m,n,t}_{\kappa}$ as follows: }	
\begin{itemize}
	\item Non-negative marginal score:
	\begin{equation}\label{eq:cnd-non-negative}
		\score^{m,n,t}_{\kappa+1} - \score^{m,n,t}_{\kappa} \geq 0, \forall \kappa=1,2,..., K-1
	\end{equation}
	\item Non-increasing marginal score:
	\begin{equation}\label{eq:cnd-non-increasing}
		\score^{m,n,t}_{\kappa} - \score^{m,n,t}_{\kappa-1} \geq \score^{m,n,t}_{\kappa+1}-\score^{m,n,t}_{\kappa}, ~\forall \kappa=1,2,...,K-1
		\end{equation}
\end{itemize}

\revh{Next, we show that inequalities \eqref{eq:non-negative} and \eqref{eq:non-increasing} are the sufficient conditions for satisfying \eqref{eq:cnd-non-negative} and \eqref{eq:cnd-non-increasing}, respectively.}

\emph{\textbf{Non-negative}}: If $\valueq_{m}(r,\theta)\geq c_{n,t}(r)$ for all $r$ and $\theta$, then $g_{mn,t}(r) = \valueq_{m,t}(r) - c_{n,t}(r) \geq 0$ always holds. Based on Lemma \ref{lem:bitrate-p1}, the score $\score^{m,n,t}_{\kappa} $ can be represented as follows: \begin{equation}\score^{m,n,t}_{\kappa} = \max_r \left( {\kappa}\cdot g_{mn,t}(r) -\lossdeg(\prebr_{m,t},r)\right).
\end{equation} \rev{Let $r_{{\kappa}}^m$ and $r_{{\kappa+1}}^m$ be the optimal non-zero common bitrates for rows $\kappa$ and $\kappa+1$, respectively.} Based on Lemma \ref{lem:bitrate-p1} and \ref{lem:bitrate-p2}, \rev{\begin{multline}
	\score^{m,n,t}_{\kappa+1} =  	{(\kappa+1)}\cdot g_{mn,t}(r_{{\kappa+1}}^m)-\lossdeg(\prebr_{m,t},r_{{\kappa+1}}^m)\\
	\geq{(\kappa+1)}\cdot g_{mn,t}(r_{{\kappa}}^m)-\lossdeg(\prebr_{m,t},r_{{\kappa}}^m)\\
	\geq\kappa \cdot g_{mn,t}(r_{{\kappa}}^m)-\lossdeg(\prebr_{m,t},r_{{\kappa}}^m)
	=\score^{m,n,t}_{\kappa},
\end{multline}
which proves that $\score^{m,n,t}_{\kappa+1} - \score^{m,n,t}_{\kappa} \geq 0$.} 


\emph{\textbf{Non-increasing}}:	The non-increasing marginal score requirement is equivalent to the following one:
\begin{multline}\label{eq:conditon3}
	{(\kappa+1)} g_{mn,t}(r_{{\kappa}+1}^m)  - 2{\kappa} g_{mn,t}(r_{{\kappa}}^m) + ({\kappa}-1) g_{mn,t}(r_{{\kappa}-1}^m)  \\ -\lossdeg(\prebr_{m,t},r_{\kappa+1}^m)  -2\lossdeg(\prebr_{m,t},r_{{\kappa}}^m) -\lossdeg(\prebr_{m,t},r_{{\kappa}-1}^m)\leq |\tilde{\Delta}|.
\end{multline}
Based on Lemma \ref{lem:bitrate-p1} and \ref{lem:bitrate-p2}, we derive the conditions for satisfying inequality \eqref{eq:conditon3} in the following two cases:
\begin{itemize}
	\item If $\prebr_{m,t}<r^*$, the bitrate $r_{{\kappa}}^m =r_{{\kappa}-1}^m=r_{{\kappa}-2}^m = r^*$. Hence, inequality \eqref{eq:conditon3} is directly satisfied.
	\item If $\prebr_{m,t}\geq r^*$, then the inequality \eqref{eq:conditon3} is satisfied if:
	\begin{equation}2{(\kappa+1)}(c_{n,t}(r_{{\kappa}}^m) - c_{n,t}(r_{{\kappa}-1}^m)) +\lossdeg(\prebr_{m,t},r_{{\kappa}}^m)\leq |\tilde{\Delta}|,\end{equation} \revh{because $g_{mn,t}(\cdot)$ is concave and non-increasing in $r$ when  $r\geq r^*$, and $\lossdeg(\cdot)$ is non-increasing in $r$.}
	\revh{Since $c_{n,t}(r_{{\kappa}}^m) - c_{n,t}(r_{{\kappa}-1}^m)\leq c_{n,t}( R_m^Z) -c_{n,t}(0) =  c_{n,t}( R_m^Z)$, $\lossdeg(\prebr_{m,t},r_{{\kappa}}^m)\leq \lossdeg_m( R_m^Z,0) $, and $\kappa+1\leq K$,} 
	we have the sufficient condition for non-increasing marginal score:
	\begin{equation}
		2K\cdot c_{n,t}( R_m^Z) + \lossdeg_m( R_m^Z,0) \leq |\tilde{\Delta}|.
	\end{equation}
\end{itemize}
This completes the proof.

\subsection{Implementation Issues}
\revk{Here, we discuss various practical implementation issues of the proposed crowdsourced framework.
	
	
	\revk{First, we consider the implementation overhead of the proposed auction, which mainly consists of the computation overhead and the transmission overhead. 
		The computing  involved in the auction  (i.e., the determination of  allocation, price, and payment) mainly invokes  a sorting algorithm, \revr{which has a low complexity (e.g.,  Quicksort algorithm \cite{quicksort} has an average time complexity of $O(NK\log(NK))$, where $N$ is the number of bidders and $K$ is the number of segments per auction).}
		The transmission overhead  mainly involves  the waiting time needed to collect all bidders' bids, which is set to be $100$ms in the demonstration system. Hence, it is much smaller than the \revr{segment downloading time. For example, if a segment has a bitrate of 3Mpbs and a playback time of 10 seconds, then the downloading time over a link with capacity of 4Mbps will be $3\times10/4 = 7.5$ seconds}.} 
	
	\revk{Next we consider the payment management in practice. One approach is to consider a credit-based system\cite{payment1}, where the users who receive helps pay credit (either virtual or true currency) to the users who help them. 
		Such a credit-based system has been applied in commercial user-provided networks, e.g., FON (https://network.fon.com). To avoid potential cheating behaviors in the payments, one needs to consider either centralized management systems \cite{payment2} or a third-party entity \cite{payment3} to supervise the transactions and service provisions.}
	
	
	Finally, we consider the  security issues (e.g., impersonation attack and denial-of-services) and the privacy issues (e.g.,  user contents protection). In this regard, we can apply the related solutions in the D2D literature that address similar issues, such as the  key management method \cite{secure} for the security issues and the video encryption method \cite{secure2} for the privacy issues.}



\begin{thebibliography}{99}
	
	\scriptsize
	\bibitem{WiOpt16} M. Tang, L. Gao, H. Pang, J. Huang, and L. Sun, ``A multi-dimensional Auction mechanism for mobile crowdsourced video streaming," \emph{Proc. IEEE WiOpt}, 2016.
	\bibitem{INFOCOM17} M. Tang, S. Wang, Lin Gao, J. Huang, and L. Sun, ``MOMD: a multi-object multi-dimensional auction for crowdsourced mobile video streaming," \emph{Proc. IEEE INFOCOM}, 2017.
	
	\bibitem{CiscoReport} Cisco VNI: Global Mobile Data Traffic Forecast Update, 2015-2020.
	
	\bibitem{ABR-multi-Lin2016}  L. Gao,  M. Tang, H. Pang,  J. Huang, and L. Sun, ``Performance bound analysis for crowdsourced mobile video streaming," \emph{Proc. IEEE CISS}, 2016.
	
	
	
	\bibitem{abr} S. Akhshabi, A.C. Begen, and C. Dovrolis, ``An experimental evaluation of rate-adaptation algorithms in adaptive streaming over HTTP,'' \emph{Proc. ACM MMSys}, 2011.
	
	\bibitem{ABR-multi-Keller2012} 	L Keller and	{et al.}, ``MicroCast: cooperative video streaming on smartphones," \emph{Proc. ACM MobiSys}, 2012.
	
	\bibitem{ABR-multi-Li2015} J. Li, R. Bhattacharyya,  S. Paul, S. Shakkottai,  and V. Subramanian, ``Incentivizing sharing in realtime D2D streaming networks: a mean field game perspective," \emph{Proc. IEEE INFOCOM}, 2015.
	
	\bibitem{ABR-multi-Cao14} Y. Cao,  X. Chen,  T. Jiang,  and  J. Zhang, ``SoCast: social ties based cooperative video multicast," \emph{Proc. IEEE INFOCOM}, 2014.
	
	
	\bibitem{ABR-P2P-Xu2013}  C. Xu,  F. Zhao,  J. Guan,  H. Zhang, and G.M. Muntean, ``QoE-driven user-centric VoD services in urban multihomed P2P-based vehicular networks," \emph{IEEE Trans. on Veh. Technol.}, 2013, 62(5): 2273-2289.
	\bibitem{ABR-multi-Klusch14} M. Klusch and  {et al.}, ``MyMedia: mobile semantic peer-to-peer video search and live streaming," \emph{Proc. MOBIQUITOUS}, 2014.
	
	\bibitem{ABR-P2P-Abdallah15}  M. Abdallah,  R. Cavagna, and D. Laval, ``Incentive-based on-demand video streaming using a dual spatially-organized peer-to-peer network," \emph{Proc. IEEE CCNC}, 2015.
	
	
	
	
	\bibitem{P2P-incentive}  {X. Kang and Y. Wu, ``Incentive mechanism design for heterogeneous peer-to-peer networks: A stackelberg game approach," \emph{IEEE Trans. Mobile Comp.}, 2015, 14(5): 1018-1030..}
	
	
	
	
	
	
	\bibitem{ABR-multi-Zhong16} M. Zhong, P. Hu,  J. Indulska,  and M.J. Kumar, ``ColStream: collaborative streaming of on-demand videos for mobile devices," \emph{Proc. IEEE WoWMoM}, 2014.
	
	
	\bibitem{ABR-multi-Seenivasan2014} T. V. Seenivasan and M. Claypool, ``CStream: neighborhood bandwidth aggregation for better video streaming," \emph{Multimedia Tools and Applications}, 2014,  70(1): 379-408.
	
	\bibitem{ABR-multi-Zhang2014} Y. Zhang,	C. Li, and L. Sun, ``DECOMOD: collaborative DASH with download enhancing based on multiple mobile devices cooperation," \emph{Proc. ACM MMSys}, 2014.
	
	
	
	\bibitem{Yeon-KooChe1993} Y.K. Che, ``Design competition through multidimensional auctions,'' \emph{The RAND Journal of Economics}, 1993: 668-680.
	
	\bibitem{Olson1995} D.L. Olson, ``Decision aids for selection problems," \emph{Springer Science and Business Media}, 1995. 
	
	
	
	\bibitem{MartinBichler2000Multi} M. Bichler and H. Werthner, ``A classification framework of multidimensional, multi-unit procurement negotiations," \emph{Proc. DEXA}, 2000.
	
	
	\bibitem{ABR-single-Spiteri16} K. Spiteri, R. Urgaonkar, and R.K. Sitaraman, ``BOLA: near-optimal bitrate adaptation for online videos," \emph{Proc. IEEE INFOCOM}, 2016.
	
	\bibitem{ABR-single-Huang15} T.Y. Huang, R. Johari, N. McKeown, M. Trunnell, and M. Watson, ``A buffer-based approach to rate adaptation: evidence from a large video streaming service," \emph{ACM SIGCOMM Comput. Commun. Rev.}, 2015.
	
	
	\bibitem{ABR-single-Li14} Z. Li and {et al.}, ``Probe and adapt: rate adaptation for http video streaming at scale," \emph{IEEE J. Selected Areas in Comm.}, 2014, 32(4): 719-733.
	
	
	
	\bibitem{ABR-single-Hao14} J. Hao, R. Zimmermann, and H. Ma, ``GTube: geo-predictive video streaming over HTTP in mobile environments," \emph{Proc. ACM MMSys}, 2014.
	
	\bibitem{ABR-single-Zhou16} C. Zhou, C.W. Lin, and Z. Guo, ``mDASH: A markov decision-based rate adaptation approach for dynamic HTTP streaming," \emph{IEEE Trans. on Multimedia}, 2016, 18(4): 738-751.
	
	\bibitem{ABR-single-Yin15} X. Yin, A. Jindal, V. Sekar, and B. Sinopoli, ``A control-theoretic approach for dynamic adaptive video streaming over HTTP," \emph{ACM SIGCOMM Comput. Commun. Rev.}, 2015.
	
	\bibitem{magazine} M. Tang, L. Gao, H. Pang, J. Huang, and L. Sun, ``Optimizations and economics of crowdsourced mobile streaming," \emph{IEEE Commun. Mag.}, 2017, 55(4): 21-27.
	
	
	
	
	\bibitem{JohnAsker2008} J. Asker and E. Cantillon, ``Properties of scoring auctions," \emph{The RAND Journal of Economics}, 2008, 39(1): 69-85.
	
	\bibitem{EstherDavid2005} E. David, R. Azoulay-Schwartz, and S. Kraus, ``Bidding in sealed-bid and English multi-attribute auctions," \emph{Decision Support Systems}, 2006, 42(2): 527-556.
	
	
	\bibitem{HGimpel2006} H. Gimpel and J. Makio, ``Towards multi-attribute double auctions for financial markets," \emph{Electronic Markets}, 2006, 16(2): 130-139.
	
	\bibitem{HungPoChao2002} H.P. Chao and  R. Wilson, ``Multi-dimensional procurement auctions for power reserves: Robust incentive-compatible scoring and settlement rules," \emph{Journal of Regulatory Economics}, 2002, 22(2): 161-183.
	
	
	
	
	
	\bibitem{energy2} Z. He, W. Cheng, and X. Chen, ``Energy minimization of portable video communication devices based on power- rate-distortion optimization.” \emph{IEEE Trans. Circuits Syst. Video Technol.}, 2008, 18(5): 596-608.
    \bibitem{energy3} G. Iosifidis, L. Gao, J. Huang, and L. Tassiulas, ``Efficient and Fair Collaborative Mobile Internet Access," \emph{IEEE/ACM Trans. on Netw.}, 2017, 25(3): 1386-1400.
    \bibitem{degrade2}  M.N. Garcia and \emph{et al.}, ``Quality of experience and HTTP adaptive streaming: a review of subjective studies," \emph{IEEE QoMEX}, 2014: 141-146.
    \bibitem{degrade} M. Masry, S.S. Hemami, and Y. Sermadevi, ``A scalable wavelet-based video distortion metric and applications," \emph{IEEE Trans. on Circuits Syst. Video Technol.}, 2006, 16(2): 260-273.
	\bibitem{concaveutility} W. Zhang, Y. Wen, Z. Chen, and A. Khisti, ``QoE-driven cache management for HTTP adaptive bit rate streaming over wireless networks," \emph{IEEE Trans. on Multimedia}, 2013, 15(6): 1431-1445.
	\bibitem{VCG1} W. Vickrey, ``Counterspeculation, auctions and competitive sealed tenders," \emph{Journal of Finance}, 1961, 16(1): 8-37.

	
	
	

	
	
	
	
	
	
	
	
	
	
	
	
	
	
	
	
	
	
	
	
	
	
	
	
	
	
	
	
	
	
	
	
	
	
	
	
	
	
	
	
	
	
	
	
	
	
	%
	%
	
	
	
	%
	%
	%
	%
	
	
	%
	%
	
	
	
	
	
	
	
	
	
	
	
	
	
	
	
	
	
	
	
	
	
	
	
	
	
	
	
\end{thebibliography}

\begin{thebibliography}{99}	
	\makeatletter
	\addtocounter{\@listctr}{35}
	\makeatother
	\bibitem{quicksort} C.A.R. Hoare, ``Quicksort," \emph{The Computer Journal}, 1962, 5(1): 10-16.
	\bibitem{payment1} G. Iosifidis, L. Gao, J. Huang, and L. Tassiulas, ``Incentive mechanisms for user-provided networks," \emph{IEEE Commun. Mag.}, 2014, 52(9): 20-27.
	\bibitem{payment2} S. Zhong, J. Chen, and Y.R. Yang, ``Sprite: A simple, cheat-proof, credit-based system for mobile ad-hoc networks," \emph{IEEE INFOCOM}, 2003: 1987-1997.
	\bibitem{payment3} S.M. Asghari, and \emph{et al.}, ``LORD: Leader-based framework for Resource Discovery in Mobile Device Clouds." \emph{arXiv preprint:1308.0959}, 2013.
	\bibitem{secure} M. Wang and  Z. Yan, ``A survey on security in D2D communications," \emph{Mobile Networks and Applications}, 2017, 22(2): 195-208.
	\bibitem{secure2} J.K. Liu and \emph{et al.}, ``Secure sharing and searching for real-time video data in mobile cloud," \emph{IEEE Netw.}, 29 (2) (2015) 46–50.
\end{thebibliography}
\end{document}